\documentclass[11pt]{article}
\usepackage[margin=1in]{geometry}
\usepackage{amsmath,amssymb}
\usepackage{graphicx}
\graphicspath{{figs/}}
\usepackage{mathtools}
\usepackage{amsthm}
\usepackage[T1]{fontenc}
\newtheorem{proposition}{Proposition}
\theoremstyle{definition}
\newtheorem*{hypothesis}{Standing Hypothesis (H)}

\usepackage[hidelinks]{hyperref}

\newcommand{\AAA}{\text{\AA}}
\makeatletter
\let\saved@AA\AA
\renewcommand{\AA}{\text{\saved@AA}}
\makeatother
\title{A Three-Degree-of-Freedom Chesnavich Model for Roaming:\\
Derivation, Phase-Space Geometry, NHIM-Anchored Dividing Surfaces, and Roaming Transport}
\author{S. Wiggins\\[2pt]
\small Hetao Institute of Mathematics and Interdisciplinary Sciences, Shenzhen, China\\
\small School of Mathematics, University of Bristol, Bristol BS8 1TW, United Kingdom}
\date{\today}

\begin{document}
\maketitle

\begin{abstract}
Roaming reactions, in which a dissociating fragment moves through a flat region of the potential surface rather than down the minimum-energy path, lie outside the assumptions of conventional transition state theory. The phase-space theory of roaming --- unstable periodic orbits and their invariant manifolds organizing transport --- has been developed for the Chesnavich model of $\mathrm{CH_4^+}\to\mathrm{CH_3^+}+\mathrm{H}$, which is cylindrically symmetric and reduces to two degrees of freedom (2-DoF). We construct and analyze a three-degree-of-freedom (3-DoF) extension. From the rigid-body formulation of Ezra and Wiggins, we break the symmetry with an azimuthal coupling respecting the three-fold ($C_3$) symmetry of the methyl fragment, obtaining a family $H_b$ whose planar reduction at $b=0$ is the 2-DoF model exactly and which is genuinely 3-DoF for $b>0$. This activates the out-of-plane degree of freedom at once: with the physical planar-top inertia ratio $I_z=2I_x$, arbitrarily weak coupling makes the periodic orbit on the roaming shelf transversely unstable, opening an escape route out of the reaction plane. Apart from a narrow elliptic window $0.58\lesssim b\lesssim0.63$, the instability persists across the range studied, changing type through a period-doubling at $b_c\approx0.63$. Because a periodic orbit cannot anchor a dividing surface in three degrees of freedom, we construct the objects that do --- three three-dimensional normally hyperbolic invariant manifolds, one per transition state --- at $b=0$, and prove that every compact interior piece of each persists for sufficiently small $b>0$. At $E=0.5\ \mathrm{kcal\,mol^{-1}}$ the coupling lowers the direct non-reactive fraction of a microcanonical ensemble of incoming trajectories by $0.032$ and raises the two roaming fractions by $0.040$; the effect decreases as the energy increases.
\end{abstract}

\section{Introduction}
In conventional transition state theory (TST) a reaction is controlled by a single recrossing-free dividing surface separating reactants from products \cite{TruhlarGarrettKlippenstein1996,WaalkensSchubertWiggins2008}. Roaming reactions violate this picture: a dissociating fragment, instead of separating directly or following the minimum-energy path, moves through a flat region of the potential energy surface at wide amplitude before re-encountering its partner and reacting, often through an unexpected channel \cite{Townsend2004,BowmanSuits2011,Bowman2014,Mauguiere2017review}. Roaming has been identified in a broad range of systems \cite{Suits2020}.

The phase-space (geometric) theory of reaction dynamics provides a framework for roaming. The objects that organize transport are not critical points of the potential but invariant sets of the dynamics: unstable periodic orbits (POs) in two degrees of freedom, and normally hyperbolic invariant manifolds (NHIMs) in higher dimension, together with their stable and unstable manifolds \cite{WaalkensSchubertWiggins2008,Wiggins2016}. For the two-degree-of-freedom (2-DoF) Chesnavich model of the ion--molecule reaction $\mathrm{CH_4^+}\to\mathrm{CH_3^+}+\mathrm{H}$ \cite{Chesnavich1986}, this program has been carried through in detail: the relevant POs, their bifurcations, and the resulting roaming pathways have been mapped as functions of energy and of the potential parameters \cite{Mauguiere2014,Mauguiere2016,KrajnakWiggins2018,EzraWiggins2019}.

Chesnavich's potential is cylindrically symmetric, so the azimuthal angular momentum is conserved and the model reduces to two degrees of freedom. Roaming in real systems has more degrees of freedom, and the phase-space theory of NHIMs has been extended to three degrees of freedom \cite{KrajnakGarciaGarridoWiggins2021}. No concrete, tractable 3-DoF Chesnavich-type model has been available on which to test and develop that theory. This paper supplies one.

Section~\ref{sec:model} derives the 3-DoF Chesnavich Hamiltonian from the Ezra--Wiggins rigid-body formulation \cite{EzraWiggins2019}, chooses the lowest-order smooth coupling compatible with the $C_3$ symmetry of the methyl fragment, and describes the model's physical and chemical content. Section~\ref{sec:diagnostics} develops the configuration-space picture of roaming, states how roaming is defined --- as a property of trajectories classified against three transition states --- and gives the dimensional reason a periodic orbit can no longer anchor a dividing surface in three degrees of freedom. Section~\ref{sec:results} presents the phase-space geometry as a function of the coupling $b$ and the energy $E$. The central object is the periodic orbit FR1, the unstable orbit lying on the roaming shelf --- the flat region of the potential where roaming occurs --- whose dividing surface classifies trajectories as roaming or direct in the 2-DoF theory; the central finding is that FR1 becomes unstable in the out-of-plane direction as soon as the cylindrical symmetry is broken. Section~\ref{sec:dimensionality} compares this with the formaldehyde--acetaldehyde contrast and shows, through the growth of out-of-plane motion along trajectories, when a two-degree-of-freedom description suffices and when it fails. Section~\ref{sec:nhim} constructs the three three-dimensional NHIMs explicitly in the $b=0$ limit and establishes three things about them: the topology of the dividing surface each anchors, the status of FR1 as the limiting planar member of its manifold, and the persistence of their compact interior pieces for sufficiently small $b>0$. These results are stated and proved as six propositions (Section~\ref{sec:proofs}) resting on a single numerically verified hypothesis, the existence and smoothness of the reduced periodic-orbit family. Section~\ref{sec:transport} carries the trajectory classification into three dimensions and computes the class fractions and gap-time statistics as functions of $b$ and energy. Section~\ref{sec:discussion} discusses the chemical and dynamical significance, and Section~\ref{sec:conclusion} concludes. Appendices document the numerical methods and the technical issues encountered, including a coordinate singularity intrinsic to body-frame spherical coordinates.

\begin{figure}[t]
\centering
\includegraphics[width=\textwidth]{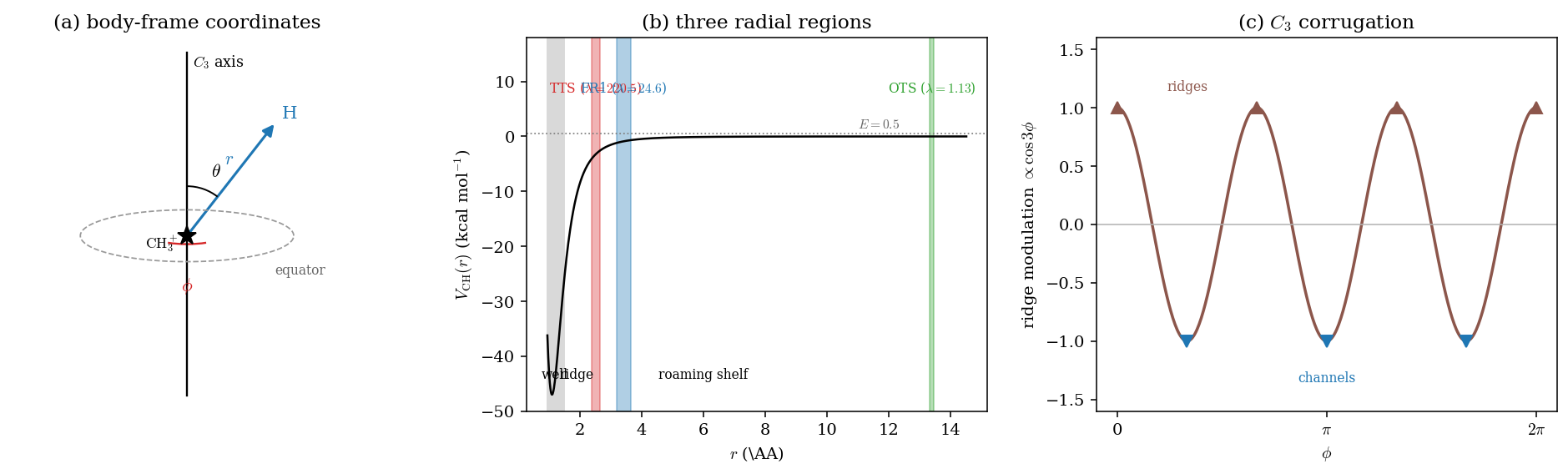}
\caption{Orientation. (a) Body-frame coordinates: the departing H atom is located relative to the $\mathrm{CH_3^+}$ center of mass by the radial separation $r$, the polar (bending) angle $\theta$ measured from the $C_3$ symmetry axis (a co-latitude), and the azimuthal angle $\phi$ about that axis (a longitude). A \emph{meridional} plane is a half-plane containing the axis (a plane of constant $\phi$, like a line of longitude); the \emph{equator} is the plane $\theta=\pi/2$. (b) The three radial regions and, on the energy line $E=0.5\ \mathrm{kcal\,mol^{-1}}$ (dotted), the radial locations of the three planar orbits that generate the entrance, classifier, and reactive-gate NHIMs: the tight transition state TTS-PO (reactive gate, $\lambda=220.5$), arcing over the mouth of the well at $r\in[2.38,2.64]\ \AAA$, just outside the ridge crest; the roaming-shelf orbit FR1 (classifier, $\lambda=24.6$); and the orbiting transition state OTS-PO (entrance, $\lambda=1.13$) at the centrifugal barrier. Here $\lambda$ is the Floquet multiplier of the orbit --- the factor by which a small displacement from it grows over one period, defined in Section~\ref{sec:diagnostics} --- so the reactive gate is strongly unstable and the entrance only weakly so. Each is the limiting planar (separatrix) member of a two-component NHIM family parametrized by the angular momentum (Section~\ref{sec:nhim}). (c) The $C_3$ azimuthal \emph{corrugation} introduced by the coupling: the bending ridge is rippled three-fold in azimuth, $\propto b\cos3\phi$, with three equivalent ridges and three intervening channels --- the imprint of the three hydrogens of the methyl group.}
\label{fig:coords}
\end{figure}

\section{The three-degree-of-freedom model}\label{sec:model}
\subsection{Coordinates and reduced Hamiltonian}
Following Ezra and Wiggins \cite{EzraWiggins2019}, the Chesnavich system is modeled as a rigid symmetric top --- the $\mathrm{CH_3^+}$ fragment, with principal moments of inertia $I_x=I_y\neq I_z$ --- coupled to a structureless particle representing the departing H atom. The H atom's position relative to the fragment center of mass is given in molecule-fixed spherical coordinates $(r,\theta,\phi)$, where $r$ is the radial separation, $\theta$ the polar (bending) angle measured from the $C_3$ symmetry axis, and $\phi$ the azimuthal angle about that axis (Fig.~\ref{fig:coords}a). The fragment orientation is described by Euler angles. The full molecular kinetic energy is the sum of the rigid-rotor energy of the top and the kinetic energy of the H atom.

We work at zero total angular momentum, $J=0$, the natural setting in which to isolate the internal roaming dynamics. In this case the Ezra--Wiggins reduction expresses the Hamiltonian in terms of $(r,\theta,\phi)$ and their conjugate momenta $(p_r,p_\theta,p_\phi)$ as
\begin{equation}\label{eq:H}
H = \frac{1}{2I_x}\left(p_\theta^2 + p_\phi^2\cot^2\theta\right) + \frac{p_\phi^2}{2I_z} + \frac{1}{2m}\left(p_r^2 + \frac{p_\theta^2}{r^2} + \frac{p_\phi^2}{r^2\sin^2\theta}\right) + V(r,\theta,\phi),
\end{equation}
where $m$ is the $\mathrm{H}$--$\mathrm{CH_3^+}$ reduced mass and the rotational kinetic energy uses $L_x^2+L_y^2=p_\theta^2+p_\phi^2\cot^2\theta$, $L_z=p_\phi$. This $\cot^2\theta$ form is the one given by the Ezra--Wiggins rigid-body reduction \cite{EzraWiggins2019}, which we follow throughout.

\subsection{Potential and the symmetry-breaking coupling}
Chesnavich's potential \cite{Chesnavich1986} consists of a radial term describing the C--H stretch and a hindered-rotor coupling describing the bend:
\begin{align}
V_{\mathrm{CH}}(r) &= \frac{D_e}{c_1-6}\left[2(3-c_2)\,e^{c_1(1-x)} - (4c_2-c_1 c_2 + c_1)\,x^{-6} - (c_1-6)c_2\,x^{-4}\right], \quad x=\frac{r}{r_e}, \label{eq:VCH}\\
V_0(r) &= V_e\,e^{-\alpha(r-r_e)^2}. \label{eq:V0}
\end{align}
In the original 2-DoF model the angular potential is $\tfrac12 V_0(r)(1-\cos2\theta)$, which depends only on the bending angle $\theta$ and not on the azimuth $\phi$; the model is therefore cylindrically symmetric, $p_\phi$ is conserved, and the dynamics reduces to two degrees of freedom. The standard parameters, fitted to $\mathrm{CH_4^+}$, are $D_e=47$, $r_e=1.1$, $c_1=7.37$, $c_2=1.61$, $V_e=55$ (energies in $\mathrm{kcal\,mol^{-1}}$, lengths in \AAA), $\alpha=1\,\AAA^{-2}$, $I_x=2.373409\ \mathrm{u\,\AAA^2}$, and the reduced mass $m=0.9445\ \mathrm{u}$, from $m_{\mathrm{H}}=1.007825\ \mathrm{u}$ and $m_{\mathrm{C}}=12.0\ \mathrm{u}$ \cite{Chesnavich1986,Mauguiere2014,EzraWiggins2019}. In these units (u, \AAA, $\mathrm{kcal\,mol^{-1}}$) the induced unit of time is $\AAA\sqrt{\mathrm{u}/(\mathrm{kcal\,mol^{-1}})}\approx48.9\ \mathrm{fs}$; all times below are quoted in this unit.

To obtain a genuine three-degree-of-freedom system we break the cylindrical symmetry by introducing azimuthal dependence. The admissible azimuthal couplings are constrained by the symmetry of the methyl fragment. $\mathrm{CH_3^+}$ has $D_{3h}$ symmetry; the three hydrogen atoms lie at $120^\circ$ intervals, so any potential the departing H atom experiences as a function of azimuth must be invariant under the three-fold rotation $\phi\to\phi+2\pi/3$. The lowest-order azimuthal harmonic with this property is $\cos3\phi$. A coupling built from $\cos2\phi$, by contrast, would impose a spurious two-fold symmetry inconsistent with the molecular point group. We therefore take
\begin{equation}\label{eq:Vcoup}
V_{\mathrm{coup}}(r,\theta,\phi) = V_0(r)\left[\sin^2\theta + b\,\sin^3\theta\,\cos3\phi\right],
\end{equation}
so that the total potential is $V=V_{\mathrm{CH}}+V_{\mathrm{coup}}$. The first term equals $\tfrac12 V_0(1-\cos2\theta)$ and reproduces the Chesnavich bend; we write it as $\sin^2\theta$ because its Cartesian form, $V_0\,(X^2+Y^2)/r^2$, is a ratio of polynomials whose denominator does not vanish away from the origin, so that smoothness across the symmetry axis is manifest for both angular terms. The second term is the symmetry-breaking term, with $b\in[0,1]$ a dimensionless coupling strength. The factor $\sin^3\theta\cos3\phi$ is the angular factor of the real solid harmonic $X^3-3XY^2$ of degree three and order three: in Cartesian molecule-fixed coordinates $(X,Y,Z)=(r\sin\theta\cos\phi,\,r\sin\theta\sin\phi,\,r\cos\theta)$ it is $(X^3-3XY^2)/r^3$, which is smooth everywhere except at the origin. This smoothness is essential: a one-fold coupling of the form $V_0(r)\sin^2\theta\cos\phi$ --- Cartesian form $V_0\,X\sqrt{X^2+Y^2}/r^2$ --- is only once continuously differentiable on the symmetry axis, where $\phi$ is undefined; its second derivatives, which enter the variational equations, are discontinuous there, and since the orbits constructed below cross that axis their linear stability would be ill-defined (Appendix~\ref{app:singularity}).

\subsection{Conserved quantities and symmetries}
The Hamiltonian and coupling introduced above---their reduction, at $b=0$ on the invariant set $p_\phi=0$, to the 2-DoF Chesnavich model, the smoothness of the coupling across the symmetry axis, and the discrete symmetries recorded next---were confirmed in a computer-algebra system (Appendix~\ref{app:symbolic}). Three properties of $H_b$ organize the analysis that follows: the exact reduction, on the invariant set $p_\phi=0$, to the 2-DoF Chesnavich model at $b=0$, which anchors every subsequent result to a known limit; the loss of the azimuthal integral for $b>0$, which makes the dynamics genuinely three-degree-of-freedom; and a set of discrete symmetries, which single out the invariant reaction plane in which the generating orbits lie and the transverse subspace whose stability governs out-of-plane escape.

\noindent\textbf{(i) Reduction at $b=0$.} For $b=0$ the Hamiltonian \eqref{eq:H} satisfies $\partial H/\partial\phi=0$, so $\phi$ is cyclic and $p_\phi$ is conserved. Restricting to the invariant set $p_\phi=0$, \eqref{eq:H} reduces identically to the 2-DoF Chesnavich Hamiltonian, with $I\to I_x$ and $\mu\to m$. The family $H_b$ is therefore an exact homotopy whose $b=0$ member carries Chesnavich's 2-DoF model as its planar reduction.

\noindent\textbf{(ii) Symmetry breaking at $b>0$.} $\partial H/\partial\phi=-3bV_0(r)\sin^3\theta\sin3\phi\neq0$, so $p_\phi$ is no longer conserved and the dynamics is genuinely three-degree-of-freedom. Energy is the only known global integral.

\noindent\textbf{(iii) Discrete symmetries.} The Hamiltonian is invariant under the canonical reflections $(\theta,p_\theta)\to(\pi-\theta,-p_\theta)$ and $(\phi,p_\phi)\to(-\phi,-p_\phi)$ (since $\cos3\phi$ is even), and under the $C_3$ rotation $\phi\to\phi+2\pi/3$. The reflection $(\phi,p_\phi)\to(-\phi,-p_\phi)$ fixes the set $\{Y=0,\ p_Y=0\}$ --- the reaction plane --- which is therefore invariant under the flow; the planar member of the FR1 family lies in the reaction plane, and out-of-plane perturbations $(Y,p_Y)$ form a well-defined transverse subspace whose stability we analyze below.

\subsection{Physical inertia of a planar top}\label{sec:inertia}
The moment-of-inertia ratio $I_z/I_x$ is not an incidental parameter of the model: it fixes the relative timescales of rotation about the symmetry axis and tumbling perpendicular to it, and it enters the transverse dynamics directly through the $p_\phi^2/(2I_z)$ term in \eqref{eq:H}. In a model constructed only for mathematical convenience one might be tempted to leave this ratio free, or to set $I_z=I_x$ so that the top is spherical and the rotational kinetic energy is isotropic. Either choice would be physically wrong for $\mathrm{CH_3^+}$, and, as Section~\ref{sec:results} shows, either would change the transverse stability of the orbit qualitatively; it is therefore essential that the value used be the one dictated by the geometry of the fragment rather than a tunable knob.

For a planar (laminar) rigid body the perpendicular-axis theorem states that the moment of inertia about the axis normal to the plane equals the sum of the moments about any two orthogonal in-plane axes through the same point \cite{GoldsteinPooleSafko2002}. $\mathrm{CH_3^+}$ is planar, and its three-fold symmetry makes the two in-plane principal moments equal, $I_x=I_y$, so the theorem gives $I_z=I_x+I_y=2I_x$ exactly: the ion is a planar oblate top with $I_z/I_x=2$. With $I_x=2.373409\ \mathrm{u\,\AAA^2}$ \cite{Chesnavich1986,EzraWiggins2019} this gives $I_z=4.746818\ \mathrm{u\,\AAA^2}$. We use this physical value throughout. As shown in Section~\ref{sec:results}, it is, together with the $C_3$ coupling, what produces the central bifurcation of the paper: the transverse direction is hyperbolic at weak coupling precisely because of the oblate ratio $I_z=2I_x$, and the unphysical spherical choice $I_z=I_x$ removes the effect.

\subsection{Chemical significance}\label{sec:chemsig}
The model represents the post-transition-state dynamics of $\mathrm{CH_4^+}\to\mathrm{CH_3^+}+\mathrm{H}$ once the departing H atom has reached the flat, long-range region of the potential where roaming occurs. Fragmentation of $\mathrm{CH_4^+}$ to $\mathrm{CH_3^+}+\mathrm{H}$ is an established decay channel of the methane cation --- seen, for instance, when a $\mathrm{CH_4^+}$ intermediate formed by charge transfer ejects a hydrogen atom \cite{MeyerWester2017} --- and the wide-amplitude, non-minimum-energy-path character of such dissociations is the hallmark of roaming \cite{Bowman2014,Mauguiere2017review}. The long-range ion--neutral interaction that governs this regime, comprising the ion-induced-dipole attraction and the higher anisotropic terms of the charge--molecule potential, is precisely the interaction whose central-field and orbiting-transition-state treatment underlies the phase-space theory of ion--molecule capture and reaction \cite{ChesnavichBowers1982,MeyerWester2017}.

The radial coordinate $r$ is the dissociating C--H distance; the bending angle $\theta$ measures the departure of the H atom from the $\mathrm{CH_3^+}$ symmetry axis; and the azimuthal angle $\phi$ measures rotation of the H atom about that axis relative to the methyl frame. $\mathrm{CH_3^+}$ is planar (point group $D_{3h}$), so the charge distribution it presents to the departing H atom carries an intrinsic three-fold azimuthal structure. In the cylindrically symmetric limit the azimuthal motion is decoupled by an exact integral: $p_\phi$ is conserved and each trajectory is confined to its $p_\phi$-leaf. The symmetry-breaking parameter $b$ measures the strength with which this three-fold structure couples to the azimuthal motion of the roaming H atom; physically it interpolates between an idealized axially averaged interaction ($b=0$) and one that resolves the discrete three-fold structure of the methyl group ($b>0$). That molecular orientation and geometry --- and not merely the stationary points of the potential --- can control the outcome of ion--molecule reactions is by now well documented \cite{MeyerWester2017}, which is what makes resolving this structure physically consequential. The question the model is designed to answer is how this three-fold azimuthal structure, present in the real molecule but suppressed in the 2-DoF model, reshapes the phase-space structures that control roaming.

\section{The geometry of roaming in three degrees of freedom}\label{sec:diagnostics}

Roaming is a reaction mechanism: a fragment leaves the well and, rather
than dissociating directly, moves through a flat part of the potential energy surface
before it reacts or separates. Its description in phase space rests on three parts of
the surface with distinct chemical meaning --- the deep well, the roaming shelf, and
the dissociation asymptote --- and on the transition states whose dividing surfaces
gate transport between them. This section fixes the three parts of the surface and
their chemical content; the transition states of Section~\ref{sec:po} are then
interpreted as the gates between them. Two features distinguish the 3-DoF surface
from the 2-DoF model: $V(r,\theta,\phi)$ is defined on a three-dimensional
configuration space, so no single planar contour plot represents it, and the
structures that gate transport are no longer periodic orbits
(Section~\ref{sec:po}).

\subsection{The configuration-space landscape and the meaning of roaming}\label{sec:landscape}

In the 2-DoF Chesnavich model the three parts of the surface are identified from the
contours of $V(r,\theta)$ \cite{Wiggins2026,Mauguiere2014}. At small $r$ a deep
well holds the H atom bound to the fragment. At intermediate-to-large $r$ the
Chesnavich lock $V_0(r)$ has decayed and the angular potential is shallow, so the H
atom moves through wide ranges of $\theta$ at nearly constant radial energy. In the
radial coordinate this part of the surface is a ledge: $V_{\mathrm{CH}}(r)$ rises
steeply out of the well and then levels off, sloping gently outward onto the
dissociation asymptote $V\to0$, so that $r$ varies by several~\AAA\ at nearly constant
potential energy (Fig.~\ref{fig:landscape}b, on-axis cut). We call this feature of
the potential the \emph{roaming shelf}, the term making explicit the radial flatness
that is the geometric condition for roaming: a trajectory can persist at large $r$
without descending a radial slope. Roaming is motion that, having left the well,
remains on the shelf rather than crossing inward to the well or outward to
dissociation; its operational definition --- a classification of trajectories by
their crossings of three dividing surfaces --- is given in Section~\ref{sec:po}.

The 3-DoF model has the same three parts. Because $V(r,\theta,\phi)$ is defined on a
three-dimensional configuration space, we represent it by two-dimensional cuts
(Fig.~\ref{fig:landscape}). At $b=0$ the potential is cylindrically symmetric, and a
single meridional cut --- the half-plane of constant $\phi$ containing the symmetry
axis, coordinates $(\rho,Z)$ with $\rho=\sqrt{X^2+Y^2}$ --- carries the whole
surface; this is the 2-DoF landscape of \cite{Mauguiere2014,Wiggins2026} (Fig.~\ref{fig:landscape}a). The well
sits on the axis at $r=r_e$, the collinear $\mathrm{CH_3^+}\cdots$H geometry at which
the bending term $V_0(r)\sin^2\theta$ vanishes. Away from the axis this term raises
the potential into a ridge at the equator $\theta=\pi/2$ --- the bending ridge ---
which hinders changes in the polar angle where $V_0(r)$ is appreciable. The radial
cuts of Fig.~\ref{fig:landscape}b make the energetics explicit: on the axis
($\theta=0$) the potential is $V_{\mathrm{CH}}(r)$ alone, the well with no bend; at
the equator ($\theta=\pi/2$) it carries the bending ridge; both flatten onto the
roaming shelf beyond $r\approx3$~\AAA, and FR1 lies at $r\approx3.2$--$3.65$~\AAA\
(Section~\ref{sec:po}).

The 3-DoF content absent from the 2-DoF model is the dependence on the azimuth
$\phi$. For $b>0$ the coupling $b\,V_0(r)\sin^3\theta\cos3\phi$ modulates the bending
ridge three-fold in $\phi$: at the equator it adds $b\,V_0(r)\cos3\phi$ to the ridge
height, lowering the ridge in three channels near $\phi=\pi/3,\ \pi,\ 5\pi/3$ (where
$\cos3\phi=-1$) and raising it in the three directions between. The bending ridge ---
the barrier between the well and the roaming shelf --- is therefore corrugated
three-fold around the axis, following the three hydrogens of the methyl fragment;
$b$ is the depth of the corrugation. Because the modulation is carried by $V_0(r)$,
it is largest at the inner edge of the shelf, where $V_0$ is of order the roaming
energy, and vanishes as $V_0\to0$ at large $r$. Figure~\ref{fig:landscape}c shows
the modulation on the sphere of directions $(\theta,\phi)$ at $r=3.1$~\AAA\ for
$b=0.8$.

\begin{figure}[t]
\centering
\includegraphics[width=\textwidth]{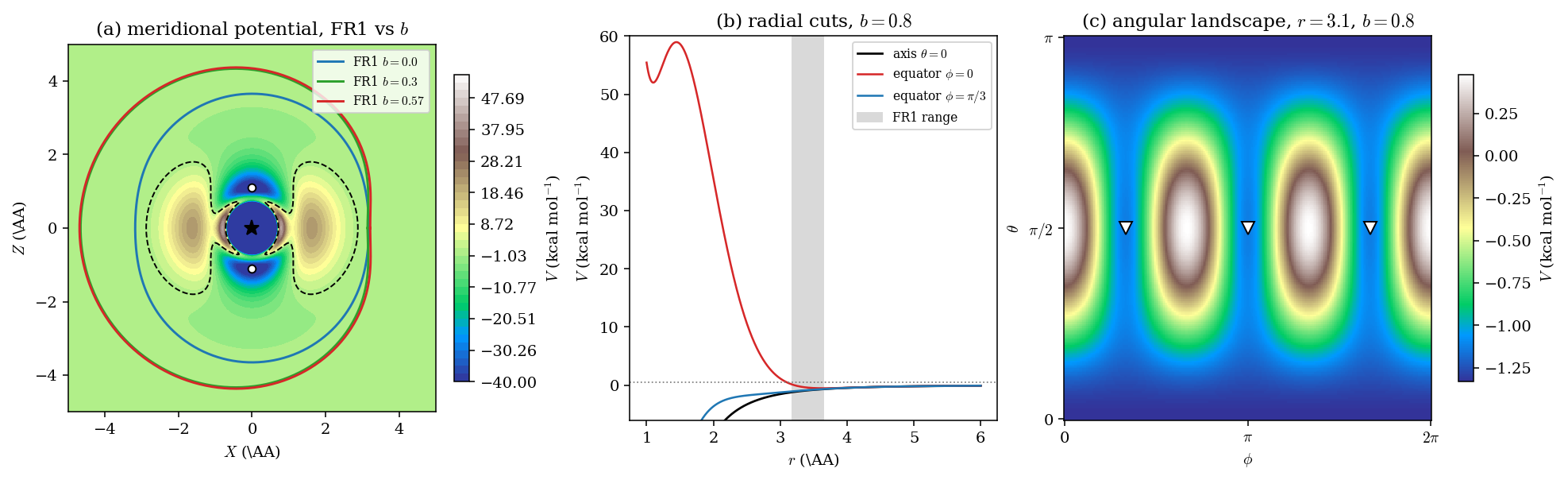}
\caption{The configuration-space landscape of the 3-DoF Chesnavich model. (a) Meridional cut of the potential at $b=0$ (cylindrically symmetric, identical to the 2-DoF landscape), in the half-plane containing the symmetry axis; the wells sit on the axis at $r=r_e$ (markers), the equatorial bending ridge is the raised region off-axis, and the flat roaming shelf lies beyond, with the dashed curve the $E=0.5\ \mathrm{kcal\,mol^{-1}}$ equipotential. The closed loops are the computed FR1 orbit (the action-$13.755$ branch) at $b=0,0.3,0.57$ (Section~\ref{sec:results}), whose invariant plane is this meridional plane; the loop deforms, breaking its up--down symmetry, as the coupling increases. The star marks the $\mathrm{CH_3^+}$ center of mass. (b) Radial cuts: on the symmetry axis ($\theta=0$, the well, with no bend) and at the equator ($\theta=\pi/2$) for two azimuths, $\phi=0$ and $\phi=\pi/3$, whose separation is the three-fold modulation; the shaded band is the radial range of FR1 and the dotted line the energy $E=0.5\ \mathrm{kcal\,mol^{-1}}$. (c) The angular landscape on the sphere of directions $(\theta,\phi)$ at a representative roaming radius $r=3.1\ \AAA$ for $b=0.8$: the equatorial bending ridge carries a three-fold azimuthal corrugation, with the favored channels (triangles) near $\phi=\tfrac{\pi}{3},\pi,\tfrac{5\pi}{3}$.}
\label{fig:landscape}
\end{figure}

With the landscape established, the meaning of roaming in the model is fixed: motion at energies in the roaming window that stays on the shelf, wandering in $(\theta,\phi)$ at large $r$, rather than crossing inward to the well or outward to dissociation. The phase-space structures that separate these outcomes, and the effect of the corrugation on them, are the subject of the rest of the paper.

\subsection{The phase-space structures that organize roaming}\label{sec:po}
Roaming is a property of trajectories: a trajectory roams or does not according to
how it crosses three dividing surfaces, defined below. Each dividing surface is
anchored on an invariant manifold of the flow. In the 2-DoF model these manifolds
are the three periodic orbits of Maugui\`ere, Collins, Ezra, Farantos, and Wiggins \cite{Mauguiere2014}; in the
3-DoF model they are three-dimensional manifolds, constructed in
Section~\ref{sec:nhim}. This section reviews the 2-DoF structures, states the
trajectory classification, and gives the dimension count that determines what the
corresponding 3-DoF objects must be.

Dimension counts are taken in the energy surface $\Sigma_E$ unless stated otherwise;
all dividing surfaces in this paper are isoenergetic, constructed within a single
$\Sigma_E$, since the conserved energy confines every trajectory to its own energy
surface and energy enters the construction only as a parameter. For $n$ degrees of
freedom the phase space is $2n$-dimensional and the energy surface
$\Sigma_E=\{H=E\}$ is $(2n-1)$-dimensional. A dividing surface must locally separate
$\Sigma_E$ into two components, so its dimension is $2n-2$: codimension one in
$\Sigma_E$. The dividing surfaces of phase-space transition state theory are
anchored on a normally hyperbolic invariant manifold (NHIM) of dimension $2n-3$,
codimension two in $\Sigma_E$, whose stable and unstable manifolds have dimension
$2n-2$ and channel trajectories through the bottleneck \cite{WaalkensSchubertWiggins2008,Wiggins2016}. These counts follow from the separation requirement alone and hold for
every $n$; the dimensions of the objects satisfying them grow with $n$
(Table~\ref{tab:dims}).

\begin{table}[h]
\centering
\begin{tabular}{lcc}
\hline
 & 2 DoF (energy surface $3$-D) & 3 DoF (energy surface $5$-D) \\
\hline
dividing surface \quad(codim 1) & $2$-D & $4$-D \\
anchoring NHIM \quad(codim 2) & $1$-D: \emph{a periodic orbit} & $3$-D: not an orbit \\
its $W^s,\,W^u$ \quad(codim 1) & $2$-D & $4$-D \\
\hline
a periodic orbit & \emph{is} the NHIM & $1$-D (codim 4) \\
\quad its $W^s,\,W^u$ & (the codim-1 separatrices) & $3$-D (codim 2): cannot divide \\
\hline
\end{tabular}
\caption{The codimension of the dividing surface and of its anchoring NHIM is the same in two and three degrees of freedom; the dimension is not. In a three-dimensional energy surface the codimension-two NHIM is one-dimensional, a periodic orbit. In a five-dimensional energy surface it is three-dimensional; a periodic orbit there has codimension four, its stable and unstable manifolds codimension two, and neither separates the energy surface.}
\label{tab:dims}
\end{table}

\paragraph{Two degrees of freedom.}
For $n=2$ the energy surface is three-dimensional and a codimension-two NHIM is
one-dimensional: a periodic orbit. This is why periodic orbits organize the 2-DoF
theory. The model has three, named by their roles \cite{Mauguiere2014,Wiggins2026}.

The linear stability of a periodic orbit is measured by its Floquet multipliers~\cite{MeyerHallOffin2009}, the
eigenvalues of the monodromy matrix --- the linearization of the flow over one
period (Appendix~\ref{app:po}). The orbit is hyperbolic when a reciprocal pair
$\{\lambda,\lambda^{-1}\}$ lies off the unit circle; $\lambda>1$ is the factor by
which a transverse perturbation grows over one period. We quote $\lambda$ for each
orbit because these rates recur in Section~\ref{sec:nhim}: the normal expansion rate
of each three-dimensional manifold is the $\lambda$ of its generating orbits, and
the persistence of the manifolds for $b>0$ (Proposition~\ref{prop:persist}) requires these rates
to dominate the tangential growth.

\emph{OTS-PO, the entrance.} A relative equilibrium at the centrifugal barrier
($r_0\approx13.4$~\AAA; Fig.~\ref{fig:threeorbits}b), with $\lambda_{\mathrm{OTS}}=
1.13$ --- weakly hyperbolic because the barrier is shallow. Its dividing surface
gates the entrance: a trajectory crossing outward dissociates to
$\mathrm{CH_3^+}+\mathrm{H}$.

\emph{FR1, the classifier.} The free-rotor orbit on the roaming shelf
($r\approx3.18$--$3.65$~\AAA; Fig.~\ref{fig:threeorbits}a), associated with a 2:1
stretch--bend resonance born in a center--saddle bifurcation \cite{Mauguiere2014};
$\lambda_{\mathrm{FR1}}=24.57$. Its dividing surface does
not separate reactants from products; it is the surface whose crossings are counted.

\emph{TTS-PO, the reactive gate.} A libration through the symmetry axis, arcing
across the mouth of the well just outside the ridge crest
($r\approx2.38$--$2.64$~\AAA; Fig.~\ref{fig:threeorbits}a);
$\lambda_{\mathrm{TTS}}=220.5$, the most unstable of the three. Its dividing surface
gates capture into the well.

Trajectories are classified against these three surfaces
\cite{Mauguiere2014,Mauguiere2016}. Initial conditions are sampled on the
inward-crossing half of the OTS surface ($p_r<0$) and integrated until they cross
the TTS surface (reactive) or recross the OTS surface outward (non-reactive);
crossings of the FR1 surface are counted along the way. A trajectory roams when it
crosses the FR1 surface at least three times (reactive; the count is odd) or at
least four times (non-reactive; even), and is direct otherwise. The classification
refers only to the three surfaces; the orbits enter as their anchors. All
computations in this paper are at $E=0.5$~kcal\,mol$^{-1}$ above the dissociation
threshold, the regime of the 2-DoF studies \cite{Mauguiere2014,Wiggins2026}; the one exception is
Section~\ref{sec:fractions}, where the energy dependence of the corrugation effect is computed,
the corrugation depth $b\,V_0$ being comparable to the available energy only near
threshold.

\paragraph{Three degrees of freedom.}
For $n=3$ the energy surface is five-dimensional and a codimension-two NHIM is
three-dimensional: not a periodic orbit. A periodic orbit is one-dimensional and
therefore has codimension four in $\Sigma_E$; its stable and unstable manifolds have
dimension at most three, codimension two, and cannot separate $\Sigma_E$
(Table~\ref{tab:dims}). The three orbits above therefore cannot anchor dividing
surfaces in the 3-DoF model.

Section~\ref{sec:nhim} constructs the three-dimensional NHIMs from the $b=0$
symmetry. At $b=0$ the azimuth $\phi$ is cyclic and $p_\phi$ is conserved; at each
fixed $p_\phi$ the reduced system is a two-degree-of-freedom system carrying reduced
counterparts of the three orbits. The union of these reduced orbits over
$p_\phi\in(-p_*,p_*)$, each carrying its azimuthal circle, is a three-dimensional
invariant manifold, and it --- not the periodic orbit --- anchors the
four-dimensional dividing surface. The planar orbit is recovered in the limit
$p_\phi\to0$; the precise sense in which it is a distinguished member of its family
is treated in Section~\ref{sec:separatrix}. The trajectory classification is
unchanged: the same three surfaces and the same crossing counts, with initial
conditions on the OTS surface now sampled over $p_\phi$ as well as over the incoming
directions.

What remains computable from a periodic orbit directly is its linear stability
transverse to the plane of motion: whether the coupling $b$ makes the out-of-plane
direction unstable. That computation occupies Sections~4--5. FR1 is located on the
reaction plane $\{Y=0,\,p_Y=0\}$ (Section~\ref{sec:model}(iii)) by symmetric shooting and continued in $b$ and $E$
(Appendix~\ref{app:po}); at $b=0$ it reduces to the 2-DoF orbit, $X_0=3.179$~\AAA,
period $T=5.929$, action $13.755$.

\begin{figure}[tbp]
\centering
\includegraphics[width=0.95\textwidth]{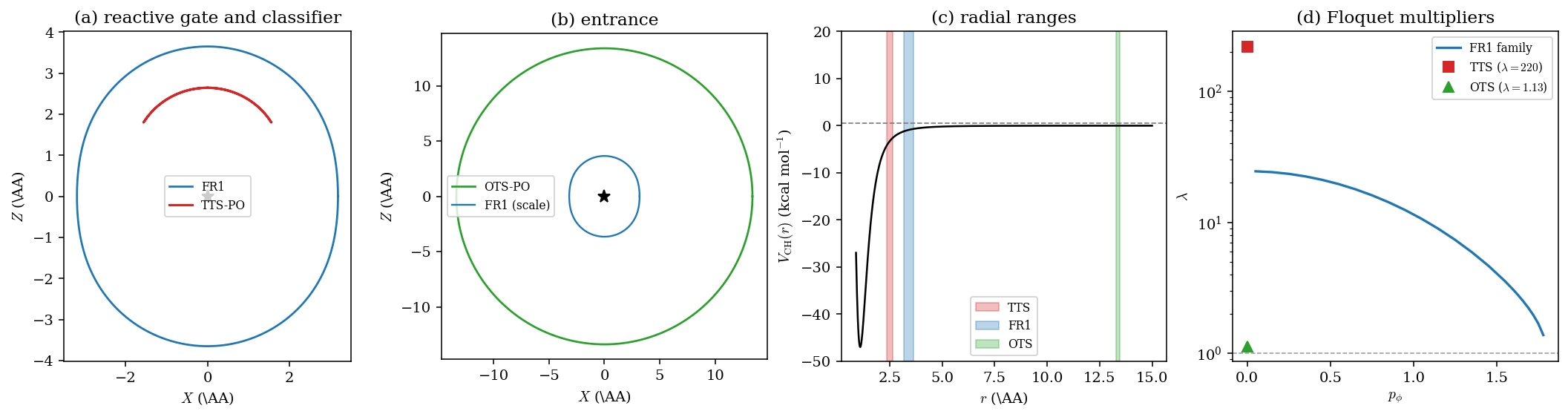}
\caption{The three periodic orbits of the 2-DoF model at $b=0$, $p_\phi=0$, $E=0.5$~kcal\,mol$^{-1}$. In three degrees of freedom each is the planar limit ($p_\phi\to0$) of a three-dimensional NHIM (Section~\ref{sec:nhim}) and does not itself anchor a dividing surface. (a) TTS-PO (red), librating through the symmetry axis and arcing across the mouth of the well just outside the ridge crest, $r\approx2.38$--$2.64$~\AAA, and FR1 (blue), $r\approx3.18$--$3.65$~\AAA; the marker at the origin is the $\mathrm{CH_3^+}$ center of mass. (b) OTS-PO (green) at the centrifugal barrier, $r_0\approx13.4$~\AAA; FR1 shown for scale. (c) On-axis potential $V_{\mathrm{CH}}(r)$ with the radial range of each orbit shaded; $E=0.5$~kcal\,mol$^{-1}$ dashed. (d) Floquet multiplier $\lambda$ (defined in Section~\ref{sec:po}): the FR1 family over $p_\phi$, decreasing from $\lambda=24.57$ at $p_\phi=0$ (Section~\ref{sec:nhim}), and, on the logarithmic scale, the values for TTS-PO ($\lambda=220.5$) and OTS-PO ($\lambda=1.13$) at $p_\phi=0$.}
\label{fig:threeorbits}
\end{figure}

\subsection{Linear stability and the transverse direction}\label{sec:stability}

Whether the third degree of freedom is dynamically active is decided by the linear stability of the orbit FR1 in the out-of-plane direction --- the direction opened up by that degree of freedom and absent from the 2-DoF problem. The monodromy matrix of the orbit, obtained by integrating the variational equations over one period, block-decomposes because the orbit lies in the reaction plane: an in-plane block (the coordinates $r,\theta$ and their conjugates, inherited from the 2-DoF dynamics) and a transverse block (the out-of-plane pair $Y,p_Y$). The transverse block is the new content. The trace of the transverse block decides whether a trajectory nudged off the reaction plane is restored to it --- elliptic motion, the perturbation bounded and quasiperiodic, the out-of-plane direction inert --- or departs from it: hyperbolic motion, the perturbation growing exponentially, the out-of-plane direction an escape direction. The value $-2$ marks the transverse period-doubling between them.

\section{Results: phase-space geometry versus coupling and energy}\label{sec:results}

\subsection{The FR1 orbit and its action}

At $E=0.5$~kcal\,mol$^{-1}$ and $b=0$ the planar orbit coincides with the 2-DoF
orbit by the reduction of Section~\ref{sec:model}: the computed abbreviated action
$W=\oint p\cdot dq$, evaluated along the orbit as $\oint 2T\,dt$, is $13.755$,
against the 2-DoF value $13.755$, with matching period and turning points. The
action also discriminates between the two branches admitted by the symmetric
shooting: the correct orbit and a spurious wide branch of smaller action
($\approx12.6$) onto which over-large continuation steps can jump
(Appendix~\ref{app:po}). Continued in $b$ within the reaction plane, the orbit deforms
smoothly: by $b\approx0.57$ the inner turning point has moved from $3.18$ to
$2.67$~\AAA, the outer staying near $3.65$~\AAA, and the action rises from $13.8$ to
$14.4$, about four percent. The reaction plane is invariant for every $b$
(Section~\ref{sec:model}(iii)), and within it the dynamics is a
two-degree-of-freedom system in which FR1 anchors a periodic-orbit dividing surface
whose flux is this action. The four-percent change states that the coupling barely
alters the planar subsystem; its decisive effect is transverse to the plane, to
which we now turn.

\subsection{The transverse period-doubling bifurcation}

The transverse trace (Fig.~\ref{fig:trace}) equals $+2$ exactly at $b=0$, as
conservation of $p_\phi$ requires, and rises above $+2$ immediately: the
out-of-plane direction is hyperbolic for all $b>0$ outside the narrow elliptic
window $0.58\lesssim b\lesssim0.63$, which is terminated by a transverse
period-doubling at $b_c\approx0.63$. Arbitrarily weak coupling produces positive transverse hyperbolicity, so the
initial instability has no finite threshold; after the narrow elliptic
restabilization window, the crossing of $\mathrm{tr}\,M_\perp=-2$ at $b_c$ marks
the onset of negative hyperbolicity through a period-doubling bifurcation. With the spherical ratio $I_z=I_x$ the small-$b$
behavior is instead elliptic: the transverse character is fixed jointly by the
$C_3$ coupling and the oblate inertia $I_z=2I_x$, and neither alone suffices.

The four multipliers normal to the flow within the energy surface make this quantitative: at $b=0.30$ the hyperbolic pair of the planar orbit is $\{24.6,\ 0.041\}$ --- the normal rate that Section~\ref{sec:nhim} inherits --- and the transverse pair is $\{2.40,\ 0.42\}$, quantifying the instability.

Hyperbolic or not, the orbit is one-dimensional; its stable and unstable manifolds
are three-dimensional, codimension two in the five-dimensional energy surface, and
do not separate it. What the transverse instability supplies is the ingredient the
planar orbit lacks --- a genuine out-of-plane expanding direction. How that
direction relates to the three-dimensional manifold of the FR1 transition state is
answered by constructing the manifold (Section~\ref{sec:nhim}); the relationship
between the orbit and its family is made precise in Section~\ref{sec:separatrix}.

Figure~\ref{fig:excursion} in Section~\ref{sec:dimensionality} shows the consequence directly: a trajectory displaced out of the reaction plane has a bounded out-of-plane excursion at $b=0$ and a geometrically growing one for representative $b>0$ outside the elliptic window.

\begin{figure}[t]
\centering
\includegraphics[width=0.82\textwidth]{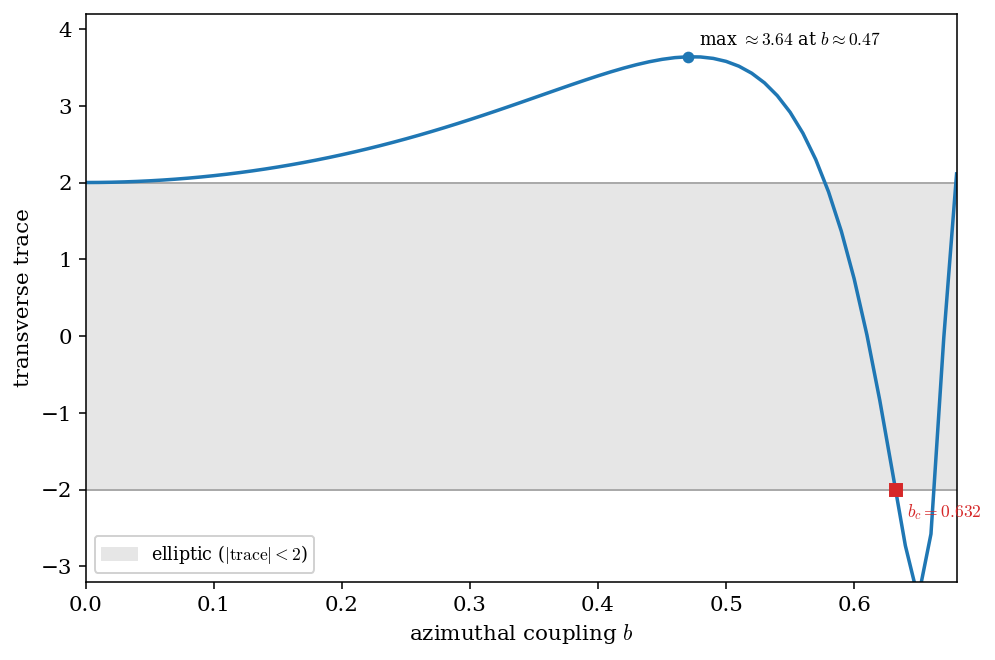}
\caption{Trace of the transverse block of the monodromy matrix of the FR1 orbit versus the azimuthal coupling $b$, at $E=0.5\ \mathrm{kcal\,mol^{-1}}$ with the physical inertia ratio $I_z=2I_x$ and the methyl ($\cos3\phi$) coupling of Eq.~\eqref{eq:Vcoup}. The shaded band $|\mathrm{trace}|<2$ is the elliptic (transverse-stable) regime. The trace rises above $+2$ immediately above $b=0$ --- the out-of-plane direction is hyperbolic, and the orbit a NHIM --- reaches its maximum of $\approx3.6$ near $b\approx0.45$, turns elliptic in the narrow window near $b\approx0.6$, and undergoes a transverse period-doubling (trace $=-2$) at $b_c\approx0.63$.}
\label{fig:trace}
\end{figure}

\section{Dimensionality of roaming: the formaldehyde--acetaldehyde contrast}\label{sec:dimensionality}

Whether roaming requires a three-dimensional description, or a planar model captures
it, distinguishes the two systems in which roaming has been studied since its
identification: formaldehyde and acetaldehyde \cite{Townsend2004,KrajnakWiggins2024}. Setting the present model
against that contrast motivates its construction and fixes the chemical reading of
the transverse instability found above.

In formaldehyde, a reduced two-degree-of-freedom phase-space model --- the roaming H atom moving in a plane relative to a rigid HCO fragment --- reproduces the roaming mechanism in $\mathrm{H_2CO}$ decomposition \cite{Townsend2004,Mauguiere2015}, and the reduction has been validated against full-dimensional quasiclassical trajectory studies \cite{Houston2016}. The reduction is faithful for a dynamical reason rather than by accident: at the large fragment separations where roaming occurs the relevant angular momentum is nearly conserved on an essentially fixed plane, the angular and radial motions decouple, and a centrifugal barrier provides the gatekeeping. Formaldehyde is thus the prototype in which roaming is in-plane and centrifugal-barrier-mediated, and in which the out-of-plane direction is not where the mechanism resides.

In acetaldehyde the third degree of freedom is essential. Roaming in the photodissociation of $\mathrm{CH_3CHO}$ to $\mathrm{CH_4}+\mathrm{CO}$ was identified experimentally by Houston and Kable, from CO product-state distributions the conventional transition state cannot account for~\cite{HoustonKable2006}, and corroborated by slice-imaging measurements~\cite{RubioLago2007}; combined experiment and full-dimensional quasiclassical trajectories on a global potential energy surface then established roaming as the dominant channel to the molecular products, proceeding by abstraction of an H atom from HCO by the methyl group~\cite{Heazlewood2008,SheplerBraamsBowman2008}. Here the roaming partner is a methyl group rather than a hydrogen atom, and full-dimensional trajectory studies, supplemented by a restricted two-degree-of-freedom model, reveal two disjoint roaming pathways \cite{KrajnakWiggins2024}: a long-range channel, with maximum $\mathrm{CH_3}$--HCO separations of roughly $14.5$--$22.9\ \mathrm{a.u.}$, that the in-plane model reproduces and that proceeds, formaldehyde-like, through a centrifugal barrier; and a short-range channel, near $9$--$11.5\ \mathrm{a.u.}$, that the in-plane model cannot produce and in which the fragment undergoes substantially more rotation --- the signature of active out-of-plane motion. That the heavier roaming fragment is not by itself the explanation is established within Chesnavich's $\mathrm{CH_4^+}$ model itself: varying the mass of the roaming fragment does not significantly change the roaming propensity \cite{KrajnakWiggins2018}. What distinguishes acetaldehyde is therefore not mass but the presence of a second, out-of-plane pathway absent from the planar description.

The one-parameter family $H_b$ interpolates between these two situations along precisely the axis the contrast identifies. At $b=0$ the model is cylindrically symmetric, $p_\phi$ is conserved, and the azimuthal degree of freedom is decoupled, each trajectory confined to its $p_\phi$-leaf --- the analog of the in-plane, angular-momentum-conserving regime in which formaldehyde is effectively two-dimensional. Increasing $b$ resolves the three-fold azimuthal structure of the methyl frame and couples the out-of-plane motion in --- the analog of activating the out-of-plane participation that acetaldehyde displays. The spatial structure of the coupling reinforces the parallel. The corrugation it introduces is strongest at the inner edge of the roaming shelf, where $V_0$ remains of order the roaming energy, and fades as $V_0\to0$ at large $r$ (Section~\ref{sec:landscape}); the model therefore carries its out-of-plane structure at short range and leaves the long-range motion in-plane --- the same short-range/long-range division of labour that separates the two acetaldehyde channels.

What the model contributes beyond the empirical contrast is a direct, quantitative measure --- computed within a single tractable system --- of when the third degree of freedom becomes dynamically active. We probe it with trajectories launched from initial conditions displaced a small distance $\delta$ out of the reaction plane from a point of the FR1 orbit, following the out-of-plane excursion $|Y(t)|$ (Fig.~\ref{fig:excursion}). At $b=0$ the excursion remains bounded, oscillating at fixed amplitude: the out-of-plane direction is marginal and a trajectory nudged out of the plane is neither restored nor expelled, as the conserved $p_\phi$ requires. Once the corrugation is switched on the excursion grows, and it grows faster as the coupling increases --- already by an order of magnitude over a few roaming periods at $b=0.15$, and far more steeply by $b=0.45$. The growth is geometric: the excursion grows by the transverse Floquet multiplier of the orbit each period, and this multiplier rises from unity at $b=0$ to $\lambda\approx3.6$ near $b\approx0.45$ (Section~\ref{sec:results}). Switching on the corrugation thus converts the marginal out-of-plane direction into an unstable one. The single bounded curve at $b=0.60$ in Fig.~\ref{fig:excursion} is the exception that the transverse-stability analysis predicts --- the narrow window of transverse re-stabilization just below the period-doubling at $b_c$ (Section~\ref{sec:results}) --- and it does not alter the dominant trend, that the out-of-plane direction is unstable across most of the coupling range.

The computation locates the effect of the corrugation. It is not on the in-plane
reactive flux: the FR1 action rises about four percent by $b\approx0.57$ and under
seven percent up to $b_c$, and the in-plane instability that governs residence on
the shelf (multiplier of order twenty) is essentially unchanged. The effect is the
opening of an out-of-plane component to the roaming motion. The third degree of
freedom does not confine trajectories to the reaction plane; outside the narrow
elliptic window it is transversely unstable, an additional route by which a roaming
trajectory leaves the plane, with no finite threshold in $b$
(Section~\ref{sec:results}). The model makes concrete what the
formaldehyde--acetaldehyde comparison shows empirically: resolving the discrete
azimuthal structure of the roaming partner activates an out-of-plane degree of
freedom a planar model cannot represent, and it is this activation --- not a change
in the in-plane bottleneck --- that separates three-dimensional roaming from its
planar reduction.

The correspondence is structural, not quantitative. The model is the
$\mathrm{CH_4^+}$ system, not formaldehyde or acetaldehyde; the coupling strength
$b$ and the bifurcation value $b_c$ have no counterpart in any particular molecule.
And the physical origin of the out-of-plane activation differs: in acetaldehyde the
short-range channel is attributed to short-range repulsive structure absent from the
reduced model, whereas here the coupling arises from the $C_3$ corrugation of the
attractive bending ridge. What the systems share is the role of the methyl group's
discrete symmetry and the participation of out-of-plane motion at short range.

\begin{figure}[t]
\centering
\includegraphics[width=0.82\textwidth]{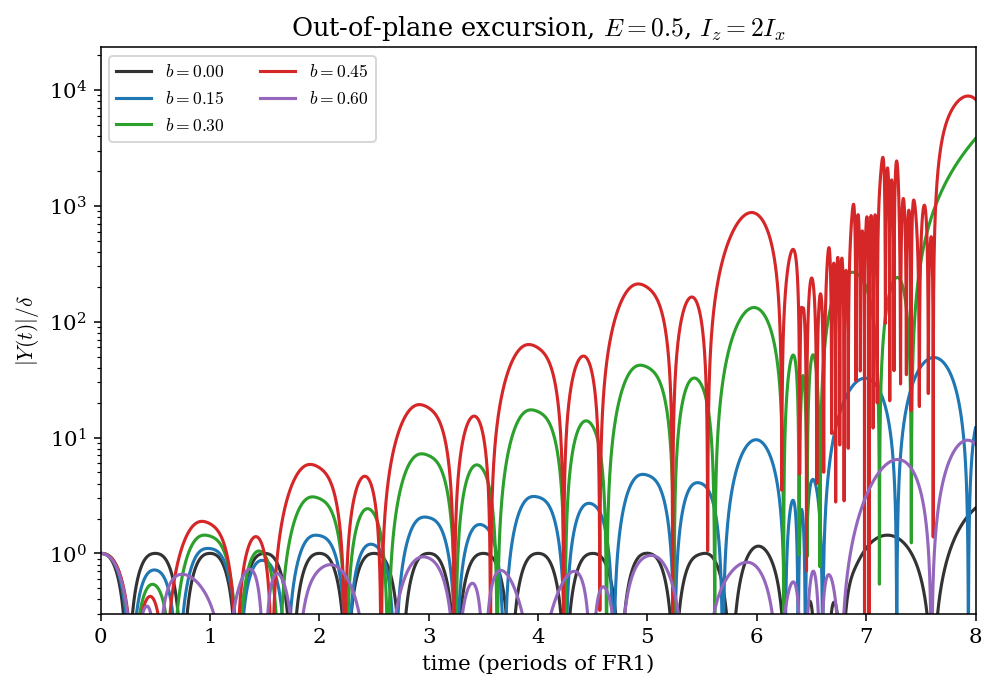}
\caption{The model's internal measure of the dimensional transition. A trajectory is launched from an initial condition displaced a small distance $\delta$ out of the reaction plane from a point of the FR1 orbit, and the out-of-plane excursion $|Y(t)|$ is followed over eight periods, at $E=0.5$~kcal\,mol$^{-1}$ with the physical inertia ratio $I_z=2I_x$. At $b=0$ (cylindrically symmetric, $p_\phi$ conserved) the excursion is bounded: the out-of-plane direction is marginal and the dynamics is effectively two-dimensional. For $b>0$ the excursion grows geometrically and more steeply with increasing coupling ($b=0.15,0.30,0.45$), the trajectory leaving the plane geometrically, by the transverse Floquet multiplier each period. The bounded curve at $b=0.60$ lies in the narrow window of transverse re-stabilization just below the period-doubling at $b_c$ (cf.\ Fig.~\ref{fig:trace}).}
\label{fig:excursion}
\end{figure}

\section{The NHIMs and their dividing surfaces: explicit construction at $b=0$ and persistence}\label{sec:nhim}

Section~\ref{sec:po} established what each of the three transition states must be in
three degrees of freedom: a four-dimensional dividing surface, anchored on a
three-dimensional NHIM whose four-dimensional stable and unstable manifolds
partition the energy surface \cite{Wiggins2016,KrajnakGarciaGarridoWiggins2021}. A periodic orbit ---
one-dimensional, with two-dimensional stable and unstable manifolds --- is not that
object. This section constructs the three NHIMs and the dividing surfaces they
anchor, and proves that every compact interior piece of each survives
sufficiently small coupling.

The construction rests on the $b=0$ symmetry. At $b=0$ the azimuth $\phi$ is cyclic
and $p_\phi$ is conserved; fixing $p_\phi$ and discarding $\phi$ reduces the
dynamics to two degrees of freedom (Section~\ref{sec:reduction}). For each fixed
$p_\phi$ the reduced system carries three unstable periodic orbits, the counterparts
at that angular momentum of the three orbits of Section~3. Collecting the orbits of
one type over the admissible range of $p_\phi$, and restoring the azimuthal angle
that the reduction removed, yields a three-dimensional invariant manifold --- one
for each transition state (Section~\ref{sec:families}).
Section~\ref{sec:normalhyp} verifies that each manifold is normally hyperbolic.
Section~\ref{sec:dstopo} constructs the dividing surface each manifold anchors and
determines its topology, following at each fixed $p_\phi$ the construction of
Maugui\`ere, Collins, Kramer, Carpenter, Ezra, Farantos, and Wiggins for ozone \cite{Mauguiere2016}; what is taken from that work and
what the present setting adds is stated there. Section~\ref{sec:separatrix} settles
the relation between the planar orbit FR1 and the manifold it generates: the orbit
is a distinguished member of its family, in a sense made precise there, and the
marginal transverse stability found in Section~\ref{sec:results} is the signature of
that status. Section~\ref{sec:persist} proves persistence of the compact interior
pieces for sufficiently small $b>0$; the trajectory study of Section~7
stands on its own measurements, the $b=0$ construction supplying the
surface placement it uses (Section~\ref{sec:robust}).

We carry out each step for FR1 in detail; the OTS and TTS manifolds follow by the
same construction, and the points at which they differ are recorded where they
occur.

\subsection{Symmetry reduction at $b=0$}\label{sec:reduction}

At $b=0$ the coupling \eqref{eq:Vcoup} reduces to $V_0(r)\sin^2\theta$, the
potential depends on $(r,\theta)$ alone, the azimuth $\phi$ is cyclic, and the
azimuthal angular momentum $p_\phi=L_z$ is conserved. Conservation of $p_\phi$ is
exact, and we use it as a check on the numerical integration and on the coded
equations of motion: on a generic non-planar, non-axial trajectory the integrator
holds $p_\phi$ constant to $1\times10^{-11}$, and the energy to $3\times10^{-10}$,
over an integration time of $60$. Fixing $p_\phi$ and discarding the cyclic angle
$\phi$ yields the two-degree-of-freedom reduced Hamiltonian, obtained directly from
\eqref{eq:H}:

\begin{equation}\label{eq:Hred}
H_{\mathrm{red}}(r,\theta,p_r,p_\theta;p_\phi) =
\frac{1}{2I_x}\left(p_\theta^2+p_\phi^2\cot^2\theta\right)
+\frac{p_\phi^2}{2I_z}
+\frac{1}{2m}\left(p_r^2+\frac{p_\theta^2}{r^2}
+\frac{p_\phi^2}{r^2\sin^2\theta}\right)
+V_{\mathrm{CH}}(r)+V_0(r)\sin^2\theta.
\end{equation}

Collecting the terms in $p_\phi^2$, the angular momentum enters only through the
effective centrifugal potential
$U(r,\theta;p_\phi)=p_\phi^2\!\left[\frac{1}{2mr^2\sin^2\theta}
+\frac{\cot^2\theta}{2I_x}+\frac{1}{2I_z}\right]$,
which diverges at the poles $\theta=0,\pi$ for $p_\phi\neq0$, vanishes identically
at $p_\phi=0$, and contains the only appearance of $I_z$
(Section~\ref{sec:inertia}).

At $p_\phi=0$ equation \eqref{eq:Hred} is the two-degree-of-freedom Chesnavich model
of Section~\ref{sec:model} \cite{Wiggins2026,Mauguiere2014}, whose orbits are those of
Sections~3--5.

\subsection{The families of reduced orbits and the two-component
manifolds}\label{sec:families}
For each fixed $p_\phi$ the reduced system \eqref{eq:Hred} possesses, in each of the
three regions, an unstable periodic orbit --- the counterpart, at that angular
momentum, of the corresponding planar orbit of Section~3. We denote the FR1-type
reduced orbit by $\gamma_{p_\phi}$, reserving the names FR1, OTS-PO, TTS-PO for the
planar orbits at $p_\phi=0$; we treat the FR1 family explicitly and return to the
other two below. Each orbit is located by symmetric shooting
(Appendix~\ref{app:po}). For $p_\phi\neq0$ it is a \emph{relative} periodic orbit:
it closes in the reduced variables $(r,\theta,p_r,p_\theta)$ after the period
$T(p_\phi)$ of the reduced orbit, while the azimuth advances, so the full-space
motion need not close. Because \eqref{eq:Hred} depends on $p_\phi$ only through
$p_\phi^2$, the reduced orbits at $\pm p_\phi$ coincide and differ only in the sense
of the azimuthal drift: they form a counter-precessing pair. (These components are
the analog of the counter-propagating pair of \cite{Mauguiere2016}, though their
role here differs --- Section~\ref{sec:dstopo}.)

Continuation in $p_\phi$ generates a one-parameter family, every member computed at
the same total energy $H_{\mathrm{red}}=E$ (the isoenergetic convention of
Section~\ref{sec:po}). As $|p_\phi|$ increases, the centrifugal potential
$U(r,\theta;p_\phi)$ of Section~\ref{sec:reduction} diverges at the poles
$\theta=0,\pi$ and confines the orbit away from the axis. Let $\theta_-(p_\phi)$
denote the smallest polar angle attained on the orbit --- the polar turning angle,
at which $p_\theta=0$; by the reflection symmetry $\theta\to\pi-\theta$ the orbit
occupies $[\theta_-,\pi-\theta_-]$. The turning angle increases monotonically with
$|p_\phi|$: for the FR1 family from $0$ at $p_\phi=0$ to roughly $44^\circ$ at
$p_\phi=1.7$, the radial extent remaining on the roaming shelf,
$r\in[3.18,3.65]$~\AAA, throughout (Fig.~\ref{fig:nhim}a).

Lifting a member back to the full phase space restores the azimuthal circle
$\phi\in S^1$ that the reduction removed. For $p_\phi\neq0$ the orbit remains at a
distance $\rho_{\min}=r_t\sin\theta_->0$ from the axis, so under the rotation each
point of the orbit sweeps a circle of positive radius, and the lift is a two-torus,
\begin{equation}\label{eq:torus}
\mathbb{T}(p_\phi)=\bigcup_{\phi\in S^1}\big\{\text{reduced orbit at }p_\phi,\ \text{rotated by }\phi\big\},
\end{equation}
the product of two circles: the reduced orbit itself, a closed
curve parametrized by the time along it, and the azimuthal circle. The torus is
invariant: the reduced orbit is invariant under the reduced flow, the azimuthal
rotation is a symmetry at $b=0$, and $p_\phi$ is conserved. The candidate NHIM ---
the manifold on which the dividing surface will be anchored --- is the union of
these tori over the angular momentum,
\begin{equation}\label{eq:M0}
\mathcal{M}_0=\bigcup_{0<|p_\phi|<p_*}\mathbb{T}(p_\phi),
\end{equation}
Every member is computed at the
same total energy and the lift leaves $H$ unchanged, so $\mathcal{M}_0$ lies in the
single five-dimensional energy surface $\Sigma_E$; it is three-dimensional, exactly
the $2n-3=3$ required for $n=3$ (Section~\ref{sec:po}).

The union is taken over $0<|p_\phi|<p_*$ for two reasons. At $|p_\phi|=p_*$ the
reduced orbit loses hyperbolicity: its Floquet multiplier $\lambda(p_\phi)$,
computed along the family in Section~\ref{sec:normalhyp}, decreases monotonically
from $24.57$ at $p_\phi=0$ to $1$ at $p_*=1.799$ (Fig.~\ref{fig:threeorbits}d), and
normal hyperbolicity fails there. At $p_\phi=0$ the construction degenerates: the
centrifugal potential vanishes identically, the orbit reaches the poles, and the
azimuthal circle shrinks to a point on the axis.

Consequently $\mathcal{M}_0$ has two disjoint components,
\[
\mathcal{M}_0^{+}=\{q\in\mathcal{M}_0:p_\phi(q)>0\},\qquad
\mathcal{M}_0^{-}=\{q\in\mathcal{M}_0:p_\phi(q)<0\},
\]
the counter-precessing components, each homeomorphic to $T^2\times(0,p_*)$. The argument
is direct: $p_\phi$ is continuous on $\mathcal{M}_0$ and never zero there, so a path
in $\mathcal{M}_0$ from $\mathcal{M}_0^+$ to $\mathcal{M}_0^-$ would carry $p_\phi$
continuously from a positive to a negative value and would have to pass through
$p_\phi=0$, which $\mathcal{M}_0$ excludes. The two components are exchanged by the
map $(\phi,p_\phi)\mapsto(-\phi,-p_\phi)$, a symmetry of $H$ --- $p_\phi$ enters
only as $p_\phi^2$, and the coupling is even in $\phi$ --- which carries one
component onto the other. Parametrising each reduced curve $\gamma_{p_\phi}$ by a
phase $s\in S^1$, the time along the orbit modulo $T(p_\phi)$, each component is, in
the variables $(s,p_\phi)$, the open annulus $S^1\times(0,p_*)$; the word
\emph{annular} below refers to one component in this sense.

The exclusion of $p_\phi=0$ is a genuine singularity of the family, not a coordinate
artifact. Near the axis the centrifugal potential grows as
$p_\phi^2/(2mr^2\theta^2)$, so the turning angle at which it balances the available
energy scales linearly in the momentum,
$\theta_-(p_\phi)=k|p_\phi|+O(p_\phi^2)$; Proposition~\ref{prop:sep} makes this
precise. The closest approach to the axis is then
$\rho_{\min}=r_t\sin\theta_-\simeq r_t k|p_\phi|\simeq1.69\,|p_\phi|$, with the
outer radial turning point $r_t\approx3.65$~\AAA\ nearly constant across the family.
In body-frame Cartesian coordinates the locus the family traces near a polar turning
point is therefore the double cone $X^2+Y^2\simeq(1.69\,p_\phi)^2$, whose apex, on
the axis at $p_\phi=0$, is a conical singularity; a quadratic law
$\theta_-\propto p_\phi^2$ would give a smooth tangency instead. The two nappes of
the cone are the two components, meeting only at the excluded apex.

The structure just described is a \emph{fibration}, and we fix the vocabulary here
because it is used throughout Sections~6.3--6.6. For a function $f$ and a value $c$,
the \emph{preimage} $f^{-1}(c)$ is the set of points at which $f$ takes the value
$c$. On $\mathcal{M}_0$ the function is the conserved momentum $p_\phi$, and the
preimage of a single value is the single torus $\mathbb{T}(p_\phi)$: the
\emph{fiber} over that value. The essential property of a fiber is that it is a
regular, interchangeable member of its family: near any value of $p_\phi$ the
manifold is a product, torus~$\times$~interval, and varying $p_\phi$ deforms each
torus smoothly into its neighbors, every member alike in kind. (In the reduced
variables the same picture is the family of curves $\gamma_{p_\phi}$ over the
momentum interval --- the open annulus above --- with the curves as fibers.) The
value $p_\phi=0$ is excluded from \eqref{eq:M0}, and the natural question is whether
the planar orbit FR1 completes the family there, as the fiber over $p_\phi=0$.
Section~\ref{sec:separatrix} shows that it does not: FR1 is related to the family
not as a fiber but as a \emph{separatrix} --- not a member of the family but a
boundary between the two, and what it separates, in which variables, is made precise
there.

\subsection{Normal hyperbolicity of each component}\label{sec:normalhyp}
For a point on $\mathcal{M}_0$ the three tangent directions are the flow, the
infinitesimal azimuthal rotation, and the family direction $\partial/\partial
p_\phi$; normal hyperbolicity requires the remaining directions --- normal to
$\mathcal{M}_0$, i.e.\ transverse to it --- to expand and contract at rates that
strictly dominate any tangential growth \cite{Fenichel1971,HirschPughShub1977,Wiggins1994}. The
instrument is the monodromy matrix of each reduced relative orbit, computed in the
full six-dimensional phase space: the $6\times6$ matrix
$M_{\mathrm{RPO}}=R(-\Delta\phi)\,D\Phi_T$, where $D\Phi_T$ is the linearization of
the flow over one period $T(p_\phi)$ of the reduced orbit, obtained by integrating
the variational equations of \eqref{eq:H} along the orbit
(Appendix~\ref{app:po}), $\Delta\phi$ is the azimuthal advance accumulated
over that period, and $R(-\Delta\phi)$ is the rotation about the symmetry axis
through $-\Delta\phi$ --- the rotation that closes the relative orbit. Its
eigenvalues are
\begin{equation}\label{eq:floquet}
\{\lambda,\ \lambda^{-1},\ 1,\ 1,\ 1,\ 1\},
\end{equation}
the algebraic multiplicity of the eigenvalue $1$ being exactly four.

The count of unit eigenvalues encodes the geometry, and is worth setting out once:
each conserved quantity contributes two, in a $2\times2$ block. For the energy: the
flow direction returns to itself after one period --- an eigenvector with eigenvalue
$1$ --- and paired with it is the direction of increasing energy, along which the
orbit continues as a one-parameter family; because the period varies with $E$, this
paired direction is in general not an eigenvector but completes a $2\times2$ Jordan
block. For the angular momentum: the rotation direction is likewise an eigenvector,
the rotation commuting with the flow at $b=0$, and the family direction
$\partial/\partial p_\phi$ completes the second block, the period varying with
$p_\phi$. The two conserved quantities $H$ and $p_\phi$ thus account for the four
unit eigenvalues --- algebraic multiplicity four, geometric multiplicity two --- and
the remaining pair $\{\lambda,\lambda^{-1}\}$, reciprocal because
$M_{\mathrm{RPO}}$ is symplectic, is the normal pair. The monodromy matrix is
computed in the full six-dimensional phase space; of the four unit directions, the
energy direction is transverse to the energy surface $\Sigma_E$ and is set aside,
and normal hyperbolicity is the statement about the remaining directions within
$\Sigma_E$ (Section~\ref{sec:po}).

At $p_\phi=0$ the computation recovers the planar orbit of
Section~\ref{sec:results} --- multipliers $\{24.57,\,0.0407,\,1,1,1,1\}$, inner
turning point $X_0=3.17907$, period $T=5.92863$, action $13.7548$ against $13.755$
from the 2-DoF computation \cite{Wiggins2026}, orbit closure $3.5\times10^{-14}$. The
two computations must agree there, since at $p_\phi=0$ the reduced system is the
2-DoF model (Section~\ref{sec:reduction}); the agreement checks the six-dimensional
machinery at the one point where the answer is known independently. That the
hyperbolic pair is normal --- acting transversally to $\mathcal{M}_0$, not along it
--- is confirmed by the spectral projector onto the hyperbolic eigenspace: the flow
and rotation directions are eigenvectors of eigenvalue $1$ to residuals
$\sim10^{-14}$, and all three tangent directions have components
$\lesssim10^{-13}$ in the hyperbolic eigenspace ($\sim10^{-5}$ for the
finite-difference family direction), so the expanding and contracting directions are
transverse to $\mathcal{M}_0$. $M_{\mathrm{RPO}}$ is symplectic --- $D\Phi_T$
because the variational flow of a Hamiltonian system is symplectic,
$R(-\Delta\phi)$ because rotations are canonical transformations --- and, as
numerical checks of this, $\det M_{\mathrm{RPO}}=1$ and the eigenvalues occur in
reciprocal pairs (Appendix~\ref{app:verification}).

The hyperbolic pair persists, real and bounded away from unity, across the FR1
family: $\lambda(p_\phi)$ decreases monotonically from $24.57$ at $p_\phi=0$ to
$\approx2.1$ at $p_\phi=1.7$, reaching $1$ at $p_*=1.799$, where the orbit turns
elliptic and normal hyperbolicity is lost (Fig.~\ref{fig:nhim}b). At every interior
value of $p_\phi$ the multiplier is real and greater than one, so the family loses
hyperbolicity only at its endpoint. The tangential rate is identically zero at $b=0$
--- the tangential dynamics is a periodic flow, a rigid rotation, and the neutral
sweep of $p_\phi$ --- and in particular the out-of-plane pair $(Y,p_Y)$ lies in the
unit eigenspace, tangent to $\mathcal{M}_0$. With the tangential rate exactly zero,
the normal rates dominate any power of tangential growth, and each component of
$\mathcal{M}_0$ is normally hyperbolic on every closed sub-annulus
$\varepsilon\le|p_\phi|\le p_*-\delta$ bounded away from both ends, the normal rate
$\nu=\ln\lambda/T$ running from $0.54$ near the center to zero at $p_*$
(Fig.~\ref{fig:nhim}c).

\begin{figure}[t]
\centering
\includegraphics[width=\textwidth]{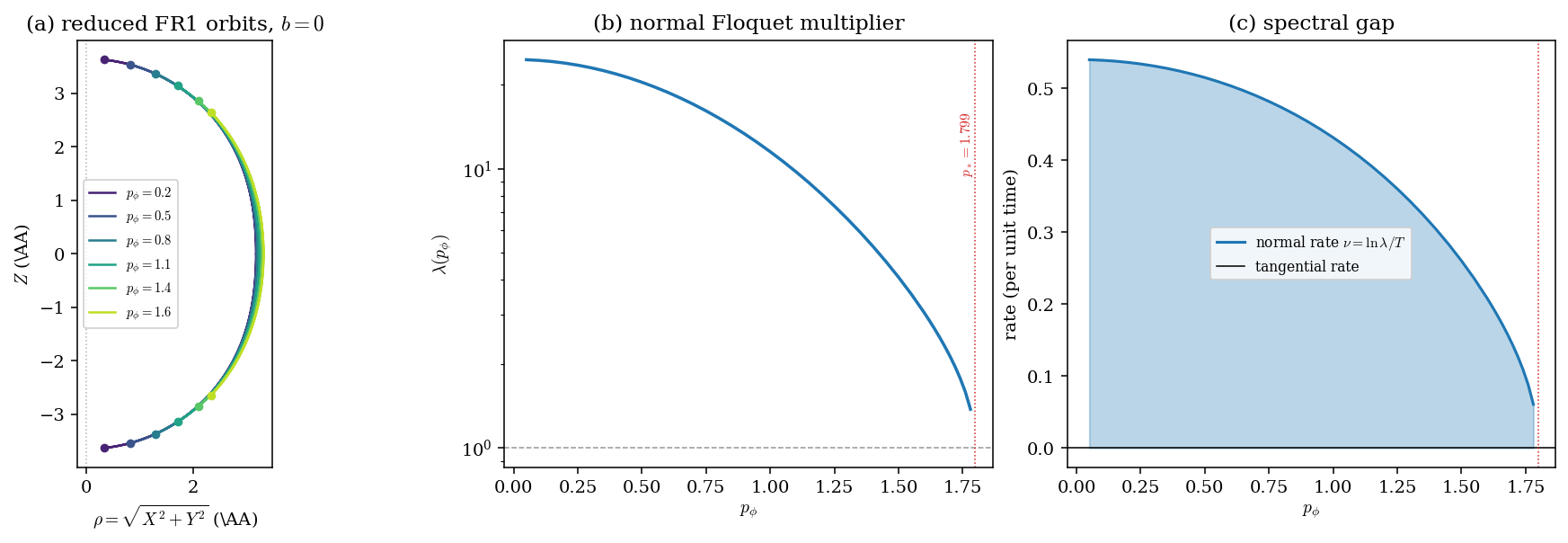}
\caption{The two-component NHIM of FR1 at $b=0$, $E=0.5$. (a) The family of reduced orbits in the meridional plane $(\rho,Z)$, $\rho=\sqrt{X^2+Y^2}$, for several values of the angular momentum $p_\phi$; markers show the polar turning points, which move away from the symmetry axis (the poles) as $p_\phi$ increases while the orbits stay on the roaming shelf. Lifting this family over the azimuthal circle and assembling over $p_\phi\in(0,p_*)$ (plus the $-p_\phi$ counterpart) produces the two-component three-dimensional manifold \eqref{eq:M0}. (b) The normal Floquet multiplier $\lambda(p_\phi)$ (logarithmic scale): real and bounded away from unity throughout, $\lambda$ real and greater than one at every interior $p_\phi$, decreasing to $\lambda=1$ at the boundary $p_*=1.799$ where the orbit turns elliptic. (c) The spectral gap: the normal Lyapunov rate $\nu=\ln\lambda/T$ (shaded) is strictly positive on the interior while the tangential rate is identically zero, so the gap is open everywhere inside $|p_\phi|<p_*$.}
\label{fig:nhim}
\end{figure}

\subsection{The dividing surface and its topology}\label{sec:dstopo}
The construction of the dividing surface at the level of the reduced
two-degree-of-freedom system is due to Maugui\`ere, Collins, Kramer,
Carpenter, Ezra, Farantos, and Wiggins, who developed it in their phase-space
analysis of roaming in ozone \cite{Mauguiere2016} --- a system of the same
structural type as the present model at $b=0$: three degrees of freedom
reducing to two at each fixed value of a decoupling parameter, there the
adiabatically conserved energy of the third mode, here the exactly conserved
$p_\phi$. Since this subsection leans on their results, we state precisely
what is taken from that work, what is adapted from it, and what does not
arise there. From \cite{Mauguiere2016} we take the complete reduced theory:
for an unstable periodic orbit of a two-degree-of-freedom system of this
kinetic form, at fixed energy, (i) the topology of the dividing surface is
decided by whether the generating orbit \emph{librates} --- oscillating
between two turning points at which its momentum ellipse degenerates to a
point --- or \emph{circulates}, never meeting such a degeneration (their
type~1 and type~2 orbits, respectively); (ii) a librating orbit yields a
dividing surface that is a two-sphere, on which the orbit is a dividing
equator separating the two hemispheres, the forward and backward crossing
directions; (iii) a circulating orbit yields instead a two-torus, and ---
the subtle point of that work --- a single circulating orbit does \emph{not}
separate its torus: the dividing surface must be assembled from the orbit
\emph{and its counter-precessing component}, and only the pair divides; and
(iv) a flux form certifies that the Hamiltonian vector field is transverse
to the surface everywhere off the generating orbit, so the surface is
locally free of recrossing. Both types occur in their ozone analysis. The
assembly over the third degree of freedom is likewise adapted from that
work: there the four-dimensional dividing surface is a union of leaves over
the adiabatic partition of energy between the reactive pair and the
decoupled vibration; here it is a union of leaves over the exactly conserved
momentum $p_\phi$, each leaf the lift of a reduced surface over the
reconstructed azimuthal circle. What does not arise there follows from the
difference in parameter: their energy partition runs over a half-line, while
$p_\phi$ runs over a signed interval, producing the two counter-precessing
components and the conical center with its separatrix orbit
(Sections~\ref{sec:families}, \ref{sec:separatrix}); and since their third
degree of freedom decouples adiabatically rather than through an exact
symmetry broken by a coupling, the persistence question of
Section~\ref{sec:persist} has no analog there.

The reduced theory applies verbatim because the reduced system
\eqref{eq:Hred} is, in form, identical to the one analyzed there: with
$q_1=r$, $q_2=\theta$, reduced mass $m$, and the inertia $I_x$, the reduced
kinetic energy $p_\theta^2(\tfrac1{2I_x}+\tfrac1{2mr^2})$ is exactly the
form treated in \cite{Mauguiere2016}, with the centrifugal terms folding
into the effective potential $U(r,\theta;p_\phi)$ of \eqref{eq:Hred}. For
every $p_\phi\neq0$ the reduced FR1 orbit librates between two
$\theta$-turning points at which \emph{both} reduced momenta vanish --- the
degeneration of the momentum ellipse that the sphere case requires,
established numerically in Appendix~\ref{app:verification} --- so its
reduced dividing surface is a two-sphere with the orbit as a dividing
equator through the two degenerate points. The flux form --- Eq.~(B12c) of
\cite{Mauguiere2016}, written in the present variables ---
\begin{equation}\label{eq:flux}
\varphi \;=\; 1-\frac{m\,p_\theta}{I_1(r)\,p_r}\,
\frac{dr}{d\theta}\bigg|_{\mathrm{PO}},\qquad
\frac1{I_1}=\frac1{I_x}+\frac1{mr^2},
\end{equation}
in which the momenta are those of the surface point under test while the
derivative $dr/d\theta$ is taken along the orbit's configuration-space
projection, vanishes precisely at the orbit's two momentum points on each
momentum ellipse --- the inbound and outbound crossings --- and is nonzero
elsewhere, so the Hamiltonian vector field is transverse to the surface off
the orbit and there is no local recrossing.

Restoring the azimuthal circle lifts this picture by one dimension
(Fig.~\ref{fig:lift}). At a fixed nonzero angular momentum $p_\phi$ every point
of the reduced dividing surface is carried around the azimuthal angle $\phi$,
which itself runs over a circle $S^1$; because the reduced surface lies entirely
off the symmetry axis, $\phi$ is a single-valued coordinate there and the lift
is a global product~\eqref{eq:torus} (Proposition~\ref{prop:topology}). The
reduced dividing surface is a two-sphere $S^2$ (the surface of a ball), so its
lift is the product $S^2\times S^1$: a sphere's worth of points, each carried
once around a circle. On that sphere the reduced orbit is an equator; carried
around the azimuthal circle it sweeps out a two-torus $\mathbb{T}(p_\phi)$ (the
surface of a doughnut), which is the slice of the NHIM. Cutting a sphere along
its equator leaves two caps, each a disk $D^2$; performing the same cut once the
azimuthal circle has been attached separates $S^2\times S^1$ along the NHIM
torus into two \emph{solid tori} (solid doughnuts $D^2\times S^1$),
\begin{equation*}
(S^2\times S^1)\setminus(\text{equator}\times S^1)
=(S^2\setminus\text{equator})\times S^1
=(D^2\sqcup D^2)\times S^1,
\end{equation*}
where $\sqcup$ denotes disjoint union. The two pieces are the forward and
backward halves of the dividing surface, each bounded by the NHIM and each
carrying one crossing direction. The full
four-dimensional dividing surface is the union of these three-dimensional
slices over each component of the manifold. Two features distinguish this
from the two-degree-of-freedom case. First, the NHIM torus at a single value
of $p_\phi$ already divides its own slice --- the reduced surface is a
sphere, and an equator separates a sphere --- so the counter-precessing component
is \emph{not} needed to obtain a dividing surface at fixed $p_\phi$, in
contrast to the circulating (type-2) case of \cite{Mauguiere2016}. Second,
the pairing is global rather than local: the $+p_\phi$ and
$-p_\phi$ components are the two disjoint components of $\mathcal{M}_0$,
meeting only through the excluded singular apex.

A reader of \cite{Mauguiere2016} might expect both types to appear among the
three transition states, as they do in ozone, where the tight orbits librate
and the orbiting orbit circulates. The realization here is different, and
the geometry forces it. For $p_\phi\neq0$ no coordinate of the reduced
system is periodic: the centrifugal barrier walls off both poles, confining
$\theta$ to an interval, and $r$ is not an angle. Circulation is therefore
impossible, and every member of all three families librates, with two
genuine turning points at which both reduced momenta vanish
(Appendix~\ref{app:verification}): the FR1- and OTS-type members across the
equator, between the polar turning angles $\theta_-(p_\phi)$ and
$\pi-\theta_-(p_\phi)$; the TTS-type members on one side of the equator,
their planar limit librating through the axis with
$\theta_{\max}\approx41^\circ$. Each reduced dividing surface is a
two-sphere, and every lifted slice is $S^2\times S^1$ cut into two solid
tori, uniformly across the three families and the whole momentum interval.
The circulating type is nonetheless present in the model --- precisely at
the excluded center. At $p_\phi=0$ the planar FR1 \emph{rotates}: $\theta$
advances by $2\pi$ each period while $r$ oscillates twice --- the $2{:}1$
resonance --- and $p_\theta$ never vanishes; the planar OTS-PO is the
orbiting relative equilibrium at the centrifugal barrier, a rotation
likewise \cite{Mauguiere2014,Wiggins2026}. 
Only the planar TTS-PO librates, through the axis. For FR1 and OTS the type
classification is thus realized across the singular center rather than
across the families: circulating exactly on the excluded planar orbit,
librating on every member --- one more respect in which the generating orbit
differs in kind from the family it bounds. That difference --- membership
against boundary, fiber against separatrix --- has no analog in the
reduced theory, and it, rather than any distinction of surface topology, is
what organizes this model. We turn to it now.

\begin{figure}[t]
\centering
\includegraphics[width=\textwidth]{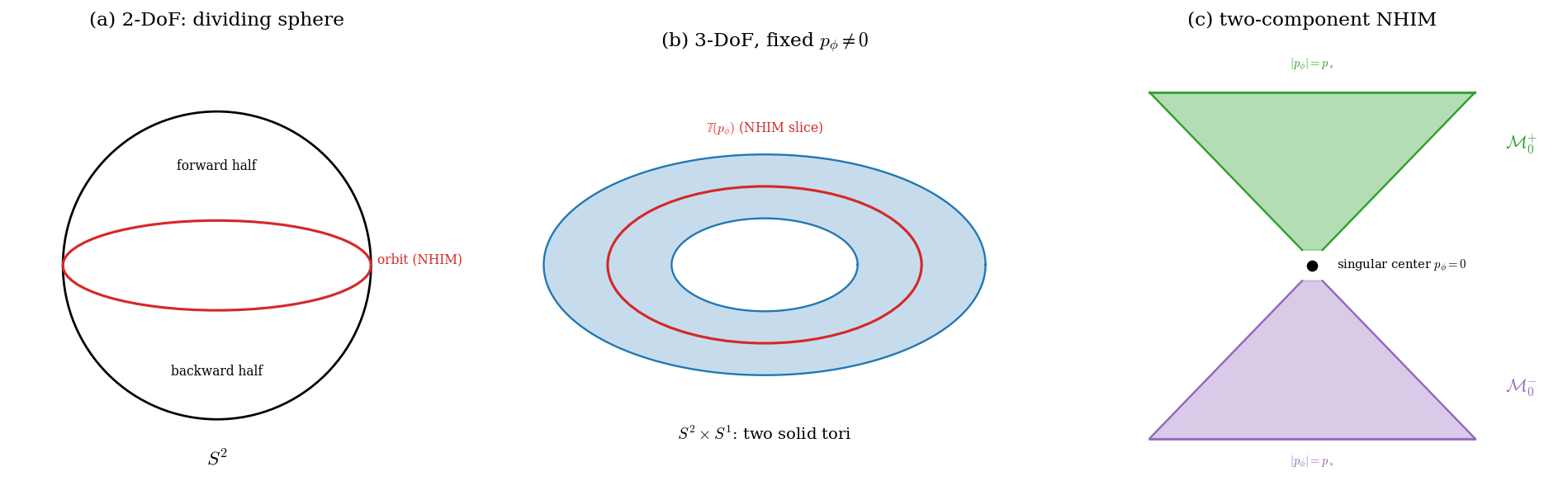}
\caption{Lifting the dividing surface from two to three degrees of freedom.
(a) In two degrees of freedom the dividing surface attached to a librating
periodic orbit is a two-sphere $S^2$, with the orbit (a NHIM) as a dividing
equator splitting it into forward and backward halves \cite{Mauguiere2016}.
(b) In three degrees of freedom, at fixed angular momentum $p_\phi\neq0$,
the azimuthal circle lifts the sphere to the product $S^2\times S^1$; the
NHIM slice is now the two-torus $\mathbb{T}(p_\phi)$ --- the equator, that
is, the reduced orbit, crossed with the azimuthal circle --- and it cuts
$S^2\times S^1$ into two solid tori $D^2\times S^1$, the forward and
backward halves. (c) Assembling over $p_\phi\in(-p_*,0)\cup(0,p_*)$ produces
a two-component NHIM, one component per precession sense
($\mathcal{M}_0^{\pm}$), each open at both ends: at the inner boundary, the
excluded singular center at $p_\phi=0$, where each transition state has its
own axis-crossing planar orbit --- the separatrix of
Section~\ref{sec:separatrix} --- and at the outer boundary $|p_\phi|=p_*$,
where normal hyperbolicity fails.}
\label{fig:lift}
\end{figure}

\subsection{The orbit FR1 as a separatrix, not a fiber}\label{sec:separatrix}
The families of Section~\ref{sec:families} are parametrized over the open
interval $0<|p_\phi|<p_*$; the orbit FR1 lies at $p_\phi=0$, outside both
components. Whether FR1 could nonetheless serve as a central fiber
completing $\mathcal{M}_0$ is decided by the limit $p_\phi\to0^+$, which we
now compute; the precise statements are
Propositions~\ref{prop:sep}--\ref{prop:blowup}, and the reading of the
marginal transverse multipliers in Section~\ref{sec:results} depends on the
answer. If the family closed up smoothly with FR1 as its $p_\phi=0$ fiber,
the polar turning states of the members --- whose positions converge to
FR1's axis-crossing point $(r_t,\theta=0)$, $r_t=3.6507$ --- would converge
to states of FR1. At a polar turning point both reduced momenta vanish
(hypothesis (H)(i)), so energy conservation there reads
$E=U(r_t,\theta_-;p_\phi)+V(r_t,\theta_-)$, and expanding the effective
centrifugal potential for small $\theta$ and inserting
$\theta_-=k\,p_\phi$ gives, as $p_\phi\to0$,
\begin{equation*}
E-V_{\mathrm{CH}}(r_t)\;=\;\frac{1}{2k^2}
\Big(\frac{1}{mr_t^2}+\frac{1}{I_x}\Big)\;=:\;\mathcal{E}^0 .
\end{equation*}
Read one way, this identity determines the constant of the linear law of
Section~\ref{sec:families}: $k=0.4619$, with $\mathcal{E}^0=1.1737$, and
$\rho_{\min}=r_tk\,|p_\phi|=1.686\,|p_\phi|$. Read the other way, it states
what the turning states converge to: their canonical momenta all vanish in
the limit --- $p_r=p_\theta=0$ exactly, $p_\phi\to0$ --- while their
kinetic energy tends to $\mathcal{E}^0$, because the coefficient of
$p_\phi^2$ in $U$ diverges as $\theta_-^{-2}$; the limiting states sit at
$(r_t,\theta=0)$ with zero momentum, reached along circulation about the
axis on the shrinking circle $\rho_{\min}$ at azimuthal frequency
$\dot\phi=2\mathcal{E}^0/|p_\phi|+O(1)$. On FR1 the axis crossing carries
$|p_\theta|=p_{\max}=2.165$. The turning states therefore do not converge
to states of FR1.

The dynamical name for this situation is a separatrix, and the contrast
promised in Section~\ref{sec:families} can now be drawn exactly. A
\emph{fiber}, as defined there, is a regular member of the parametrized
family: interchangeable with its neighbors, deformed smoothly into them by
varying $p_\phi$, with the manifold locally a product of the fiber and the
parameter interval. A \emph{separatrix} is not a member of a family but a
boundary between families of different character. The canonical picture is
the pendulum, whose phase portrait contains two families of qualitatively
different motion --- librations, oscillating between turning points, and
circulations, passing over the top --- divided by one exceptional orbit
whose own motion is unlike every neighbor on either side. Membership
against boundary is the whole distinction.

FR1 is a boundary in both available senses. In the parameter: $p_\phi=0$ is
the boundary between the two counter-precessing components
$\mathcal{M}_0^+$ and $\mathcal{M}_0^-$ --- the only value at which the
precession sense could change, and precisely the value excluded from both;
the conserved-quantity reading of the same fact, with $p_\phi=0$ as the
level set dividing the two senses, is drawn in Section~\ref{sec:persist}.
In the motion: every member of either family reflects at its polar turning
points --- for $p_\phi\neq0$ the effective centrifugal potential $U$ is an
infinite wall at the poles, and the orbit turns at $\theta_-(p_\phi)>0$
without reaching the axis --- whereas at $p_\phi=0$ the wall vanishes
identically and $\theta=0$ is simply the minimum of the bending potential
$V_0\sin^2\theta$, the collinear well direction. (This wall is unrelated to
the Chesnavich lock, which is the ridge of the same potential at the
equator.) The planar orbit crosses the axis with maximal $|p_\theta|$ ---
indeed it rotates, $\theta$ advancing monotonically
(Section~\ref{sec:dstopo}). Reflection on every fiber; crossing on the
boundary alone. The motion on FR1 differs in kind from the motion on every
member of the family it bounds --- the defining property of a separatrix,
exactly as the pendulum's separatrix passes over the top that every
libration turns back from.

The comparison with the pendulum also locates what is singular here. The
pendulum separatrix is the Hausdorff limit of its neighboring orbits, and
it is an object of the same kind as they are --- a solution curve. Here the
limit exists but is of the wrong kind. By Proposition~\ref{prop:sep}(b),
the Hausdorff limit of the family is the folded planar orbit together with
the two momentum segments
$\{(r_t,\theta{=}0\ \text{or}\ \pi,\,0,\,p_\theta,\,0):|p_\theta|\le
p_{\max}\}$: a singular set, not an orbit and not a torus. The members do
converge onto FR1 away from the axis; the failure is confined to a polar
layer of angular width $O(|p_\phi|)$, inside which the centrifugal force
diverges and smooth dependence on $p_\phi$ breaks down, and it is there
that the momenta sweep out the extra segments. What fails at the center is
therefore not attainment but fiber structure: the tori converge to no
torus, and the fibration has no fiber over $p_\phi=0$ that FR1 could be.
The sharp separation between orbit and family is directional and is made
explicit by the blow-up (Proposition~\ref{prop:blowup}): on the exceptional
divisor the family lands on two latitude circles, one per precession sense,
while the planar orbits lift to the equator, disjoint from both. FR1 is
excluded from $\mathcal{M}_0$ by definition --- it is not a member of
either family --- and no smooth completion exists that could admit it.

This is also the correct reading of the marginal transverse multipliers
$(+1,+1)$ of FR1 at $b=0$ (Section~\ref{sec:results}): at the equatorial
crossing the azimuthal-rotation and family directions span the transverse
plane $(Y,p_Y)$, and both are neutral; the identification degenerates at
the poles, where the rotation field vanishes (the apex), and the
marginality is the linear-stability signature of sitting at the singular
center rather than evidence that the out-of-plane direction is dynamically
inert. The orbit's genuine instability is the reduced hyperbolic pair
$\{24.57,\,0.0407\}$. The same statements hold for the entrance and the
reactive gate: each planar orbit crosses the axis through the identical
centrifugal structure --- the OTS-PO as a rotation at $r\approx13.4$, the
orbiting relative equilibrium, and the TTS-PO as a libration through the
axis over the mouth of the well --- so each family carries its own conical
center and its own separatrix orbit, with the per-family constants of
Proposition~\ref{prop:sep} ($k=0.651$ and $0.304$). For the TTS one
refinement applies: its planar orbit closes only after two circuits of the
limiting reduced arc, so the family's period and multiplier converge to
half and to the square root of the planar values
(Proposition~\ref{prop:sep}(b)).

\subsection{Persistence and the fate of the two components for $b>0$}\label{sec:persist}
Switching on the coupling breaks the azimuthal symmetry: for $b>0$ the
momentum $p_\phi$ is no longer conserved and $\mathcal{M}_0$ is no longer
invariant. What survives is governed by normal hyperbolicity, and the
hypothesis is verified where the manifold exists exactly, at $b=0$. Within
$\Sigma_E$ the manifold $\mathcal{M}_0$ is three-dimensional; at a point of
the torus $\mathbb{T}(p_\phi)$ its tangent space is spanned by the flow
direction, the azimuthal rotation $\partial_\phi$, and the family direction
$\partial_{p_\phi}$, so exactly two directions remain, and these --- the
normal bundle --- are the stable and unstable directions of the reduced
orbit, the pair $\{\lambda(p_\phi),\lambda(p_\phi)^{-1}\}$
(Proposition~\ref{prop:nhim}). The normal rate is $\ln\lambda(p_\phi)$ per
period, decreasing from $\ln 24.57\approx3.20$ at the inner edge to $0$ at
$p_*$; the tangential rate is zero, because the internal motion lies on
invariant tori and separates at most polynomially. Two exclusions define the
domain of the statement, one per boundary: the outer edge is excluded
because $\lambda\to1$ at $p_*$, where the normal rate vanishes; the center
is excluded because it belongs to neither component --- the components are
open there, and FR1 was never inside the manifold whose persistence is at
issue. On any closed sub-annulus $\varepsilon\le|p_\phi|\le p_*-\delta$ each
component is therefore compact and normally hyperbolic of every order, and
the standard theorems
\cite{Fenichel1971,HirschPughShub1977,Wiggins1994,Wiggins2016} give
persistence: for each smoothness class $C^r$ there is a range of $b$ on
which each component survives as a locally invariant $C^r$
manifold-with-boundary, together with its stable and unstable manifolds and
the dividing surfaces it carries (Proposition~\ref{prop:persist}).

The theorems do not name the range of $b$; the computed rates indicate it.
One identification must be kept straight, because ``plane'' is doing two
jobs. The normal directions of $\mathcal{M}_0$ lie \emph{within} each
reduced slice --- they are the hyperbolic directions of the reduced saddle
--- while the out-of-plane directions $(Y,p_Y)$ at the planar limit are
spanned by $\partial_\phi$ and $\partial_{p_\phi}$ and are \emph{tangent} to
$\mathcal{M}_0$ (Section~\ref{sec:separatrix}). The out-of-plane instability
that the coupling generates is therefore growth along the manifold, not
across it, and it is the quantity the tangential rate must be tested
against. The computed values: the multiplier of the planar orbit, which
monitors the normal rate at the manifold's inner edge, stays at
$\ln\lambda\approx3.2$ per period essentially unchanged over the range
studied, while the out-of-plane growth does not exceed $\ln\approx1.2$ per
period (Section~\ref{sec:results}); the gap $3.2>2\times1.2$ required for
class $C^2$ stays open through $b\approx0.3$. For the other two families:
$\ln\lambda_{\mathrm{TTS}}=5.4$ per period against a tangential rate of at
most $0.25$ for $b\le0.4$, the widest gap of the three; and
$\ln\lambda_{\mathrm{OTS}}=0.12$ against a perturbation that is itself
negligible at the orbit, the coupling at $r\approx13.4$ being proportional
to $V_0(13.4)\sim10^{-66}$.

The orbit FR1 requires no persistence theorem: it lies in the reaction
plane for every $b$ (Section~\ref{sec:model}(iii)), so we continue it from
$b=0$ and compute its stability directly --- the central computation of
Section~\ref{sec:results}: transversely hyperbolic for $b>0$ outside the
elliptic window, with the period-doubling at $b_c\approx0.63$.

The remaining question is what the symmetry breaking does to the structure
the manifolds organize, and here the word \emph{separatrix} carries its
classical meaning. At $b=0$ the sign of $p_\phi$ is conserved, and the level
set $p_\phi=0$ separates two qualitatively distinct families of motion ---
the two precession senses --- exactly as the pendulum separatrix separates
its two senses of rotation; the planar orbit lies in the dividing level set,
and its own motion differs in kind from every member of the families it
bounds (rotation against libration, Section~\ref{sec:dstopo}). For $b>0$
nothing conserves $p_\phi$:
\[
\dot p_\phi=-\frac{\partial H}{\partial\phi}=3bV_0(r)\sin^3\theta\sin3\phi,
\]
identically zero at $b=0$ and on the plane, nonzero as soon as a trajectory
acquires out-of-plane displacement. The level set no longer separates, and
the two senses communicate. The computation shows the opening directly: a
trajectory launched on a torus $\mathbb{T}(p_\phi)$ of the family
holds $p_\phi$ rigid to the printed precision at $b=0$, has $p_\phi$ spread
over $0.14$ at $b=0.05$, and at $b=0.30$ spreads over more than unity ---
crossing $p_\phi=0$ within twelve roaming periods, with energy conserved to
one part in $10^9$. No invariance is contradicted: the persisted sub-annuli
are locally invariant, manifolds-with-boundary through whose edges
trajectories may leave, and the crossing is exit through the inner edge and
passage across the region where the excluded center was. The invariant tori
that foliate each component at $b=0$ do not survive individually; the drift
in $p_\phi$ that replaces them is the out-of-plane escape of
Section~\ref{sec:results} seen from the manifold's side, and it is what the
trajectory classification of Section~\ref{sec:transport} measures.

Two remarks close the construction. Roaming has no index-one saddle of the
potential: the manifolds sit over the roaming shelf, the centrifugal
barrier, and the mouth of the well, not over critical points, so the
normal-form construction of a NHIM at a saddle
\cite{WaalkensSchubertWiggins2008} does not apply; the manifolds are instead
exhibited in the symmetric integrable limit and continued, which is why the
$b=0$ verification is the load-bearing step. And the period-doubling at
$b_c$ is a bifurcation of the flow \emph{on} the manifold, not of the
manifold: it reorganizes the internal dynamics while the normal directions
keep their gap. The rigorous statements --- existence and smoothness of the
sub-annuli, their normal hyperbolicity of every order, the leaf topology,
the separatrix structure at the center, and persistence --- are collected
and proved in Section~\ref{sec:proofs}. The classification of trajectories
at finite $b$ is the subject of Section~\ref{sec:transport}; the flux
question is not treated in this paper; an entropy-based flux analysis for two-degree-of-freedom Chesnavich-type models is given in \cite{Wiggins2026}.

\subsection{Rigorous statements and proofs}\label{sec:proofs}
The construction of the preceding subsections rests on one input that we
establish numerically and otherwise reason from rigorously. We isolate it as
a standing hypothesis, state the structural results as propositions, and
prove them. Throughout, $\Sigma_E\subset\mathbb{R}^6$ is the energy surface
$\{H_0=E\}$ at $b=0$ (five-dimensional), $U(r,\theta;p_\phi)$ is the
effective centrifugal potential of \eqref{eq:Hred}, and $\gamma_{p_\phi}$
denotes the symmetric reduced periodic orbit of \eqref{eq:Hred} at angular
momentum $p_\phi$ in the FR1 (respectively OTS or TTS) family. The planar
orbits are not members of the families: at $p_\phi=0$ the planar FR1 and
OTS-PO rotate and have no turning points (Section~\ref{sec:dstopo}), so each
family lives on an open interval and the planar orbit enters only as its
singular limit (Proposition~\ref{prop:sep}).

\begin{hypothesis}
At $E=0.5$ and $b=0$, for each family $F\in\{\mathrm{FR1},\mathrm{OTS},
\mathrm{TTS}\}$ there is $p_*^F>0$ such that on the open interval
$(0,p_*^F)$ the reduced Hamiltonian \eqref{eq:Hred} admits a periodic orbit
$\gamma_{p_\phi}$ depending $C^\infty$ on $p_\phi$, with the following
properties.
\begin{enumerate}
\item[\textnormal{(i)}] $\gamma_{p_\phi}$ is a brake orbit: it has exactly
two turning points, at each of which both reduced momenta vanish,
$p_r=p_\theta=0$.
\item[\textnormal{(ii)}] The reduced monodromy matrix of $\gamma_{p_\phi}$
has exactly one pair of multipliers off the unit circle,
$\{\lambda(p_\phi),\lambda(p_\phi)^{-1}\}$, with $\lambda$ real and greater
than one at every $p_\phi\in(0,p_*^F)$.
\item[\textnormal{(iii)}] Along the family, $dT/dE\neq0$ and
$d\Delta\phi/dp_\phi\neq0$, where $T(p_\phi)$ is the period and
$\Delta\phi(p_\phi)$ the azimuthal advance of the reconstructed orbit over
one period.
\item[\textnormal{(iv)}] At the endpoints: as $p_\phi\to0^+$, $\lambda$ and
$T$ converge to the planar values for FR1 and OTS, and to
$\sqrt{\lambda_{\mathrm{planar}}}$ and $T_{\mathrm{planar}}/2$ for TTS. As
$p_\phi\to p_*^{F-}$: for FR1, $\lambda\to1$ at finite orbit amplitude; for
OTS and TTS, $\lambda$ remains bounded away from $1$ and the orbit's
amplitude tends to zero, the family terminating on a saddle-center
equilibrium of the reduced system.
\end{enumerate}
\end{hypothesis}

\noindent Local existence and smoothness are not assumptions but
consequences of the implicit function theorem: at any $p_\phi$ with
$\lambda\neq1$ the symmetric-return (Poincar\'e) map has a transverse,
non-degenerate fixed point, which persists and varies smoothly in $p_\phi$.
What the hypothesis adds beyond the IFT is global and quantitative, and it
is exactly what the computations establish
(Appendix~\ref{app:verification}). For FR1: $\lambda$ decreases from
$24.57$ to $1$, with $p_*=1.799$; the unit eigenvalue of the monodromy
matrix has multiplicity four with geometric multiplicity two, verified by
spectral projector, and (iii) is verified directly ($\Delta\phi$ strictly
decreasing from $2\pi$ at the center, $dT/dE=-3.16$ at $p_\phi=0.5$). For
OTS: $\lambda$ increases from $1.131$ to $1.269$, and the family terminates
at $p_*=2.150$ on the azimuthal relative equilibrium at the centrifugal
barrier ($r=9.600$); the endpoint multiplier is predicted by the
equilibrium's linearized rates, $e^{\mu_r\,2\pi/\omega_\theta}=1.269$, and
matched by the family. For TTS: $\lambda$ increases from its inner-edge limit
$\sqrt{220.46}=14.848$ to $15.90$, and the family terminates at
$p_*=1.317$ on the saddle-center equilibrium that the centrifugal term
creates in the mouth of the well ($r=2.446$, $\theta=29.4^\circ$; the bare
potential has no critical point there), again with the endpoint multiplier
$e^{\mu_r\,2\pi/\omega_\theta}=15.90$ matched by the family. We therefore
treat (H) as established for the model and prove the rest from it.

\begin{proposition}[The two-component manifold]\label{prop:annulus}
Under \textnormal{(H)}, fix $0<\varepsilon<p_*$ and $\delta>0$ with
$\varepsilon\le p_*-\delta$. The set
\[
\mathcal{M}_0^{[\varepsilon,\delta]}
=\bigcup_{\varepsilon\le|p_\phi|\le p_*-\delta}\ \bigcup_{\phi\in S^1}\
\big\{\gamma_{p_\phi}\ \text{rotated by }\phi\big\}
\]
is the disjoint union of two compact, $C^\infty$, flow-invariant
three-dimensional submanifolds-with-boundary of $\Sigma_E$ --- one for each
sign of $p_\phi$ --- each with two boundary tori at $|p_\phi|=\varepsilon$
and $|p_\phi|=p_*-\delta$, hence four boundary tori in total.
\end{proposition}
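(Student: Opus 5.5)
The plan is to exhibit $\mathcal{M}_0^{[\varepsilon,\delta]}$ as the image of an explicit smooth embedding of $S^1\times S^1\times\big([\varepsilon,p_*-\delta]\sqcup[-(p_*-\delta),-\varepsilon]\big)$ into $\Sigma_E$. Fix the phase on each reduced orbit by a brake point: by (H)(i) $\gamma_{p_\phi}$ has two turning points with $p_r=p_\theta=0$, and by the reflection symmetry one of them is the polar turning point at $\theta_-(p_\phi)$. Let $s\in\mathbb{R}/T(p_\phi)\mathbb{Z}$ be time measured from that point, and rescale to $\sigma=s/T(p_\phi)\in S^1$ so the domain no longer depends on $p_\phi$. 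Write $x(\sigma,p_\phi)=(r,\theta,p_r,p_\theta)$ for the reduced orbit. Define
\[
\Psi(\sigma,\phi_0,p_\phi)=\big(r,\theta,\ \phi_0+\varphi(\sigma,p_\phi),\ p_r,p_\theta,\ p_\phi\big),
\]
where $\varphi(\sigma,p_\phi)=\int_0^{\sigma T}\partial H_{\mathrm{red}}/\partial p_\phi\,dt$ is the azimuthal advance along the reconstructed orbit. Since this advance is taken mod $2\pi$ together with the free rotation $\phi_0$, the map is well defined on the torus. The image is exactly the union in the statement. Two properties are immediate: the image lies in $\Sigma_E$ because every member has $H_{\mathrm{red}}=E$ and $H_0$ does not depend on $\phi$, and it is flow-invariant because the full flow at $b=0$ maps $\Psi(\sigma,\phi_0,p_\phi)$ to $\Psi(\sigma+t/T,\phi_0,p_\phi)$.

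The key steps, in order, are as follows.

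\emph{(1) Smoothness.} By (H) the reduced orbit, and hence its initial brake point and $T$, depends $C^\infty$ on $p_\phi$; solutions of the ODE depend smoothly on initial data, so $\Psi$ is $C^\infty$. I will use (H)(ii) and the IFT remark following (H) to ensure the brake point can be chosen smoothly, for instance as the transverse symmetric-return fixed point.

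\emph{(2) Injectivity.} The coordinate $p_\phi$ is recovered directly from the image, so distinct momenta give distinct points. At fixed $p_\phi$, the point $(r,\theta,p_r,p_\theta)$ determines $\sigma$, because $\gamma_{p_\phi}$ is a simple closed curve with minimal period $T$. Then $\phi$ determines $\phi_0$. Here I need that $\phi$ is a genuine coordinate along the orbit, which holds because $\theta\in[\theta_-,\pi-\theta_-]$ with $\theta_->0$, so the orbit stays off the axis.

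\emph{(3) Immersion.} The three columns $\partial_\sigma\Psi$, $\partial_{\phi_0}\Psi$ and $\partial_{p_\phi}\Psi$ are linearly independent. The column $\partial_{\phi_0}\Psi=\partial_\phi$ is nonzero off the axis. The column $\partial_\sigma\Psi=T\cdot X_H$ has a nonzero reduced part, because the reduced orbit is not an equilibrium. The column $\partial_{p_\phi}\Psi$ has $p_\phi$-component equal to $1$, while the other two columns have zero $p_\phi$-component. Finally, $\partial_\sigma\Psi$ and $\partial_\phi$ are independent because the former has nonzero $(r,\theta,p_r,p_\theta)$ part and the latter does not.

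\emph{(4) Embedding and boundary.} An injective immersion of a compact manifold-with-boundary is an embedding onto a compact submanifold-with-boundary. Its boundary is the image of the boundary of the domain: for each sign, the two tori $\{|p_\phi|=\varepsilon\}$ and $\{|p_\phi|=p_*-\delta\}$, giving four boundary tori in total.

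\emph{(5) Two components.} The two pieces are disjoint, separated by the sign of the continuous function $p_\phi$ (the argument of Section~\ref{sec:families}). Each piece is connected, as the image of a connected domain, and the two are exchanged by $(\phi,p_\phi)\mapsto(-\phi,-p_\phi)$. Equivalently, the negative piece is the image of the positive one under this symmetry, since the reduced orbits coincide for $\pm p_\phi$.

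The main obstacle I expect is step (2), together with the smooth choice of phase in step (1). Injectivity across different $p_\phi$ is free, but at a single $p_\phi$ I need $\gamma_{p_\phi}$ to be traversed once per period $T$ without self-intersection in the reduced phase space; this is automatic for a periodic orbit taken with its minimal period. The TTS caveat in (H)(iv) concerns only the limit $p_\phi\to0$, which is excluded by $\varepsilon>0$, so it does not affect the argument. The hypothesis (H)(iii), $d\Delta\phi/dp_\phi\neq0$, is not needed here; I will record it for the later normal-form propositions.
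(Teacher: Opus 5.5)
\paragraph{Verdict.} Your overall route is the paper's, but one step fails as written: the map $\Psi$ is not well defined on the torus, and the fix is easy. The shared route is to parametrize by phase along $\gamma_{p_\phi}$, azimuth and momentum. Smoothness comes from (H), the immersion from three independent columns, and injectivity from freeness of the off-axis rotation plus simplicity of $\gamma_{p_\phi}$; together these give an embedding of a compact manifold-with-boundary. Invariance comes from the reduced flow, the $SO(2)$ symmetry and conservation of $p_\phi$, and the sign of $p_\phi$ separates the two pieces. Your rescaling $\sigma=s/T(p_\phi)$ usefully makes precise the paper's passage from $s\in\mathbb{R}/T(p_\phi)\mathbb{Z}$ to the fixed domain $[\varepsilon,p_*-\delta]\times S^1\times S^1$.

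\paragraph{The failing step.} $\Psi$ is not well defined on $S^1\times S^1\times[\varepsilon,p_*-\delta]$. Since $\partial H_{\mathrm{red}}/\partial p_\phi$ depends only on the reduced point, $\varphi(\sigma+1,p_\phi)=\varphi(\sigma,p_\phi)+\Delta\phi(p_\phi)$. But $\Delta\phi$ is not a multiple of $2\pi$: on the FR1 family it lies strictly between $4.613$ and $2\pi$. Taking $\varphi$ modulo $2\pi$ does not help. Your own invariance formula at $t=T$ shows the problem: $\Psi(\sigma+1,\phi_0,p_\phi)$ is $\Psi(\sigma,\phi_0,p_\phi)$ rotated by $\Delta\phi$, because $\gamma_{p_\phi}$ is only a \emph{relative} periodic orbit. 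The same defect makes your injectivity step (``then $\phi$ determines $\phi_0$'') depend on the choice of lift of $\sigma$.

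\paragraph{Repair.} Either choice below works.
\begin{itemize}
\item Drop the reconstruction phase and use the product chart $(\sigma,\phi,p_\phi)\mapsto(x(\sigma,p_\phi),\phi,p_\phi)$. This is literally the set $\gamma_{p_\phi}\times S^1$ in the statement, and it is what the paper uses. The flow then acts by $(\sigma,\phi)\mapsto(\sigma+t/T,\ \phi+\varphi(\sigma+t/T,p_\phi)-\varphi(\sigma,p_\phi))$, which is well defined modulo $2\pi$ and preserves each torus.
\item Replace $\varphi$ by $\varphi(\sigma,p_\phi)-\sigma\,\Delta\phi(p_\phi)$. This is $1$-periodic in $\sigma$, and smooth in $p_\phi$ because $\Delta\phi=\int_0^{T}\partial_{p_\phi}H_{\mathrm{red}}\,dt$ is. As a bonus, the flow on each torus becomes the linear flow $(\sigma,\phi_0)\mapsto(\sigma+t/T,\ \phi_0+t\,\Delta\phi/T)$.
\end{itemize}
With either chart, your steps (2)--(5) go through unchanged.
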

\begin{proof}
For $|p_\phi|\ge\varepsilon>0$ the effective centrifugal potential
$U(r,\theta;p_\phi)$ in \eqref{eq:Hred} diverges as $\theta\to0,\pi$, so
$\gamma_{p_\phi}$ has turning angles bounded away from the poles and lies in
the open off-axis region $\{0<\theta<\pi\}$, where the azimuthal $SO(2)$
action $\phi\mapsto\phi+\alpha$ is free. Parametrise
$\mathcal{M}_0^{[\varepsilon,\delta]}$ by $(s,\phi,p_\phi)$,
$s\in\mathbb{R}/T(p_\phi)\mathbb{Z}$ the phase along $\gamma_{p_\phi}$. By
(H) the map $(s,\phi,p_\phi)\mapsto$ phase point is $C^\infty$; it is an
immersion ($\partial_s$ is the reduced flow, $\partial_\phi$ the rotation,
$\partial_{p_\phi}$ the transverse family direction, and $p_\phi$ is a
submersion onto the momentum interval) and injective (distinct $|p_\phi|$
lie on distinct values of the conserved momentum; at fixed $p_\phi$ distinct
$(s,\phi)$ give distinct points since the action is free and
$\gamma_{p_\phi}$ is a simple closed curve), hence an embedding of the
compact manifold-with-boundary $[\varepsilon,p_*-\delta]\times S^1\times
S^1$ for $p_\phi>0$, and independently of its $-p_\phi$ copy for
$p_\phi<0$; the two embeddings have disjoint images since $p_\phi$
separates them. Invariance: $\gamma_{p_\phi}$ is invariant under the
reduced flow, the rotation is a symmetry of $H_0$, and $p_\phi$ is
conserved, so the full flow carries each lifted torus to itself; the union
is invariant. The boundary is the union of four tori, two per
sign-component.
\end{proof}

\begin{proposition}[Floquet spectrum and normal hyperbolicity]
\label{prop:nhim}
Under \textnormal{(H)}, for each $p_\phi\in(0,p_*)$ the
relative-periodic-orbit monodromy matrix
$M_{\mathrm{RPO}}=R(-\Delta\phi)\,D\Phi_T$ has spectrum
$\{\lambda,\lambda^{-1},1,1,1,1\}$, the eigenvalue $1$ having algebraic
multiplicity four and geometric multiplicity two (two $2\times2$ Jordan
blocks: flow--energy and symmetry--momentum). Within $\Sigma_E$ the
manifold $\mathcal{M}_0^{[\varepsilon,\delta]}$ has tangent bundle spanned
by the flow, the azimuthal rotation, and the family direction, all with
zero Lyapunov exponent, and normal bundle the pair
$\{\lambda,\lambda^{-1}\}$ --- the normal pair of the reduced orbit ---
with rate $\nu(p_\phi)=\ln\lambda(p_\phi)/T(p_\phi)$ bounded away from
zero. Consequently $\mathcal{M}_0^{[\varepsilon,\delta]}$ is $r$-normally
hyperbolic for every $r\ge1$.
\end{proposition}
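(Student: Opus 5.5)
The plan is to split the six-dimensional phase space along the reduced orbit into the symmetry directions and the reduced directions. Then I read off the spectrum blockwise and pass from the spectrum to the Lyapunov rates on the compact manifold of Proposition~\ref{prop:annulus}.

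First, the symplectic reduction by the free $SO(2)$ action. Proposition~\ref{prop:annulus} guarantees that the action is free on the off-axis region containing $\gamma_{p_\phi}$. There I change to coordinates $(\phi,p_\phi;\,r,\theta,p_r,p_\theta)$, which are canonical and smooth. In these coordinates $H_0$ is independent of $\phi$, so the variational equations are block-triangular:
\begin{itemize}
\item $\delta p_\phi$ is constant;
\item $\delta\phi$ is driven by $(\delta p_\phi,\delta x_{\mathrm{red}})$ but does not feed back;
\item the reduced variation obeys the reduced variational equation plus a forcing term linear in $\delta p_\phi$.
\end{itemize}
Composing with $R(-\Delta\phi)$ removes the azimuthal advance, so $M_{\mathrm{RPO}}$ takes the block form
\[
M_{\mathrm{RPO}}=\begin{pmatrix}1 & a & * \\ 0 & 1 & 0\\ 0 & c & M_{\mathrm{red}}\end{pmatrix}
\]
in the ordering $(\delta\phi,\delta p_\phi,\delta x_{\mathrm{red}})$. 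Hence the spectrum is $\{1,1\}\cup\operatorname{spec}M_{\mathrm{red}}$.

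For $M_{\mathrm{red}}$, the reduced monodromy of a two-degree-of-freedom orbit has the trivial pair $\{1,1\}$ from the flow and the energy direction. The remaining pair is $\{\lambda,\lambda^{-1}\}$ by (H)(ii). This gives the spectrum $\{\lambda,\lambda^{-1},1,1,1,1\}$.

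Next, the Jordan structure. For the flow--energy block, differentiating the isoenergetic family in $E$ gives $M\,\partial_E x_0=\partial_E x_0-(dT/dE)\,X_H$. This is nonzero by (H)(iii), so the block is a nontrivial $2\times2$ Jordan block. For the symmetry--momentum block, differentiating the family in $p_\phi$ (at fixed $E$) gives $M\,\partial_{p_\phi}x_0=\partial_{p_\phi}x_0-(d\Delta\phi/dp_\phi)\,\partial_\phi+(\text{multiple of }X_H)$. Again $d\Delta\phi/dp_\phi\neq0$ by (H)(iii), so this block is also a Jordan block. I expect this step to be the main obstacle. The difficulty is keeping track of the isoenergetic constraint: $\partial_{p_\phi}$ at fixed $E$ mixes in an $X_H$ component through $dT/dp_\phi$. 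The point to check is that this mixing cannot collapse the two blocks into a single $3\times3$ block or change the geometric multiplicity. I would handle this by choosing a basis $\{X_H,\partial_E x_0,\partial_\phi,\partial_{p_\phi}x_0\}$ of the generalized $1$-eigenspace. Symplecticity of $M_{\mathrm{RPO}}$, together with $\omega(\partial_\phi,\cdot)=-dp_\phi$ and $\omega(X_H,\cdot)=-dH$, shows that the two $2$-planes are $\omega$-orthogonal and invariant. Each plane is therefore a separate block, and the geometric multiplicity is exactly two.

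Finally, normal hyperbolicity within $\Sigma_E$. The tangent space of $\mathcal{M}_0^{[\varepsilon,\delta]}$ is $\operatorname{span}\{X_H,\partial_\phi,\partial_{p_\phi}x_0\}$, which is invariant under the linearized flow.
\begin{itemize}
\item Along it, iterates grow at most linearly, from the Jordan blocks, by bounds uniform over the compact parameter set. Hence all tangential Lyapunov exponents are zero.
\item The $\lambda^{\pm1}$ eigenvectors lie in the reduced block and are transverse to the tangent space, since their spectral projections kill the generalized $1$-eigenspace. Together with the tangent space they span $T\Sigma_E$, giving a continuous invariant splitting.
\item The normal rates are $\pm\ln\lambda/T$. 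By continuity of $\lambda$ and $T$ on the compact interval $[\varepsilon,p_*-\delta]$, with $\lambda>1$ there by (H)(ii), $\nu$ is bounded below by some $\nu_0>0$.
\end{itemize}
Since the tangential growth is subexponential, $\nu_0>r\cdot 0$ holds for every $r$. This is the Fenichel rate condition for $r$-normal hyperbolicity for every $r\ge1$.
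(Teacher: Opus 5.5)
Your route is the paper's: four structurally forced unit directions, the reduced pair from (H)(ii), and zero tangential exponent. Your block-triangular form in the canonical coordinates $(\phi,p_\phi;r,\theta,p_r,p_\theta)$, where $R(-\Delta\phi)$ has identity derivative, is an improvement, since it proves that the algebraic multiplicity of $1$ is exactly four, which the paper only asserts. The gap is in the Jordan-block step, where you expected trouble, and the symplectic fix does not work. Take $R(\alpha)$ to be the shift $\phi\mapsto\phi+\alpha$. Differentiating the closing condition $R(-\Delta\phi(E,p_\phi))\,\Phi_{T(E,p_\phi)}(x_0(E,p_\phi))=x_0(E,p_\phi)$ gives
\[
M\,\partial_E x_0=\partial_E x_0-(\partial_E T)\,X_H+(\partial_E\Delta\phi)\,\partial_\phi,\qquad
M\,\partial_{p_\phi}x_0=\partial_{p_\phi}x_0-(\partial_{p_\phi}T)\,X_H+(\partial_{p_\phi}\Delta\phi)\,\partial_\phi .
\]
Your formula for $M\partial_E x_0$ drops the $\partial_\phi$ term, so $\mathrm{span}\{X_H,\partial_E x_0\}$ is invariant only if $\partial_E\Delta\phi=0$. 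Since $M-I$ annihilates $X_H$ and $\partial_\phi$, re-choosing representatives modulo those directions cannot remove the cross terms. In any case, $\omega$-orthogonality does not imply invariance.

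What symplecticity actually gives, from $\omega(M\partial_E x_0,M\partial_{p_\phi}x_0)=\omega(\partial_E x_0,\partial_{p_\phi}x_0)$ with $\iota_{X_H}\omega=-dH$ and $\iota_{\partial_\phi}\omega=-dp_\phi$, is the reciprocity $\partial_{p_\phi}T=-\partial_E\Delta\phi$. This is the mixed-partials identity for $dW=T\,dE-\Delta\phi\,dp_\phi$.

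On the generalized eigenspace, $M-I$ maps $\mathrm{span}\{\partial_E x_0,\partial_{p_\phi}x_0\}$ into $\mathrm{span}\{X_H,\partial_\phi\}$ with the rank of $\partial(T,\Delta\phi)/\partial(E,p_\phi)$. By reciprocity the determinant of that matrix is $\partial_E T\,\partial_{p_\phi}\Delta\phi+(\partial_E\Delta\phi)^2$. The geometric multiplicity is two exactly when this is nonzero, and three (Jordan type $2+1+1$) otherwise. The bare nonvanishing of $\partial_E T$ and $\partial_{p_\phi}\Delta\phi$ in (H)(iii) does not exclude the second case when the two derivatives have opposite signs. Even when the type is $2+2$, the partners of $X_H$ and $\partial_\phi$ are in general mixtures of $\partial_E x_0$ and $\partial_{p_\phi}x_0$, not the pure directions.

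To close the step, you can either:
\begin{itemize}
\item strengthen (H)(iii) to nondegeneracy of the Hessian of $W(E,p_\phi)$; or
\item use the sign data the computations supply: $dT/dE<0$ and $\Delta\phi$ decreasing in $p_\phi$, each nonvanishing along the connected family. This makes the determinant strictly positive.
\end{itemize}
The paper's own proof reads the two blocks off the two conditions separately and passes over the same cross terms, so it does not supply this either. Your normal-hyperbolicity conclusion is unaffected, because tangential growth is polynomial whichever Jordan type occurs.
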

\begin{proof}
Four eigenvalue-$1$ directions are forced by structure. (i) The flow
direction $X_{H_0}$ is a fixed vector of the time-$T$ map, hence a $+1$
eigenvector. (ii) The period varies with energy, $dT/dE\neq0$ by
\textnormal{(H)(iii)}, so the energy-gradient direction is a generalised
$+1$ eigenvector partnering the flow in a $2\times2$ Jordan block (the
standard trivial pair). (iii) The infinitesimal azimuthal rotation
$\xi_\phi=\partial_\phi$ generates the $SO(2)$ symmetry, which commutes
with the flow and maps $\gamma_{p_\phi}$ to its own rotation; hence
$\xi_\phi$ is a $+1$ eigenvector. (iv) The family direction
$\partial_{p_\phi}$ is mapped by the linearized return to itself plus a
multiple of $\xi_\phi$, because $d\Delta\phi/dp_\phi\neq0$ by
\textnormal{(H)(iii)}; this is a generalised $+1$ eigenvector partnering
$\xi_\phi$ in a second $2\times2$ Jordan block (the symmetry--momentum
pair). These four directions span the generalised eigenspace at $1$. By
\textnormal{(H)(ii)} the reduced normal pair contributes the only
multipliers off the unit circle, so the algebraic multiplicity of $1$ is
exactly four and the geometric multiplicity two. On $\Sigma_E$ the energy
direction is quotiented, leaving the flow, rotation, and family as the
three tangent directions of $\mathcal{M}_0$ --- all neutral, tangential
exponent $0$ --- and the normal pair as the two-dimensional normal bundle,
with $\nu(p_\phi)\in[\nu_{\min},\nu_{\max}]\subset(0,\infty)$ on the closed
sub-annulus. Normal hyperbolicity of order $r$ requires the normal rate to
exceed $r$ times the tangential rate; the two Jordan blocks contribute only
polynomial (sub-exponential) drift along $\mathcal{M}_0$, so the tangential
Lyapunov rate is exactly $0$ and the inequality holds for every $r\ge1$.
\end{proof}

\begin{proposition}[Dividing-surface topology]\label{prop:topology}
Under \textnormal{(H)}, for each $p_\phi\in(0,p_*)$ the dividing surface
anchored on the $p_\phi$-slice of $\mathcal{M}_0$ is homeomorphic to
$S^2\times S^1$, and the slice of the NHIM (a two-torus) separates it into
two components, each homeomorphic to the solid torus $D^2\times S^1$. The
same holds for the OTS and TTS families. The circulating (toroidal)
dividing surface of \textnormal{\cite{Mauguiere2016}} does not occur for
any member of the three families.
\end{proposition}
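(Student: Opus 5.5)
The proof has three stages: build the reduced dividing surface at fixed $p_\phi$ and show it is a two-sphere cut by the orbit into two disks; lift it over the azimuthal circle; and rule out the circulating case using (H)(i). Throughout, fix $p_\phi\in(0,p_*)$; the negative momenta follow from the symmetry $(\phi,p_\phi)\mapsto(-\phi,-p_\phi)$ of Section~\ref{sec:families}.

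\textbf{Step 1: the reduced surface.} Let $\Gamma$ be the configuration-space projection of $\gamma_{p_\phi}$ in the $(r,\theta)$ half-plane. By (H)(i) it is a brake orbit, so $\Gamma$ is an arc traced back and forth between two endpoints $q_\pm$, at which both reduced momenta vanish. The reduced dividing surface is
\[
S_{p_\phi}=\{(q,p):\ q\in\Gamma,\ H_{\mathrm{red}}(q,p;p_\phi)=E\}.
\]
At an interior point $q\in\Gamma$ we have $E-V-U>0$, and the kinetic term $\frac12 p^{T}G(q)p$ is positive definite because $G=\mathrm{diag}(1/m,\,1/I_1(r))$. So the fiber of $S_{p_\phi}$ over $q$ is an ellipse, a circle topologically, which collapses to a point at $q_\pm$. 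The space obtained from $[0,1]\times S^1$ by collapsing each end circle to a point is $S^2$.

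\textbf{Step 2: the orbit as an equator.} Over each interior $q$, the orbit meets the fiber ellipse in exactly the two momenta $\pm p_\gamma(q)$, the outbound and inbound passes. These two points cut the ellipse into two open arcs. The arcs are distinguished by the sign of the flux form \eqref{eq:flux}, which by \cite{Mauguiere2016} is nonzero off the orbit and so has constant sign on each arc. The orbit is therefore a simple closed curve on $S^2$. By the Jordan--Schoenflies theorem its complement is two open disks: the forward and backward hemispheres.

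\textbf{Step 3: the lift.} By the argument in the proof of Proposition~\ref{prop:annulus}, the turning angles satisfy $\theta_-\ge\theta_-(\varepsilon)>0$. Hence all of $S_{p_\phi}$ lies in the region $0<\theta<\pi$, where $SO(2)$ acts freely and $\phi$ is a global coordinate. The lifted slice is then the product $S_{p_\phi}\times S^1\cong S^2\times S^1$. The orbit lifts to $\gamma_{p_\phi}\times S^1=\mathbb{T}(p_\phi)$. Using the product identity displayed in Section~\ref{sec:dstopo},
\[
(S^2\setminus\gamma)\times S^1=(D^2\sqcup D^2)\times S^1,
\]
the complement has two components, each an open solid torus. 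Taking closures gives solid tori with common boundary $\mathbb{T}(p_\phi)$.

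\textbf{Step 4: the other families and the circulating case.} The same argument applies verbatim to OTS and TTS, since (H)(i) is assumed for every family. For the last sentence of the proposition, note that a circulating orbit has a periodic coordinate along which $p_\theta$ never vanishes. That contradicts (H)(i), because on every member $p_\theta$ vanishes at the two turning points. The underlying reason is that $U$ walls off the poles, so for $p_\phi\neq0$ the angle $\theta$ ranges over an interval and is not a circle variable. The planar rotating orbits at $p_\phi=0$ are excluded from the families by definition.

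The main obstacle is the degeneration at $q_\pm$ in Step 1. Proving that the quotient is a $C^0$ sphere near these points requires the energy margin to vanish transversally there. I would take Morse-type coordinates near $q_\pm$, setting $E-V-U\approx c\,s$ with $c>0$ along $\Gamma$. This uses that the orbit's velocity is nonzero away from the turning point and that $\nabla(V+U)\ne0$ at $q_\pm$, which follows from the brake-orbit turning. The fiber radius then scales like $\sqrt{s}$, which is exactly the local model of a sphere near a pole. If this cannot be extracted from (H), I would fall back on a homeomorphism argument using only continuity and the positivity of $E-V-U$ on the interior of $\Gamma$.
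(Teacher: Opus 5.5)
Your proof has the same skeleton as the paper's: a reduced two-sphere cut by the orbit, a trivial $S^1$-bundle over the off-axis region, the product identity, (H)(i) for OTS and TTS, and the walled-off poles excluding circulation. It differs at the step that carries the topology. The paper never builds the reduced sphere; it notes that (H)(i) puts $\gamma_{p_\phi}$ in the librating (type-1) case of \cite{Mauguiere2016} and imports from that work both $D_{\mathrm{red}}\cong S^2$ and the separating-equator property. You construct the sphere yourself: an ellipse bundle over the configuration arc $\Gamma$ collapsing at the brake points, the orbit supplying the antipodal pair $\pm p_\gamma(q)$ in each interior fiber, and Jordan--Schoenflies (or just the two strips $(0,1)\times(\text{open arc})$) for the complement. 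The paper's route is shorter and leans on the full reduced theory of that reference. Yours is self-contained and shows exactly what is used, namely $E-V-U>0$ on the interior of $\Gamma$ and $E-V-U=0$ at its ends.

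Your construction also exposes one input that (H)(i) does not give and the citation hides: $\Gamma$ must be an \emph{embedded} arc. A self-intersection would glue two fibers of $[0,1]\times S^1$ and put four orbit momenta in one ellipse. You should state this as an assumption; the paper has it only numerically (for FR1 the arc is a graph $r=R(\theta;p_\phi)$, Section~\ref{sec:robust}).

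Granting it, your ``main obstacle'' disappears. The map $(s,\alpha)\mapsto\big(\Gamma(s),\sqrt{2(E-V-U)}\,G^{-1/2}(\cos\alpha,\sin\alpha)\big)$ is continuous, constant on the end circles, and injective on the quotient. A continuous injection from a compact space into a Hausdorff one is an embedding, so the $\sqrt{s}$ estimate matters only for smoothness at the poles, not for the homeomorphism the proposition claims.

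One small correction: the quantity with constant sign on each open arc is the normal velocity $n(q)\cdot G(q)p$, not $\varphi$ of \eqref{eq:flux}. Where $\Gamma$ is a graph $r=R(\theta)$ one has $\tfrac{d}{dt}\big(r-R(\theta)\big)=(p_r/m)\,\varphi$, so $\varphi$ changes sign through infinity wherever $p_r=0$ on an arc. The topology does not need this labelling anyway.
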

\begin{proof}
Written with $(q_1,q_2)=(r,\theta)$, the reduced Hamiltonian
\eqref{eq:Hred} is of the two-degree-of-freedom form treated in
\cite{Mauguiere2016}, the centrifugal terms folding into the effective
potential $U(r,\theta;p_\phi)$. By \textnormal{(H)(i)} the reduced orbit
$\gamma_{p_\phi}$ is a brake orbit: at its two turning points both reduced
momenta vanish, which is precisely the degeneration of the momentum ellipse
that defines the librating (type-1) case; by \textnormal{(H)(ii)} it is
hyperbolic. The construction of \cite{Mauguiere2016} then gives a reduced
dividing surface $D_{\mathrm{red}}\cong S^2$ on which $\gamma_{p_\phi}$ is
a separating equator through the two degenerate points. Restoring the
azimuth: at fixed $(E,p_\phi)$ the full slice fibers over $D_{\mathrm{red}}$
with fiber the azimuthal circle $S^1$. Because $\gamma_{p_\phi}$ and
$D_{\mathrm{red}}$ lie in the off-axis region, the angle $\phi$ is a global
coordinate on the fiber, so the bundle is trivial and the slice is
$D_{\mathrm{red}}\times S^1\cong S^2\times S^1$, with NHIM slice
$(\text{equator})\times S^1\cong T^2$. Since $S^2$ minus an equator is two
open disks,
$(S^2\times S^1)\setminus T^2=(D^2\times S^1)\sqcup(D^2\times S^1)$. For
OTS and TTS the generating orbit is likewise a brake orbit by
\textnormal{(H)(i)} --- the OTS-type members librating across the equator,
the TTS-type members on one side of it --- so $D_{\mathrm{red}}\cong S^2$
and the conclusion is identical. The toroidal case of \cite{Mauguiere2016}
requires a circulating generating orbit; for $p_\phi\neq0$ no coordinate of
the reduced system is periodic (the centrifugal barrier walls off both
poles and $r$ is not an angle), so no member of any family circulates and
that case does not arise --- the circulating type is realized in this model
only by the planar FR1 and OTS-PO at the excluded center
(Section~\ref{sec:dstopo}). The full four-dimensional dividing surface in
$\Sigma_E$ is the union of these three-dimensional leaves over
$p_\phi\in(0,p_*)$ together with the corresponding $-p_\phi$ union; the fibration over
$p_\phi$ and the leaf-by-leaf description are exact only at $b=0$, where
$p_\phi$ is conserved. For sufficiently small $b>0$, the dividing surfaces carried by each
compact interior subannulus deform smoothly with the corresponding locally
persisted NHIM branch (Proposition~\ref{prop:persist}); no global
continuation through the singular $p_\phi=0$ limit is asserted, and $p_\phi$
survives only as a local label on each branch, not as an invariant that
foliates the surface.
\end{proof}

\begin{proposition}[Separatrix structure at the center]\label{prop:sep}
Under \textnormal{(H)}, let $r_t$ denote the radius of polar approach of
the family and $\mathcal{E}^0=E-V_{\mathrm{CH}}(r_t)>0$ the kinetic energy
there. As $p_\phi\to0^+$:
\begin{enumerate}
\item[\textnormal{(a)}] the polar turning angle satisfies
$\theta_-(p_\phi)=k\,p_\phi+O(p_\phi^2)$ with
\[
k=\Big[\tfrac{\,I_x^{-1}+m^{-1}r_t^{-2}\,}{2\,\mathcal{E}^0}\Big]^{1/2};
\]
hence $\rho_{\min}(p_\phi)=r_t\sin\theta_-=r_tk\,p_\phi+O(p_\phi^2)$ is
linear in $p_\phi$, and the family meets the symmetry axis in a double cone
with a genuine conical (non-manifold) singularity at the apex;
\item[\textnormal{(b)}] the turning states converge to the phase points
$(r_t,\theta=0\ \text{or}\ \pi,\,p=0)$, which do not lie on the planar
orbit $\gamma_0$ (which crosses the axis with $|p_\theta|=p_{\max}=
\sqrt{\mathcal{E}^0/\kappa(r_t)}$, $\kappa=\tfrac1{2I_x}+\tfrac1{2mr^2}$),
while away from the axis the members converge to the folded planar curve.
The Hausdorff limit of the family is therefore the folded planar orbit
together with the two momentum segments
$\{(r_t,\theta{=}0\ \text{or}\ \pi,\,0,\,p_\theta,\,0):|p_\theta|\le
p_{\max}\}$ --- a singular set strictly containing $\gamma_0$. The family
consequently has no $p_\phi=0$ fiber: the torus fibration degenerates at
the center, and $\gamma_0$ is the separatrix carrier in the sense of
Section~\ref{sec:persist} --- the zero level set of the conserved momentum,
dividing the two precession senses --- its own motion differing in kind
from every member. For FR1 and OTS the members' periods and multipliers
converge to the planar values; for TTS the planar orbit closes only after
two circuits of the limiting reduced arc, so
$T(p_\phi)\to T_{\mathrm{planar}}/2$ and
$\lambda(p_\phi)\to\sqrt{\lambda_{\mathrm{planar}}}$.
\end{enumerate}
\end{proposition}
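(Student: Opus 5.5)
For part (a) I will start from energy conservation at a polar turning point of $\gamma_{p_\phi}$. By (H)(i) both reduced momenta vanish there, so
\[
E=U(r_-,\theta_-;p_\phi)+V_{\mathrm{CH}}(r_-)+V_0(r_-)\sin^2\theta_-,
\]
where $r_-(p_\phi)$ is the radius of the turning point. The turning point is then analysed in the small-$p_\phi$ regime.

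\begin{itemize}
\item By (H), $r_-\to r_t$, so it suffices to expand $U$ near the axis. Using $1/\sin^2\theta=\theta^{-2}+O(1)$ and $\cot^2\theta=\theta^{-2}+O(1)$ gives
\[
U=p_\phi^2\,\theta^{-2}\,\kappa(r)+O(p_\phi^2),\qquad \kappa(r)=\tfrac{1}{2I_x}+\tfrac{1}{2mr^2}.
\]
\item The bending term $V_0\sin^2\theta_-$ is $O(\theta_-^2)$.
\item Setting $u=\theta_-/p_\phi$, the balance reads $\kappa(r_-)\,u^{-2}=\mathcal{E}^0+O(\theta_-^2)+O(p_\phi^2)+O(r_--r_t)$.
\item To make this rigorous I will first show that $u$ stays bounded and bounded away from zero; the leading balance forces this, given $\mathcal{E}^0>0$.
\item I will then apply the implicit function theorem to $F(u,p_\phi)=\kappa u^{-2}-\mathcal{E}^0-\dots$ at $u=k$, where $\partial_uF=-2\kappa k^{-3}\neq0$. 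This yields $\theta_-=kp_\phi+O(p_\phi^2)$, provided $r_--r_t=O(p_\phi)$.
\end{itemize}

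That last estimate is the one place I need smooth dependence of the turning radius on $p_\phi$ up to the endpoint. I would extract it from (H)(iv) together with the reflection symmetry that fixes the turning point on the symmetric line of the shooting. If that fails, I would accept a weaker $o(1)$ error and state $k$ as the limit of $\theta_-/p_\phi$. The linear law for $\rho_{\min}$ then follows from $\sin\theta_-=\theta_-+O(\theta_-^3)$. For the cone, note that near the apex the set $\{\rho=r_tk|p_\phi|\}$ in $(\rho,p_\phi)$ is a union of two transverse lines. Lifting it over $\phi$ gives a double cone, and its link at the apex is disconnected (two circles), so no neighbourhood of the apex is a manifold.

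For part (b), the kinetic energy at the turning state is $U\to\mathcal{E}^0$ while $p_r=p_\theta=0$ and $p_\phi\to0$. Hence the turning states converge to $(r_t,0\text{ or }\pi,0,0,0)$. The planar orbit $\gamma_0$ meets $\theta=0$ with $U=0$, so energy gives $\kappa(r_t)p_\theta^2=\mathcal{E}^0$ and $|p_\theta|=p_{\max}\neq0$; the limit points are therefore off $\gamma_0$.

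Away from the axis, on $\{\theta\ge\eta\}$, the vector field depends smoothly on $p_\phi^2$ up to $p_\phi=0$. By continuous dependence on parameters over bounded times, the members' arcs converge uniformly to the corresponding arcs of the folded planar orbit. Here "folded" means identified under $\theta\mapsto-\theta$, since the members reflect while $\gamma_0$ passes through.

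Inside the polar layer $\theta\le\eta$ I would use the blow-up variable $\theta/p_\phi$. On the fast time scale the layer dynamics is a central-force-like reflection in the $(\rho,\phi)$ plane. During it, $r$ stays within $O(\eta)$ of $r_t$, and $p_\theta$ sweeps monotonically from $-p_{\max}$ through $0$ to $+p_{\max}$ in the limit, since the reflection conserves the layer energy $\kappa p_\theta^2+U$. This yields exactly the momentum segments in the Hausdorff limit. Conversely, every limit point lies either on the folded orbit or on these segments.

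Absence of a fiber over $p_\phi=0$ follows because a fiber would be a torus, and the Hausdorff limit contains the segments, so it is not a torus; nor is it $\gamma_0$. The separatrix statement is then the reading of Section~\ref{sec:persist}. The period and multiplier convergence is (H)(iv). For TTS I would justify the factor of two by noting that a member traverses the limiting arc once between its two reflections, whereas $\gamma_0$ passes through the axis and returns only after a second traversal. The monodromy of $\gamma_0$ is then the square of the limiting reduced monodromy, giving $\sqrt{\lambda}$.

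I expect the main obstacle to be the polar-layer matching: proving uniform control of the layer passage, in particular that $r$ is essentially frozen and the time spent in the layer is $O(p_\phi)$. I would first try the rescaling $\theta=p_\phi\,\vartheta$ with time $t=p_\phi\,\tau$, which makes the layer equations a regular perturbation of an explicitly integrable reflection problem.
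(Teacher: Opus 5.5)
Your proposal is correct and follows the paper's proof essentially step for step: the turning-point energy balance with the near-axis expansion $U=p_\phi^2\theta^{-2}\kappa(r)+O(p_\phi^2)$ gives (a). For (b) you use the same ingredients: the zero-momentum limit of the turning states against $|p_\theta|=p_{\max}$ on $\gamma_0$, the layer relation $\kappa p_\theta^2+U\approx\mathcal{E}^0$ sweeping out the momentum segments, smooth dependence on $p_\phi$ away from the poles, and the pole-count (double-circuit) argument for TTS, whose period and multiplier limits are in any case already part of (H)(iv).

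Your additions only sharpen it. You correctly notice that the stated $O(p_\phi^2)$ remainder needs $r_-(p_\phi)-r_t=O(p_\phi)$, a point the paper sidesteps by writing the balance directly at $r_t$. Your disconnected-link argument also shows the double cone is not even a topological manifold at the apex, slightly more than the paper's tangent-cone remark.
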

\begin{proof}
At a polar turning point both reduced momenta vanish
(\textnormal{(H)(i)}), so the energy relation \eqref{eq:Hred} reads
$E=U(r_t,\theta_-;p_\phi)+V(r_t,\theta_-)$. As $\theta\to0$,
$U(r,\theta;p_\phi)=\tfrac{p_\phi^2}{2\theta^2}\big(I_x^{-1}+m^{-1}r^{-2}
\big)+O(p_\phi^2)$ and $V(r_t,\theta)=V_{\mathrm{CH}}(r_t)+O(\theta^2)$.
Substituting and keeping leading order,
\[
\frac{p_\phi^2}{2\theta_-^2}\big(I_x^{-1}+m^{-1}r_t^{-2}\big)
=\mathcal{E}^0+O(\theta_-^2),
\]
which gives $\theta_-=k\,p_\phi+O(p_\phi^2)$ with $k$ as stated. The set
traced near the apex, $\{X^2+Y^2=(r_tk)^2p_\phi^2\}$, is a double cone,
which is not a smooth manifold at the apex (its tangent cone is the cone
itself, not a plane). For (b): the turning states have both reduced
momenta zero and $p_\phi\to0$, so they converge to
$(r_t,\theta{=}0,\,0,0,0)$; on $\gamma_0$ the axis crossing carries
$|p_\theta|=p_{\max}\neq0$, so this limit point is not on $\gamma_0$.
Inside the polar layer the momenta sweep the full segment: at the point of
$\gamma_{p_\phi}$ with $\theta_c=\theta_-/\sqrt{1-c^2}$, $c\in(0,1)$, the
energy relation with the same expansion gives
$p_\theta^2\,\kappa(r_t)=\mathcal{E}^0\big(1-\theta_-^2/\theta_c^2\big)
+o(1)=\mathcal{E}^0c^2+o(1)$, so $p_\theta\to c\,p_{\max}$, while
$r\to r_t$ and $p_r\to0$ (the radial variation across the layer is
$O(\theta^2)$); every point of the segment is thus a limit of family
points. Away from any fixed neighborhood of the poles the reduced vector
field depends smoothly on $p_\phi$ down to $0$, so the members converge
there to the folded planar curve. Together these give the stated Hausdorff
limit, which contains $\gamma_0$ and strictly more; since the limit is not
a torus --- not even a manifold --- the fibration has no fiber over
$p_\phi=0$. The covering statement is forced by the pole count: the planar
FR1 and OTS-PO cross both poles per period, so one circuit of the folded
curve closes them; the planar TTS-PO crosses one pole per half-swing,
exchanging meridional half-planes at each crossing, so two circuits are
required, halving the limiting period and taking the square root of the
multiplier.
\end{proof}

\noindent The constants are quantitatively confirmed for all three
families (Appendix~\ref{app:verification}): measured $\theta_-/p_\phi$
against the formula for $k$ gives $0.4619$ (FR1: $r_t=3.6507$,
$\mathcal{E}^0=1.1736$), $0.651$ (OTS: $r_t=13.37$,
$\mathcal{E}^0=0.5035$), and $0.304$ (TTS: $r_t=2.6444$,
$\mathcal{E}^0=3.105$); for FR1, $\rho_{\min}/|p_\phi|=1.686$ against
$r_tk=1.6863$; and the TTS covering relation fixes the inner-edge limits exactly,
$T(0^+)=T_{\mathrm{planar}}/2=1.1139$ and $\lambda(0^+)=\sqrt{220.46}=14.848$;
the smallest-momentum member computed ($p_\phi=0.05$) has $T=1.1140$ and
$\lambda=14.84$, approaching both limits to three digits.

\begin{proposition}[The blow-up separates the directions of approach]
\label{prop:blowup}
Let $\beta:\widetilde{N}\to N$ be the oriented (spherical) blow-up of the
apex in the three transverse directions $(X,Y,p_\phi)$. The proper
transform of the cone of Proposition~\ref{prop:sep} is two disjoint smooth
circles on the exceptional divisor $S^2$ --- the latitude circles
$\hat p=\pm c/\sqrt{1+c^2}$, $c=r_tk$, one per precession sense --- while
every planar orbit lies in $\{p_\phi=0\}$ and lifts to the equator
$\hat p=0$: the orbit $\gamma_0$ meets the divisor in two antipodal
equatorial points (its approach and departure directions), and its rotated
copies fill the equator. The family and the planar orbits therefore reach
the apex along separated directions.
\end{proposition}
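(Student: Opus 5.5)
The plan is to work in local coordinates $(X,Y,p_\phi)$ transverse at the apex, with the remaining coordinates $(r,p_r,p_\theta)$ near $(r_t,0,\cdot)$ treated as parameters. We introduce the oriented blow-up
\[
\beta:\ \mathbb{R}_{\ge0}\times S^2\to\mathbb{R}^3,\qquad (R,\hat x,\hat y,\hat p)\mapsto(R\hat x,\,R\hat y,\,R\hat p),
\]
so that the exceptional divisor is $\{R=0\}\cong S^2$. The proper transform of a set $A$ is the closure of $\beta^{-1}(A\setminus\{0\})$ in $\widetilde N$. Its trace on the divisor is the set of limit directions $(X,Y,p_\phi)/|(X,Y,p_\phi)|$ along sequences in $A$ that converge to the apex.

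First we treat the cone. Proposition~\ref{prop:sep}(a) gives, near the polar turning points, $X^2+Y^2=c^2p_\phi^2+O(p_\phi^3)$ with $c=r_tk$, taken separately for $p_\phi>0$ and $p_\phi<0$. Dividing by $R^2=X^2+Y^2+p_\phi^2$ and letting $R\to0$ yields $\hat x^2+\hat y^2=c^2\hat p^2$ together with $\hat x^2+\hat y^2+\hat p^2=1$. Hence $\hat p^2=1/(1+c^2)$.

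\begin{itemize}
\item On $\mathcal M_0^+$ the sign of $\hat p$ is positive, and on $\mathcal M_0^-$ it is negative. The two nappes therefore land on the latitudes $\hat p=\pm1/\sqrt{1+c^2}$, with the $\hat p^2$ just computed.
\item By the azimuthal $SO(2)$ symmetry at $b=0$, each trace is the full circle.
\item Smoothness follows by writing the proper transform in the chart $p_\phi\neq0$ as $(\rho/p_\phi,\phi,p_\phi)$. There $\rho/p_\phi=c+O(p_\phi)$ extends $C^1$ to $p_\phi=0$ by the $O(p_\phi^2)$ remainder. This gives a transverse graph over the circle, and so a smooth embedded circle.
\end{itemize}

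The statement writes the latitude as $\pm c/\sqrt{1+c^2}$. This matches my computation only if the divisor coordinate is normalized as $\hat p$ versus $\hat\rho$ with the roles swapped. I will check which convention the statement intends, namely whether $\hat p$ denotes the $\rho$-component, and state the latitude accordingly. The substantive claim, two disjoint circles symmetric about the equator and off it since $c\in(0,\infty)$, is unaffected.

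Next we treat the planar orbits. Each lies in $\{p_\phi=0\}$ by the invariance of that set. Its proper transform therefore meets the divisor only in $\{\hat p=0\}$, the equator. For $\gamma_0$ itself, the crossing of the axis with $|p_\theta|=p_{\max}\neq0$ makes the configuration path transverse to the axis. Near the crossing, $(X,Y)\approx(r_t\,\dot\theta\,t,\,0)$ with $\dot\theta\ne0$. The limit directions as $t\to0^\mp$ are therefore $(\mp1,0,0)$ (up to sign convention), which are two antipodal equatorial points. Applying the rotations $R(\alpha)$ sweeps these two points around the whole equator.

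Disjointness then follows because $c<\infty$ forces $1/\sqrt{1+c^2}>0$. The latitudes therefore avoid $\hat p=0$, and the two are separated from each other by sign.

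The main obstacle is making the blow-up of the cone rigorous, because the cone is only asymptotic. The family is defined for $p_\phi\ne0$, and the relation $\rho_{\min}\sim c|p_\phi|$ holds only at turning points. To handle this, I would restrict to the polar-layer portion of each torus, the points with $\theta\le C|p_\phi|$ that Proposition~\ref{prop:sep}(b) describes. I would show that all points of $\mathcal M_0$ that accumulate at the apex in $(X,Y,p_\phi)$ with $r\to r_t$, $p_r\to0$ satisfy $\rho/|p_\phi|\to c$ uniformly. The second-order energy expansion in the proof of Proposition~\ref{prop:sep} gives $\theta\ge\theta_-$ along the orbit, but points inside the layer have $\theta$ up to $\theta_c=\theta_-/\sqrt{1-c'^2}$. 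So if the whole layer is included, the trace is actually a band $\hat\rho/|\hat p|\in[c,\infty)$ rather than a circle. I therefore expect to have to define the ``cone'' precisely as the locus of turning points, as the statement says (``the cone of Proposition~\ref{prop:sep}''), and to apply the blow-up to that locus. The limit directions of the band lie strictly off the equator for each fixed $p_\phi\ne0$, but they accumulate toward it, so for the full family the separation claim needs the restriction to the turning-point cone.
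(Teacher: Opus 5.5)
Your argument is the paper's: substitute the cone $X^2+Y^2=c^2p_\phi^2$ into spherical blow-up coordinates. The planar orbits lie in the invariant set $\{p_\phi=0\}$ and cross the axis transversally, so they meet the equator at two antipodal points whose $SO(2)$-rotates fill it. Your caveat that the whole polar layer would trace an equatorial band accumulating on $\hat p=0$ is also correct, so ``the cone'' must mean the turning-point locus, which is how the paper takes it.

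Your latitude $\hat p=\pm1/\sqrt{1+c^2}$ is right and needs no convention check: $\hat X^2+\hat Y^2=c^2\hat p^2$ on the unit sphere forces $\hat p^2=1/(1+c^2)$. The value $\pm c/\sqrt{1+c^2}$ in the statement and in the paper's proof is a slip; it is the value of $\sqrt{\hat X^2+\hat Y^2}$. With the correct value the two circles stay off the equator for every finite $c$.
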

\begin{proof}
In blow-up coordinates $(X,Y,p_\phi)=\rho\,(\hat X,\hat Y,\hat p)$ with
$(\hat X,\hat Y,\hat p)\in S^2$ and $\rho\ge0$, the cone
$X^2+Y^2=c^2p_\phi^2$ becomes $\hat X^2+\hat Y^2=c^2\hat p^2$, i.e.\
$\hat p=\pm c/\sqrt{1+c^2}$: two latitude circles, symmetric about the
equator, disjoint from it and from each other for $c>0$, and smooth. They
are the proper transforms of the two nappes (the $\pm p_\phi$ components). The
orbit $\gamma_0$ approaches the apex within $\{p_\phi=0\}$, crossing the
axis along a meridional direction with $(X,Y)\to0$; its lift has
$\hat p=0$ and meets the equator at the two antipodal points given by the
azimuths of approach and departure, and the $SO(2)$-rotated copies of
$\gamma_0$ sweep the whole equator. Since the latitude circles sit at
$\hat p\neq0$, the two sets are disjoint.
\end{proof}

\begin{proposition}[Persistence]\label{prop:persist}
For each $r\ge1$ there is $b_0(r)>0$ such that for $|b|<b_0(r)$ each of the
two components of each closed sub-annulus
$\mathcal{M}_0^{[\varepsilon,\delta]}$ persists as a $C^r$ normally
hyperbolic \emph{locally} invariant manifold-with-boundary
$\mathcal{M}_b^{[\varepsilon,\delta]}$, together with its $C^r$ local
stable and unstable manifolds, on which the four-dimensional dividing
surfaces of Proposition~\ref{prop:topology} deform smoothly. The orbit FR1
persists independently, as an exact periodic orbit in the reaction plane,
and is not contained in $\mathcal{M}_b^{[\varepsilon,\delta]}$.
\end{proposition}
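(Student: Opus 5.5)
The plan is to apply the Fenichel--Hirsch--Pugh--Shub persistence theorem for compact, normally hyperbolic manifolds with boundary, in its overflowing/inflowing-free form for \emph{locally} invariant manifolds. The hypotheses it needs are exactly Propositions~\ref{prop:annulus} and~\ref{prop:nhim} at $b=0$. The first step is to place the problem in a fixed setting. $H_b=H_0+b\,V_0(r)\sin^3\theta\cos3\phi$, and the perturbation is $C^\infty$ away from $r=0$; by the Cartesian form $(X^3-3XY^2)/r^3$ this includes the symmetry axis. Hence $b\mapsto X_{H_b}$ is a $C^\infty$ family of vector fields on a neighbourhood of the compact set $\mathcal{M}_0^{[\varepsilon,\delta]}$, which lies off the axis and at $r$ bounded away from $0$. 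Because the energy surfaces $\{H_b=E\}$ move with $b$, I would either work in the full six-dimensional phase space, adding the energy direction as a neutral tangential direction, or pull $\{H_b=E\}$ back to $\Sigma_E$ by a $C^\infty$ near-identity diffeomorphism; the latter exists since $\nabla H_0\neq0$ on the compact set. In either version the perturbed vector field is $C^r$-close to the unperturbed one, uniformly on a compact neighbourhood, with distance $O(b)$.

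The second step handles the boundary. Proposition~\ref{prop:annulus} gives a compact manifold with four boundary tori, and these tori are invariant rather than overflowing. I would use the standard device of extending the sub-annulus slightly to $[\varepsilon/2,p_*-\delta/2]$, which remains inside the normally hyperbolic range by (H)(ii). I would then modify the vector field near the new boundary with a bump function in $p_\phi$, adding a small drift $\pm\eta\,\partial_{p_\phi}$, so that the extended manifold becomes overflowing invariant (or, alternatively, apply the wrapped/boundaryless version after doubling). The cutoff changes tangential rates only by $O(\eta)$. Since Proposition~\ref{prop:nhim} gives tangential exponent $0$ and normal rate $\nu\ge\nu_{\min}>0$, the $r$-normal hyperbolicity inequality $\nu_{\min}>r\cdot O(\eta)$ survives for any fixed $r$ once $\eta$ is small. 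The theorem then gives, for $|b|<b_0(r)$, a $C^r$ invariant manifold of the modified field. Its restriction to the original sub-annulus is locally invariant for the true $H_b$ flow, because the modification acts only near the extended edges. The same theorem gives $C^r$ local stable and unstable manifolds. I expect this boundary and cutoff bookkeeping, together with keeping $b_0$ dependent on $r$ through the gap $\nu_{\min}/r$, to be the main obstacle, since the rest of the argument is soft.

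The third step deforms the dividing surfaces of Proposition~\ref{prop:topology}. I would describe each leaf near the NHIM in local Fenichel coordinates $(m,u,s)$, with $m\in\mathcal{M}$ and $(u,s)$ the unstable and stable coordinates, in which the dividing surface near $\mathcal{M}_0$ is the graph $\{u=s\}$ (up to the standard normal-form choice). The surface is then transported by the $C^r$ conjugacy of these coordinate charts to $b>0$. Away from the NHIM the surface is a transverse section, since the flux form \eqref{eq:flux} is nonzero there, and transversality is an open condition, so it persists on compact pieces for small $b$. Gluing the two descriptions gives a smooth deformation, but only over the compact sub-annulus, consistent with the caveat recorded in the proof of Proposition~\ref{prop:topology}.

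The last step is the statement about FR1. By the reflection symmetry $(\phi,p_\phi)\mapsto(-\phi,-p_\phi)$ of Section~\ref{sec:model}(iii), the reaction plane is invariant for every $b$. Within that plane the orbit is a hyperbolic, and hence non-degenerate, periodic orbit of a two-degree-of-freedom system with $\lambda=24.57\neq1$, so the implicit function theorem applied to the planar Poincar\'e map continues it for small $b$. It does not meet $\mathcal{M}_b^{[\varepsilon,\delta]}$. At $b=0$ the orbit has $p_\phi=0$, while the sub-annulus has $|p_\phi|\ge\varepsilon$, so the two compact sets lie a positive distance apart. Both sets move by $O(b)$, so shrinking $b_0$ keeps them disjoint.
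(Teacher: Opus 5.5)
Your proposal is correct and follows essentially the paper's proof. Both apply Fenichel--Hirsch--Pugh--Shub persistence to the compact, $r$-normally hyperbolic sub-annulus after a collar modification that makes it overflowing invariant, fix $b_0(r)$ by the $O(b)$ size of the $C^\infty$ perturbation in $C^r$, continue FR1 in the invariant reaction plane, and get disjointness from the $|p_\phi|\ge\varepsilon$ separation. Your extra details supply points the paper leaves implicit: pulling $\{H_b=E\}$ back to $\Sigma_E$, extending the annulus so the modification lies outside the original piece, the implicit-function argument with $\lambda=24.57\neq1$ for FR1, and the Fenichel-coordinate transport of the dividing surfaces; note only that the local stable manifold comes from the inflowing (reversed-drift) modification, not from the same overflowing one.
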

\begin{proof}
By Proposition~\ref{prop:nhim}, $\mathcal{M}_0^{[\varepsilon,\delta]}$ is
compact and $r$-normally hyperbolic for every $r$, the tangential rate
being $0$. At $b=0$ the boundary tori are invariant ($p_\phi$ is
conserved), so the manifold is neither overflowing nor inflowing; this is
handled in the standard way by modifying the vector field in a collar of
the boundary so that the manifold becomes overflowing invariant
(respectively inflowing, for the stable version), the modification being
supported away from the interior. The persistence theorem for overflowing
invariant manifolds \cite{Fenichel1971,HirschPughShub1977,Wiggins1994} then
yields, for perturbations that are $C^r$-small, a nearby $C^r$ manifold
with the same dimension and normal splitting, carrying its local stable and
unstable manifolds; undoing the collar modification, the persisted object
is locally invariant for the original flow --- trajectories may leave only
through its boundary. The perturbation $H_b-H_0=bV_0(r)\sin^3\theta\cos3
\phi$ is $C^\infty$ and $O(b)$ in every $C^r$ norm on the compact
sub-annulus, so the smallness requirement fixes $b_0(r)>0$; the required
smallness grows with $r$, whence the dependence of $b_0$ on $r$. The plane
$Y=0$ is invariant for all $b$ (Section~\ref{sec:model}), so FR1 persists
there as an exact periodic orbit. Finally,
$\mathcal{M}_b^{[\varepsilon,\delta]}$ lies in an $O(b)$ neighborhood of
$\mathcal{M}_0^{[\varepsilon,\delta]}$, which is bounded away from the
plane $p_\phi=0$ containing FR1 by the distance corresponding to
$|p_\phi|\ge\varepsilon$; for $|b|$ small the neighborhood does not reach
that plane, so FR1 is not contained in either component.
\end{proof}

\section{Trajectory classification and the out-of-plane channel}\label{sec:transport}
This section carries out the transport computation that Section~\ref{sec:po} sets up: sampling the entrance dividing surface, classifying trajectories against the three surfaces, and recording class fractions and gap-time statistics. Two protocols are fixed at the outset. The coupling takes the two values $b=0$ and $b=0.3$, and the organizing contrast is between them: at $b=0$ the momentum $p_\phi$ is conserved and the dynamics decomposes into a one-parameter family of two-degree-of-freedom subsystems; at $b=0.3$ it does not. The energy is $E=0.5\ \mathrm{kcal\,mol^{-1}}$, as everywhere in this paper; the class fractions alone are in addition recomputed at five higher energies, up to $E=2.0\ \mathrm{kcal\,mol^{-1}}$, for one purpose --- to find where the effect of the coupling dies away. Reactive fluxes are not computed in this paper; an entropy-based flux analysis for two-degree-of-freedom Chesnavich-type models is given in \cite{Wiggins2026}.

\subsection{The classification protocol}\label{sec:protocol}
Initial conditions are sampled on the inward-crossing half ($p_r<0$) of the entrance section, the sphere $r_{\mathrm{OTS}}=13.386\ \AAA$ --- the radial realization of the OTS dividing surface. This radius is that of the planar orbiting transition state, and it is the radius at which Maugui\`ere, Collins, Ezra, Farantos, and Wiggins place the entrance dividing surface of the planar problem \cite{Mauguiere2014}; the value carries over, while the surface at this radius is here four-dimensional rather than two-dimensional. Each trajectory is integrated until one of two things happens. Either $r$ falls below $r_{\mathrm{react}}=2.0\ \AAA$ --- inside the ridge crest $r_c=2.2\ \AAA$ and below the radial range of the TTS orbit (Section~\ref{sec:dstopo}) --- and the trajectory is reactive: it is captured into the well. Or the trajectory returns outward through the entrance surface and dissociates: it is non-reactive. Along the way, its crossings of the classifier (FR1) surface are counted. On a fixed-$p_\phi$ leaf the reduced FR1 orbit librates, its dividing surface is a two-sphere whose equator is the orbit (Section~\ref{sec:dstopo}), and a crossing of that surface is a passage through the radial bottleneck of the roaming shelf. The count is therefore taken as the number of crossings of the radius $r_{\mathrm{FR1}}=3.40\ \AAA$; Section~\ref{sec:robust} shows that the principal redistribution is robust over the tested range of placements of this radius.

Five classes result. A reactive trajectory ends inside the classifier, so it crosses an odd number of times: once (direct reactive) or three or more times (roaming reactive). A non-reactive trajectory ends outside, so it crosses an even number of times: twice (direct non-reactive), four or more times (roaming non-reactive), or not at all, turning around before ever reaching the classifier. Maugui\`ere, Collins, Ezra, Farantos, and Wiggins note this last case as possible but do not observe it in the planar model \cite{Mauguiere2014}. It does occur here: no such trajectory appears at $E=0.5\ \mathrm{kcal\,mol^{-1}}$, but they reach $2$--$3\%$ of the ensemble at $E=2.0\ \mathrm{kcal\,mol^{-1}}$, and we count them as their own class rather than folding them into the roaming count. No trajectory in any ensemble remained unclassified at the integration limit.

The sample is uniform in the four canonical surface coordinates $(\theta,\phi,p_\theta,p_\phi)$ over the energetically allowed region, with $p_r<0$ fixed by $H=E$. Appendix~\ref{app:classify} verifies that this uniform sample weights each initial condition by the rate at which trajectories cross the surface there --- the correct ensemble for an incoming stream. Because $p_\phi$ is one of the sampled coordinates, every value of $p_\phi$ enters the ensemble in its natural proportion, and no separate average over $p_\phi$ is required. At $b=0$ the sample decomposes over the conserved-$p_\phi$ leaves; at $b=0.3$ it does not.

\subsection{The classification plane and the $p_\phi$ stack}\label{sec:planes}
Figure~\ref{fig:planes} shows the trajectory class over the $(\theta,p_\theta)$ plane of entrance initial conditions, at $E=0.5\ \mathrm{kcal\,mol^{-1}}$, for four settings of $(b,p_\phi)$. Panel (a), the in-plane leaf $b=0,\ p_\phi=0$, reproduces the planar classification: diagonal bands of direct-reactive trajectories cut by strips of roaming and of direct non-reactive trajectories, $88.1\%$ direct and $11.9\%$ roaming on the computed grid. Panel (b), still $b=0$ but on the leaf $p_\phi=0.8$, shows that the leaves are far from interchangeable. The roaming fraction roughly doubles, to $24.3\%$; the band geometry changes from diagonal stripes to concentric shells; and the polar region becomes energetically forbidden, excluded by the centrifugal part of the effective potential $U(r,\theta;p_\phi)$ in Eq.~\eqref{eq:Hred}. The cylindrically symmetric three-degree-of-freedom problem is therefore a one-parameter family of two-degree-of-freedom subsystems that differ substantially from one another, and the microcanonical fractions are the $p_\phi$ average over this family, not any single member.

Panels (c) and (d) isolate the effect of the coupling. Panel (c), $b=0.3$ on the in-plane leaf $p_\phi=0$, is nearly identical to panel (a): a launch with $p_\phi=0$ and $\phi=0$ lies in the reaction plane $\{Y=0,\,p_Y=0\}$, which is invariant for every $b$ (Section~\ref{sec:po}), so such trajectories never sample the out-of-plane direction. Sampling restricted to the reaction plane, as in the planar studies, would therefore miss the three-degree-of-freedom effect entirely. Panel (d), $b=0.3$ on the off-plane leaf $p_\phi=0.8$, is where the coupling acts. Relative to panel (b), the roaming non-reactive class grows by $53\%$, the roaming reactive and direct non-reactive classes shrink (by $25\%$ and $10\%$), and the direct reactive class rises slightly ($+5\%$): the coupling redistributes the population toward roaming escape without closing the direct reactive channel.

\begin{figure}[tbp]
\centering
\includegraphics[width=\textwidth]{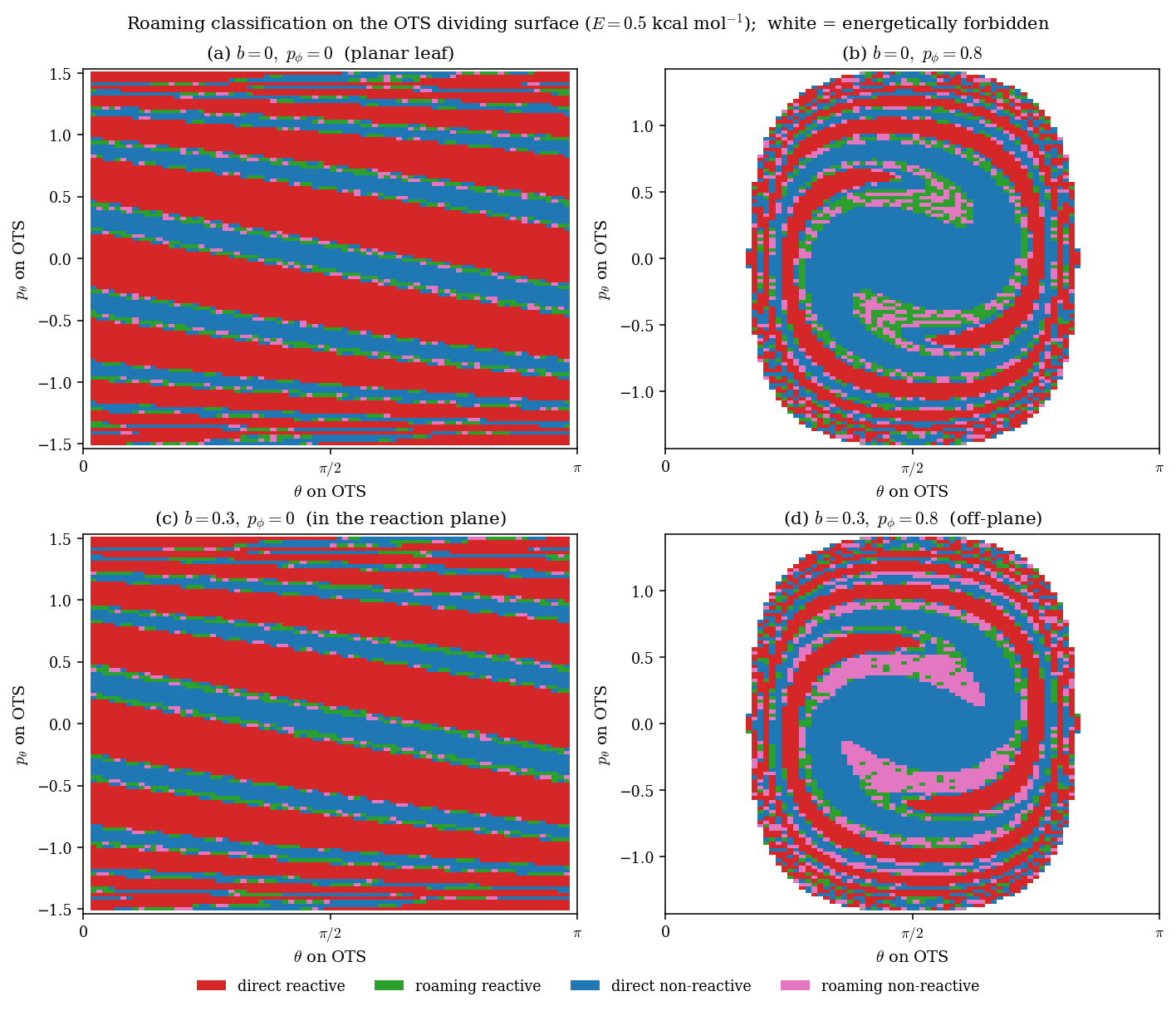}
\caption{Trajectory classification on the entrance dividing surface over the $(\theta,p_\theta)$ plane of incoming initial conditions ($\phi=0$), $E=0.5\ \mathrm{kcal\,mol^{-1}}$; white is energetically forbidden. (a) $b=0,\ p_\phi=0$: the planar leaf, $88.1\%$ direct, $11.9\%$ roaming. (b) $b=0,\ p_\phi=0.8$: roaming $24.3\%$, the polar regions excluded by the centrifugal part of $U$ in Eq.~\eqref{eq:Hred}. (c) $b=0.3,\ p_\phi=0$: in-plane, hence insensitive to the coupling --- nearly identical to (a) because the launch lies in the reaction plane $\{Y=0,\,p_Y=0\}$. (d) $b=0.3,\ p_\phi=0.8$: the roaming non-reactive class (pink) grows by $53\%$ at the expense of the roaming reactive and direct non-reactive classes. Classes: direct reactive (red), roaming reactive (green), direct non-reactive (blue), roaming non-reactive (pink).}
\label{fig:planes}
\end{figure}

\subsection{Microcanonical fractions versus energy}\label{sec:fractions}
Figure~\ref{fig:fractions} shows the microcanonical fraction of each class against energy, for $b=0$ and $b=0.3$, with sample sizes of $15000$ per coupling value at $E=0.5$, $5000$ at $E=0.7$, and $1200$ at each higher energy. Three findings. First, the $p_\phi$-averaged ensemble is far less reactive than the planar slice: at $E=0.5\ \mathrm{kcal\,mol^{-1}}$ the direct-reactive fraction of the incoming ensemble is $0.36$, against $0.61$ on the planar leaf, and at $b=0$ the direct non-reactive fraction rises from $0.41$ at $E=0.5$ to $0.66$ at $E=2.0\ \mathrm{kcal\,mol^{-1}}$ while the roaming reactive fraction collapses from $0.092$ to $0.001$. The planar problem is not representative of the three-degree-of-freedom microcanonical ensemble. Second, the effect of the coupling is concentrated at the low end of the window. At $E=0.5$, switching on $b=0.3$ lowers the direct non-reactive fraction from $0.414$ to $0.382$, a decrease of $0.032\pm0.006$; the roaming non-reactive fraction rises by $0.030\pm0.004$ and the roaming reactive by $0.011\pm0.003$, while the direct-reactive fraction falls by $0.008\pm0.006$, within its sampling error. At $E=0.7$ the total roaming gain is $0.024\pm0.008$; from $E=1.0\ \mathrm{kcal\,mol^{-1}}$ the two couplings agree within the sampling uncertainty. The reason is kinematic: faster trajectories cross the roaming region in fewer periods of the roaming motion than the transverse instability needs to act. Third, the direct-reactive fraction is unchanged by the coupling within error at every energy ($|\Delta|=0.008\pm0.006$ at $E=0.5$), consistent with, though not determined by, the $b$-insensitivity of the FR1 action (Section~\ref{sec:results}). The zero-crossing non-reactive class is absent at $E=0.5$ and grows slowly with energy, reaching $0.024$ ($b=0$) and $0.029$ ($b=0.3$) at $E=2.0$; folding it into the roaming non-reactive count, as a pure parity rule would, would overstate the high-energy roaming fraction by up to a third.

\begin{figure}[tbp]
\centering
\includegraphics[width=\textwidth]{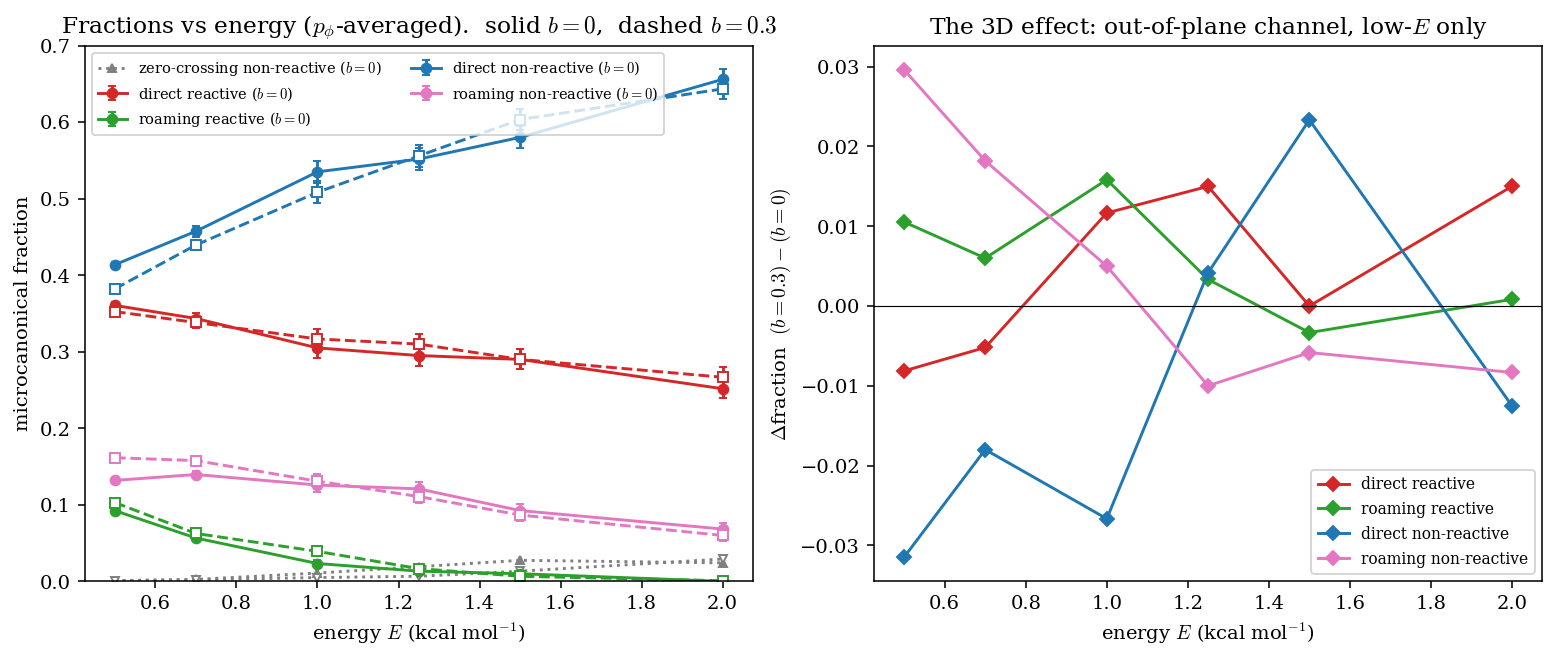}
\caption{Left: microcanonical class fractions versus energy for $b=0$ (filled symbols, solid lines) and $b=0.3$ (open symbols, dashed lines); error bars are binomial; the gray dotted curve is the zero-crossing non-reactive class. Sample sizes: $15000$ per coupling value at $E=0.5$, $5000$ at $E=0.7$, $1200$ otherwise. Right: the difference between the $b=0.3$ and the $b=0$ fraction of each class. The coupling transfers $0.040\pm0.005$ of the ensemble into the two roaming classes at $E=0.5$, $0.024\pm0.008$ at $E=0.7$, and nothing beyond the sampling uncertainty from $E=1.0\ \mathrm{kcal\,mol^{-1}}$. Same color code as Figure~\ref{fig:planes}.}
\label{fig:fractions}
\end{figure}

\subsection{Gap-time statistics}\label{sec:gaptimes}
The gap time of a non-reactive trajectory is the time between its entry through the entrance dividing surface and its exit back through it \cite{Mauguiere2014}. We measure it at the surface defined in Section~\ref{sec:protocol} --- the same placement as the planar study. The reactive analog, the time from entry to capture, is a different quantity and is reported in Appendix~\ref{app:classify}. Maugui\`ere, Collins, Ezra, Farantos, and Wiggins showed for the planar model that the gap-time distribution of the roaming region is not exponential: escape is not a memoryless, single-rate process \cite{Mauguiere2014}. The same conclusion, using the diagnostics applied below, is reached in \cite{Wiggins2026}. We take this as established, and ask how the coupling changes the distribution.

Figure~\ref{fig:gapbands} first shows the gap time along a one-parameter scan of the entrance momentum $p_\theta$ at fixed $\theta$ on the in-plane leaf. The direct bands sit at short, regular times, and the roaming trajectories cluster at the band edges with times that increase sharply toward the band boundaries --- the band structure of the planar roaming region \cite{Mauguiere2014}, recovered inside the three-degree-of-freedom model.

One geometric fact must be stated before distributions are compared. No gap can be shorter than the direct flight: a trajectory entering at $r_{\mathrm{OTS}}=13.386\ \AAA$ must travel inward to the interaction region and back out, and the purely radial flight to the shelf bottleneck and back takes $18.1$ time units. The shortest gap in the ensemble is $19.1$. The distribution therefore begins abruptly near this value, and that onset is a property of the geometry, not of the escape dynamics. For reference, the figures show an exponential distribution with the same mean, shifted to begin at the observed minimum; an unshifted exponential would put weight at gap times shorter than the flight itself. Because the coupling term is negligible beyond $8\ \AAA$ ($V_0<10^{-19}$ there), the inward and outward flights are identical for $b=0$ and $b=0.3$, and any difference between the two distributions arises in the interaction region.

Figure~\ref{fig:gapdist} shows the density and the survival function of the non-reactive gap time at $E=0.5$ and $E=1.0\ \mathrm{kcal\,mol^{-1}}$, for both couplings. Neither distribution is exponential. A convenient summary is the coefficient of variation --- the standard deviation divided by the mean, equal to $1$ for an exponential distribution and used for this purpose in \cite{Wiggins2026} --- computed on the gaps in excess of the minimum: at $E=0.5$ it is $1.56$ at $b=0$ and $1.22$ at $b=0.3$, both above $1$. The effect of the coupling is two-sided. Through the bulk of the distribution the gaps lengthen: the median rises from $30.2$ to $31.4$, the mean from $40.8$ to $41.3$, the 90th percentile from $69$ to $73$, and the fraction of trajectories with gap beyond $80$ rises from $7.3\%$ to $8.0\%$. These are the trajectories that wander out of the reaction plane before escaping --- the same population that swells the roaming non-reactive class. Conditioned on the roaming non-reactive class itself, the median gap is nearly unchanged ($51.4$ to $53.6$) and the mean falls ($69.5$ to $63.5$), so the shift of the pooled distribution reflects the larger roaming population rather than longer individual roaming gaps. In the extreme tail the ordering reverses: the 99th percentile falls from $193$ to $153$, with non-overlapping bootstrap $95\%$ confidence intervals $[181,209]$ and $[146,162]$. The rare, very long trappings of the symmetric limit are supported by conserved-$p_\phi$ structure that the coupling destroys. The coupling therefore shifts the pooled distribution toward longer typical gaps by enlarging the roaming population, leaves the typical roaming gap itself nearly unchanged, and cuts off the longest trappings; the region remains non-statistical at both couplings. At $E=1.0$ the two distributions are close (means $28.7$ and $30.6$), consistent with the fractions. Measuring the gap between crossings of an inner surface at $r=9.5\ \AAA$, which more than halves the deterministic flight, leads to the same conclusions (Appendix~\ref{app:classify}).

\begin{figure}[tbp]
\centering
\includegraphics[width=\textwidth]{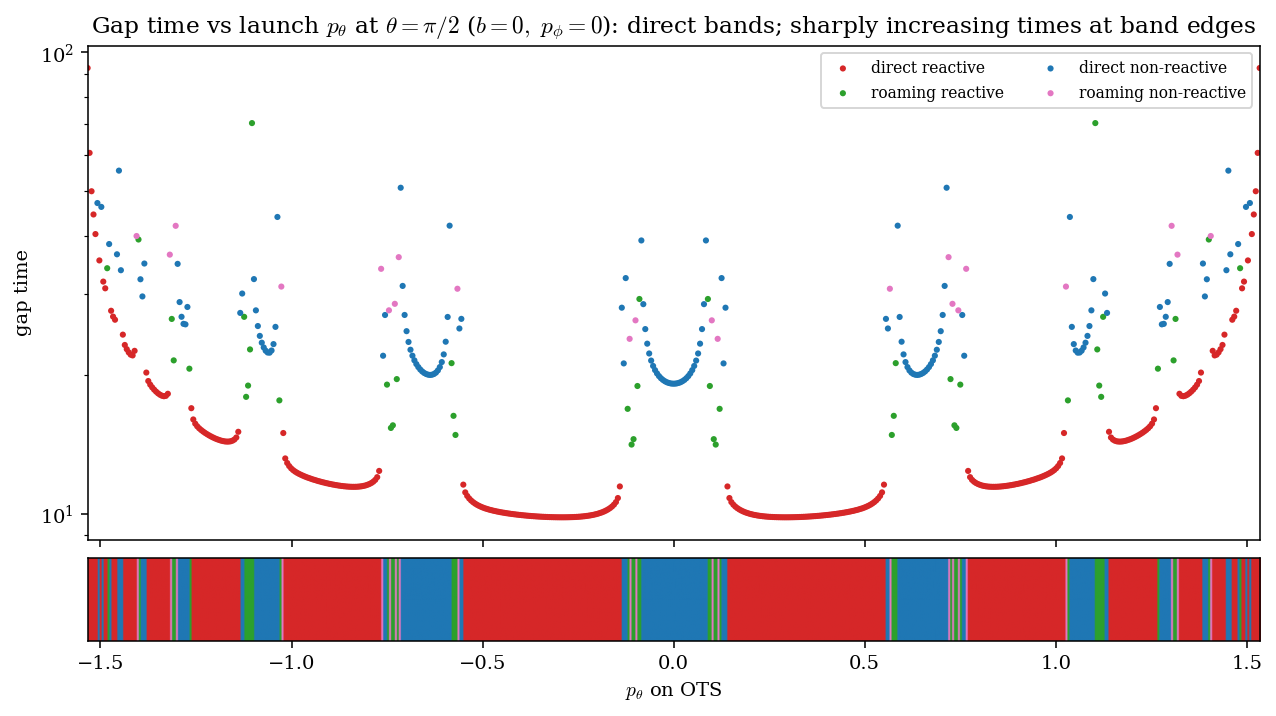}
\caption{Gap time versus the launch momentum $p_\theta$ at $\theta=\pi/2$ on the in-plane leaf ($b=0,\ p_\phi=0$), on a logarithmic scale: direct bands at short, regular times; roaming trajectories at the band edges with sharply increasing times. The strip beneath records the class. Same color code as Figure~\ref{fig:planes}.}
\label{fig:gapbands}
\end{figure}

\begin{figure}[tbp]
\centering
\includegraphics[width=\textwidth]{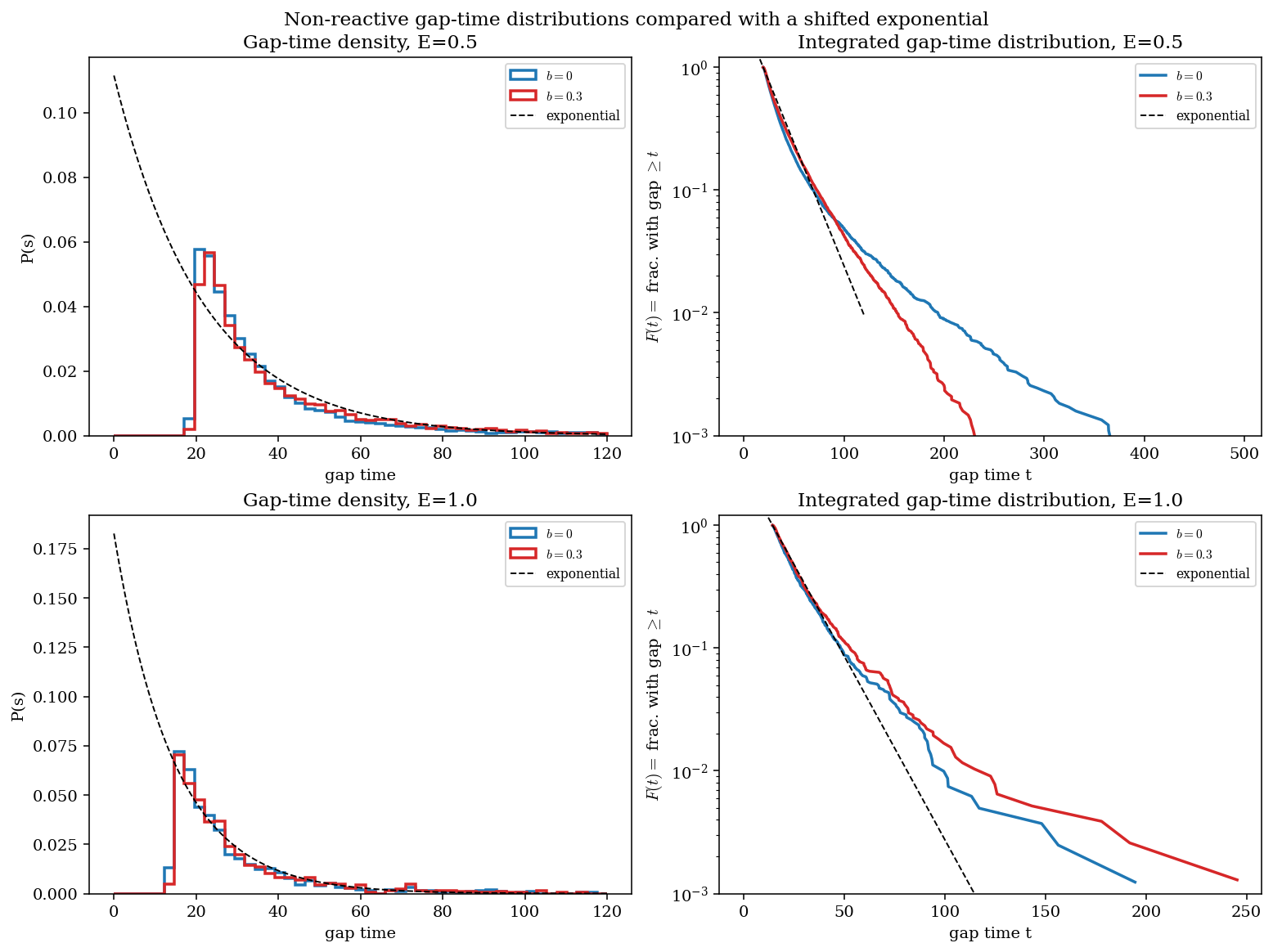}
\caption{Non-reactive gap times measured at the entrance dividing surface: density (left) and survival function (right, semi-logarithmic) at $E=0.5$ (top) and $E=1.0\ \mathrm{kcal\,mol^{-1}}$ (bottom), $b=0$ versus $b=0.3$. The dashed curve is an exponential distribution with the same mean, shifted to begin at the observed minimum gap (dotted vertical line), which is set by the flight time from the entrance surface to the interaction region and back. At $E=0.5$ the coupling shifts the body of the distribution toward longer gaps while shortening the extreme tail.}
\label{fig:gapdist}
\end{figure}

\subsection{The mechanism: transport of $p_\phi$ between the counter-precessing components}\label{sec:mechanism}
The mechanism behind the redistribution can be measured directly. Figure~\ref{fig:pphidiff} follows an ensemble of trajectories launched off the plane, at $p_\phi=0.8$, over a scan of the launch momentum $p_\theta$; for each trajectory we record the excursion of the azimuthal momentum along it, $\max p_\phi-\min p_\phi$. At $b=0$ the excursion is zero for every trajectory: $p_\phi$ is conserved, and each trajectory remains on its initial leaf. At $b=0.3$ the median excursion over the ensemble is $0.71$, the largest is $9.0$, and $22\%$ of the trajectories cross $p_\phi=0$ --- they reverse their sense of precession about the symmetry axis, passing between the two components of the manifold family that the symmetric limit keeps disjoint (Section~\ref{sec:families}). The trajectories thus show directly the mechanism established for the manifolds in Section~\ref{sec:persist}: for $b>0$ the conservation of $p_\phi$ fails, the conserved-momentum tori break up, and motion passes between the counter-precessing components.

\begin{figure}[tbp]
\centering
\includegraphics[width=0.8\textwidth]{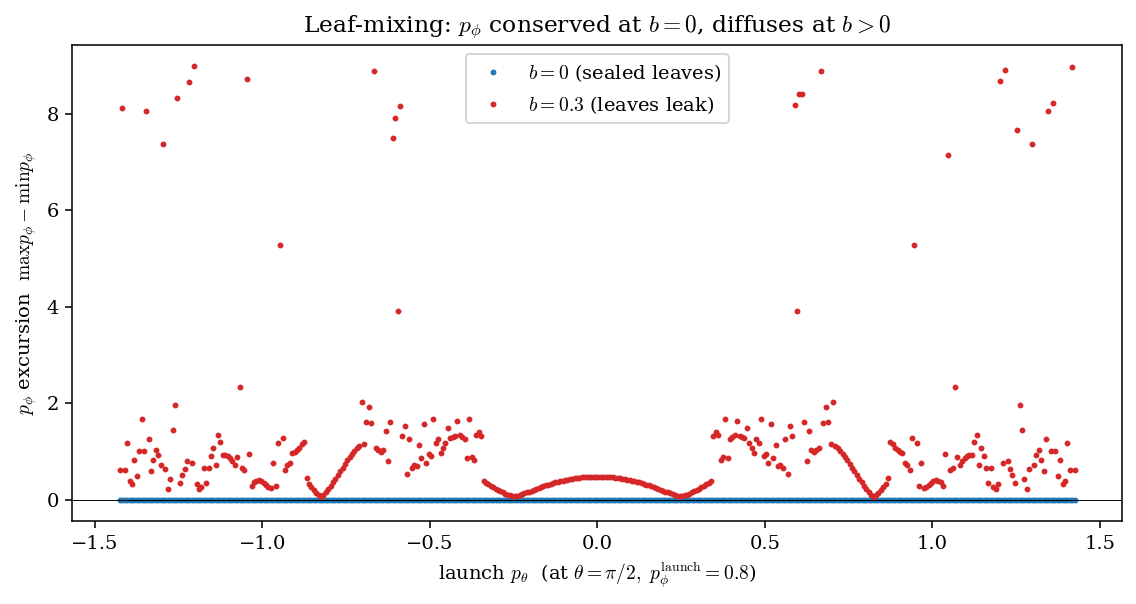}
\caption{The excursion of the azimuthal momentum, $\max p_\phi-\min p_\phi$, along each trajectory of an ensemble launched off the plane ($\theta=\pi/2$, $p_\phi=0.8$), versus the launch momentum $p_\theta$. At $b=0$ the excursion is zero for every trajectory. At $b=0.3$ the median excursion is $0.71$ and $22\%$ of the trajectories cross $p_\phi=0$, reversing their sense of precession about the symmetry axis.}
\label{fig:pphidiff}
\end{figure}

\subsection{Placement of the classification surfaces and robustness}\label{sec:robust}
Section~\ref{sec:nhim} constructs the three dividing surfaces as four-dimensional objects. The classification does not detect crossings of these surfaces directly; it uses fixed radii --- $13.386\ \AAA$ for entry and exit, $3.40\ \AAA$ for the classifier count, $2.0\ \AAA$ for capture --- and this subsection shows, in three steps, that the fixed radii represent the surfaces correctly and that the principal redistribution is robust over the tested range of surface placements.

The first step is the reduced family, computed at $b=0$ (Figure~\ref{fig:family}). The radial extent of the FR1 orbit varies little across the family: the inner turning radius moves from $3.18$ to $3.35\ \AAA$ and the outer from $3.65$ to $3.52\ \AAA$ over the admissible interval of $p_\phi$, so the radius $3.40\ \AAA$ lies inside the orbit's radial range on every leaf, and a single threshold serves the whole family. This is also why the threshold remains adequate at $b=0.3$, where $p_\phi$ changes along trajectories (Section~\ref{sec:mechanism}): for instantaneous $|p_\phi|$ within the admissible $b=0$ family the bottleneck remains at essentially the same radius, and trajectories leaving that range account for most of the observed classifier disagreements (Section~\ref{sec:robust}). The entrance orbit moves inward from $13.39\ \AAA$ at $p_\phi=0^+$ to $9.71\ \AAA$ at $p_\phi=2.10$, approaching the family endpoint (Section~\ref{sec:families}), so the launch surface at the planar value $13.386\ \AAA$ lies at or outside the entrance orbit of every leaf. One caveat is stated rather than assumed away. For small $p_\phi$, the centrifugal barrier of the effective potential $U$ in Eq.~\eqref{eq:Hred} lies beyond the launch surface --- at $25.6\ \AAA$ for $p_\phi=0.8$, moving inward as $p_\phi$ grows and crossing $13.386\ \AAA$ near $p_\phi\approx1.5$ --- so an outward crossing of the launch surface is not, for those trajectories, a crossing of their outermost barrier. The barrier is, however, very low: its top exceeds the potential at the launch surface by at most $0.0015\ \mathrm{kcal\,mol^{-1}}$ over the admissible interval at $E=0.5$. A trajectory that has crossed the launch surface outward can be turned back only if its outward radial kinetic energy there is below this value, and such trajectories are a negligible part of the ensemble.

The second step is direct measurement. Each trajectory is integrated once, the crossing times of all candidate surfaces are recorded, and the ensemble is reclassified under varied thresholds. At $E=0.5$, moving the classifier radius across the shelf ($3.25$--$3.55\ \AAA$) or the capture radius across the well mouth ($1.8$--$2.2\ \AAA$) shifts every class fraction by at most $0.011$ --- about a quarter of the $0.040$ effect of the coupling. Repeating the full computation with the launch surface at $14\ \AAA$ ($5000$ samples per coupling value) reproduces every fraction and the $E=0.5$ transfer within one standard error. The principal redistribution is robust over the tested range of surface placements.

The third step tests the radial classifier count against the dividing surface it stands in for. At $b=0$ the leaf of the FR1 dividing surface at angular momentum $p_\phi$ is the two-sphere over the reduced orbit (Section~\ref{sec:dstopo}); since the reduced orbit has its minimum radius at the equator and its maximum at the polar turning, its configuration arc is a graph $r=R(\theta;p_\phi)$, and a crossing of the surface is a sign change of $r-R(\theta;|p_\phi|)$ with $\theta$ inside the arc, the polar caps closing in momentum at the turning points. A passage of the radius $3.40\ \AAA$ near a pole, outside the arc, is not a crossing of the surface. Each trajectory of a subset of the $E=0.5$ ensemble ($2500$ per coupling value) was classified both ways from one integration: by the radial count and by crossings of the leafwise surface, evaluated at the instantaneous $|p_\phi|$; for $b=0.3$ the unperturbed $b=0$ surface serves as a frozen-$p_\phi$ reference classifier (no finite-$b$ surface is computed). The two classifications agree for $97.4\%$ of trajectories at $b=0$ and $94.5\%$ at $b=0.3$; the reactive classes agree for $99.8\%$. The disagreements concentrate in the trajectories whose $p_\phi$ exceeds the family endpoint $p_*=1.799$ on the shelf, where the $b=0$ surface is not defined ($65$ of $66$ disagreements at $b=0$, $123$ of $138$ at $b=0.3$); the remainder at $b=0.3$ are $13$ direct/roaming parity swaps that nearly cancel. Table~\ref{tab:dsval} gives the class fractions under both classifications. The reactive fractions are identical to three decimals at both couplings. The one systematic difference is that the surface count assigns a fraction of the ensemble ($0.026$ at $b=0$, $0.005$ at $b=0.3$) to the zero-crossing non-reactive class --- trajectories whose only crossings of the radius $3.40\ \AAA$ occur near the poles, outside every leaf surface --- a class the radial count leaves essentially empty at this energy (Section~\ref{sec:fractions}). The paired comparison is the test of the transfer: on this subset the radial estimate is noisier than the full-ensemble value ($+0.020$ against $+0.040\pm0.005$, with a subset standard error of $0.012$), and the same trajectories give a roaming gain of $+0.023$ under the surface count --- the two classifications agree on the transfer to within $0.003$ at the displayed precision. On this subset the frozen-$p_\phi$ reference classifier reproduces the roaming transfer obtained from the radial count within the sampling uncertainty, supporting the radial count as a robust realization of the transition-state geometry.

\begin{table}[tbp]
\centering
\caption{Class fractions from the same $2500$ trajectories per coupling value at $E=0.5\ \mathrm{kcal\,mol^{-1}}$, classified against the fixed radii (radial) and against the leafwise dividing surface. At $b=0$ the surface column uses the actual leafwise dividing surface; at $b=0.3$ it uses the unperturbed $b=0$ surface as a frozen-$p_\phi$ reference. The roaming gain is the change of $\mathrm{RR}+\mathrm{RN}$ from $b=0$ to $b=0.3$.}
\label{tab:dsval}
\begin{tabular}{lcccc}
\hline
 & radial, $b=0$ & surface, $b=0$ & radial, $b=0.3$ & frozen ref., $b=0.3$ \\
\hline
direct reactive & $0.351$ & $0.351$ & $0.341$ & $0.340$ \\
roaming reactive & $0.096$ & $0.096$ & $0.102$ & $0.102$ \\
direct non-reactive & $0.401$ & $0.383$ & $0.391$ & $0.392$ \\
roaming non-reactive & $0.151$ & $0.143$ & $0.165$ & $0.160$ \\
zero-crossing non-reactive & $0.000$ & $0.026$ & $0.000$ & $0.005$ \\
\hline
roaming gain & \multicolumn{2}{c}{$+0.020$ (radial)} & \multicolumn{2}{c}{$+0.023$ (surface)} \\
\hline
\end{tabular}
\end{table}

\begin{figure}[tbp]
\centering
\includegraphics[width=\textwidth]{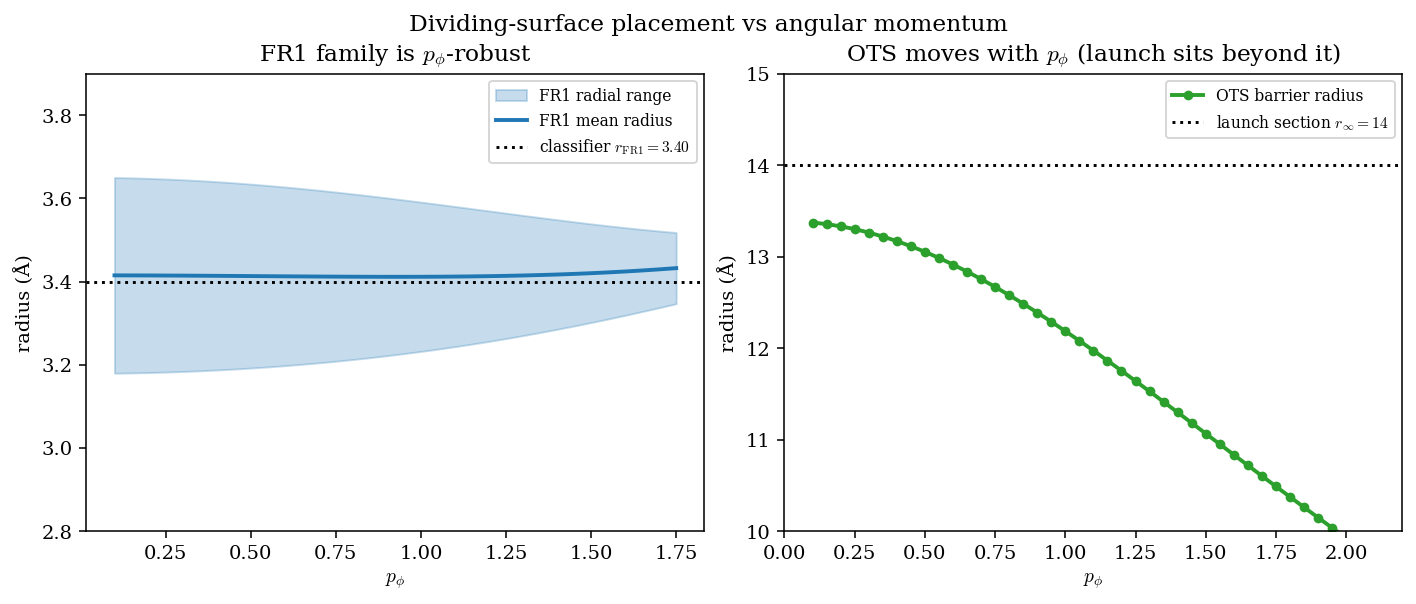}
\caption{Placement of the classification radii against the reduced family, computed at $b=0$. Left: the radial range and mean radius of the FR1 orbit versus $p_\phi$; the fixed classifier radius $3.40\ \AAA$ (dotted) lies inside the orbit's radial range on every leaf. Right: the radius of the entrance orbit versus $p_\phi$, from $13.39\ \AAA$ at the planar limit to $9.71\ \AAA$ near the family endpoint; the launch surface at the planar value (dotted) lies at or outside the entrance orbit of every leaf.}
\label{fig:family}
\end{figure}

\section{Discussion}\label{sec:discussion}
The analysis yields a coherent picture of how azimuthal corrugation reshapes roaming. The in-plane reactive bottleneck --- the FR1 action --- is essentially unaffected by the coupling, so the symmetry breaking does not change the rate at which trajectories pass through the roaming region in the reaction plane. What it changes is the transverse organization: breaking the cylindrical symmetry makes the out-of-plane direction hyperbolic, so that resolving the three-fold methyl corrugation opens an out-of-plane escape route to the roaming dynamics rather than confining motion to the plane. The transition is immediate rather than threshold-controlled: the out-of-plane direction is unstable for arbitrarily weak coupling and remains unstable across most of the range studied; after the narrow elliptic restabilization window $0.58\lesssim b\lesssim0.63$, the period-doubling at $b_c\approx0.63$ marks the onset of negative hyperbolicity. The objects that organize transport in the full space are not the one-dimensional orbits but three three-dimensional NHIMs, one for each transition state. At $b=0$, Section~\ref{sec:nhim} explicitly constructs the three NHIMs as two-component families of invariant tori and proves that every compact interior piece persists for sufficiently small coupling. Each anchors a four-dimensional dividing surface. At fixed $p_\phi$ the leaf of that surface is a copy of $S^2\times S^1$, and the corresponding torus of the NHIM cuts the leaf into two solid tori; this holds for all three transition states because all three reduced orbits librate. In the analysis of Section~\ref{sec:persist}, the mechanism of the escape route is the loss of conservation of $p_\phi$ once $b>0$: the separatrix structure at the center of each family is destroyed and the conserved-momentum tori break up.

Section~\ref{sec:transport} carries out the transport computation based on these surfaces. The orbit and manifold analysis of Sections~\ref{sec:results}--\ref{sec:nhim} is carried out at the single energy $E=0.5\ \mathrm{kcal\,mol^{-1}}$; the classification of Section~\ref{sec:transport} is in addition repeated across the roaming window $E=0.5$--$2.0\ \mathrm{kcal\,mol^{-1}}$, for one purpose: to locate where the effect of the coupling survives. The findings are as follows. The $p_\phi$-averaged microcanonical ensemble is far less reactive than the planar ($p_\phi=0$) slice --- at $E=0.5$ the direct-reactive fraction of the incoming ensemble is $0.36$, against $0.61$ on the planar leaf --- so the two-degree-of-freedom problem is not representative of the three-degree-of-freedom one. The coupling leaves the direct-reactive fraction unchanged within sampling error but redistributes the non-reactive population: at $E=0.5$, switching on $b=0.3$ lowers the direct non-reactive fraction from $0.414$ to $0.382$, a decrease of $0.032\pm0.006$; the roaming non-reactive fraction rises by $0.030$ and the roaming reactive by $0.011$, while the direct-reactive fraction falls by $0.008$, within sampling error; the pooled non-reactive gap distribution shifts toward longer typical times as the roaming population grows, while the longest trappings of the symmetric limit are cut off. Across the window the effect decays --- the transfer is $0.024$ at $E=0.7$ and zero within sampling error from $E=1\ \mathrm{kcal\,mol^{-1}}$ --- because faster trajectories traverse the roaming region in fewer periods of the roaming motion than the transverse instability needs. The trajectories show the same mechanism that Section~\ref{sec:persist} establishes for the manifolds: $p_\phi$ is no longer constant along trajectories, and about a fifth of the off-plane launches reverse their sense of precession about the symmetry axis, passing between the two components that the symmetric limit keeps disjoint (Section~\ref{sec:mechanism}).

The relation to the ozone recombination study of Maugui\`ere, Collins, Kramer, Carpenter, Ezra, Farantos, and Wiggins \cite{Mauguiere2016} --- a three-degree-of-freedom roaming study in the same phase-space framework --- is complementary. There the third degree of freedom, the diatomic vibration, decouples adiabatically, and the three-degree-of-freedom computation validates the two-degree-of-freedom reduction; here the third degree of freedom is activated by the symmetry-breaking coupling, and the reduction fails in a controlled, quantifiable way. Both are families of two-degree-of-freedom subsystems parametrized by a momentum --- conserved there in an adiabatic approximation, conserved here exactly at $b=0$ and destroyed for $b>0$. Two limitations delimit the present study. The analysis is carried out at zero total angular momentum; the rotating ($J\neq0$) case introduces Coriolis coupling and centrifugal terms that enlarge the accessible phase space and may support trapping mechanisms absent here. And we have not examined behavior near the upper edge of the roaming energy window.

\section{Conclusion}\label{sec:conclusion}
We have constructed and analyzed a concrete, tractable three-degree-of-freedom Chesnavich model designed for the phase-space analysis of roaming. The model is derived from the Ezra--Wiggins rigid-body formulation, breaks the cylindrical symmetry of Chesnavich's model with a coupling that respects the physical $C_3$ symmetry of the methyl fragment, uses the physically mandated planar-top inertia ratio $I_z=2I_x$, and recovers the 2-DoF model at $b=0$ as its planar reduction.

Its central dynamical feature is the immediate activation of the out-of-plane degree of freedom when the cylindrical symmetry is broken. With the physical inertia ratio and the $C_3$ coupling, the roaming-shelf orbit FR1 is transversely hyperbolic for essentially all $b>0$: below the narrow elliptic window $0.58\lesssim b\lesssim0.63$ the transverse multiplier pair is real and positive; beyond the period-doubling at $b_c\approx0.63$ it is real and negative, so that neighboring trajectories alternate sides of the reaction plane on successive periods. FR1 is positive-hyperbolic below the elliptic window, elliptic within it, and negative-hyperbolic beyond the period-doubling at $b_c$.

A dividing surface in the five-dimensional energy surface must be four-dimensional, and it must be anchored on a three-dimensional invariant manifold --- two dimensions more than a periodic orbit provides. We therefore identified and explicitly constructed the anchoring objects: three three-dimensional normally hyperbolic invariant manifolds, one for each transition state, exhibited in the $b=0$ limit as two-component families of invariant tori over the conserved angular momentum $p_\phi$. Three results about these manifolds are established in Sections~\ref{sec:nhim} and~\ref{sec:proofs}, as six propositions resting on a single numerically established hypothesis: the topology of the dividing surface each manifold anchors; the status of each generating orbit as the separatrix member of its family; and the persistence of every compact interior piece of each manifold for sufficiently small $b>0$.

Building on these surfaces, we carried the classification of Maugui\`ere, Collins, Ezra, Farantos, and Wiggins into three degrees of freedom (Section~\ref{sec:transport}). At $b=0$ the dynamics decomposes into a one-parameter family of two-degree-of-freedom subsystems, and these subsystems differ substantially from one another: the roaming fraction of a leaf grows with its angular momentum, doubling between $p_\phi=0$ and $p_\phi=0.8$. Breaking the symmetry opens an out-of-plane escape route. At the base energy $E=0.5\ \mathrm{kcal\,mol^{-1}}$ it moves $0.032$ of the incoming ensemble out of direct non-reactive escape, the two roaming classes gaining $0.040$, and shifts the pooled gap-time distribution toward longer typical times through the larger roaming population, while leaving the direct-reactive fraction unchanged; the effect dies away across the roaming energy window. Along trajectories the route appears as the loss of constancy of $p_\phi$: about a fifth of the off-plane launches reverse their sense of precession about the symmetry axis, passing between the two family components that the symmetric limit keeps disjoint. The model provides a concrete, tractable setting in which the three-degree-of-freedom phase-space theory of roaming can be developed and tested.

\appendix
\section{Symbolic derivation and verification}\label{app:symbolic}
The reduced Hamiltonian \eqref{eq:H} and the properties (i)--(iii) of Section~\ref{sec:model} were verified with a computer-algebra system. The rotational kinetic energy was checked against the Ezra--Wiggins reduction \cite{EzraWiggins2019}. The $b=0$ reduction to the 2-DoF Chesnavich Hamiltonian was verified to vanish identically in the difference of the two expressions. The discrete symmetries were verified by direct substitution. The Cartesian form of the coupling, $V_0[(X^2+Y^2)/r^2 + b(X^3-3XY^2)/r^3]$, was confirmed to be smooth away from the origin and to reproduce \eqref{eq:Vcoup} under $(X,Y,Z)=r(\sin\theta\cos\phi,\sin\theta\sin\phi,\cos\theta)$.
\section{The body-frame coordinate singularity}\label{app:singularity}
Body-frame spherical coordinates $(r,\theta,\phi)$ are singular on the symmetry axis $\theta=0,\pi$, where $\phi$ is undefined. Two distinct difficulties arise there, and both must be handled because every generating orbit crosses the axis twice per period: the planar FR1 and OTS orbits rotate, passing through both poles each period, and the planar TTS orbit librates, crossing a single pole twice per period (the axis is met to within $\sim2\times10^{-4}\ \mathrm{rad}$ in our computations).

First, the kinetic coefficient of the azimuthal variational equation contains a factor $\csc^2\theta$, which diverges on the axis. A transverse stability computation carried out naively in $(\phi,p_\phi)$ therefore returns spurious, integrator-tolerance-dependent multipliers as large as $10^9$. The resolution is to compute the transverse dynamics in Cartesian body-frame coordinates $(X,Y,Z)$, in which the out-of-plane pair $(Y,p_Y)$ is smooth across the axis; the transverse block of the monodromy matrix is then finite and well behaved (the values reported in Fig.~\ref{fig:trace}).

Second, and more fundamentally, the potential coupling must itself be smooth across the axis for the dynamics to be well defined there. A one-fold coupling of the form $\sin^2\theta\cos\phi$ has Cartesian form proportional to $X\sqrt{X^2+Y^2}/r^2$, which is continuous but only once differentiable on the axis; its second derivatives, which enter the variational equation, are discontinuous. This is why the $C_3$ coupling $\sin^3\theta\cos3\phi=(X^3-3XY^2)/r^3$, the angular factor of a solid harmonic and smooth away from the origin, is required: it is both physically appropriate (respecting the methyl three-fold symmetry) and analytically smooth across the axis the orbits traverse. The axis crossing also underlies the conical center of each two-component NHIM family (Section~\ref{sec:separatrix}): the closest approach of the FR1 family to the axis scales linearly with the angular momentum, $\rho_{\min}\simeq1.69\,|p_\phi|$, so the family meets the axis in a cone rather than tangentially; the OTS and TTS families have their own linear constants (Section~\ref{sec:separatrix}).

\section{Periodic-orbit location, continuation, and stability}\label{app:po}
Trajectories were integrated with the explicit adaptive Runge--Kutta method of order eight, DOP853 \cite{HairerNorsettWanner1993}, with relative and absolute tolerances of $10^{-12}$ and $10^{-13}$; energy was conserved to one part in $10^{12}$ or better along all reported orbits. The FR1 and OTS orbits were located in the reaction plane by shooting from the symmetric equatorial launch ($\theta=\pi/2$, $p_r=0$) and imposing the symmetric-return condition $p_r=0$ at the first axis crossing. The launch radius is determined by this condition: $r_0$ is a root of the scalar function $r_0\mapsto p_r(t_{\mathrm{axis}};r_0)$, located by bisection. At $p_\phi=0$ the FR1 and OTS orbits rotate --- $\theta$ advances monotonically and $p_\theta$ has no zeros --- so for these two orbits a $\theta$-turning cannot serve as the return event. The shooting returns FR1 at $r_0\approx3.18$ and OTS at $r_0\approx13.4$. The TTS orbit crosses the axis rather than the equator and was located by shooting from the axis ($\theta=0$, $p_r=0$), with the return condition $p_r=0$ imposed at the first zero of $p_\theta$: it librates through the axis with $\theta_{\max}\approx41^\circ$ over $r\in[2.38,2.64]$, arcing across the mouth of the well just outside the ridge crest $r_c=2.2$, with period $T=2.228$ and multiplier $\lambda_{\mathrm{TTS}}=220.5$. The same axis-shooting condition admits three further periodic orbits interior to the well, with axis crossings at $r_0\approx0.98$, $1.19$, and $1.52$ and $\theta_{\max}\approx49^\circ$, $67^\circ$, and $50^\circ$; continuation steps that are too large can cause the continuation to jump onto these, and the orbit at $r_0\approx1.19$ loses hyperbolicity entirely for $b\gtrsim0.05$, so a jump is detectable by a collapse of the multiplier. Each orbit was continued in $p_\phi$, in $b$, and in $E$ by using the solution at one parameter value as the initial guess at the next~\cite{AllgowerGeorg1990}. The monodromy matrix was obtained by integrating the $6\times6$ variational system over one period with the analytic Jacobian~\cite{ParkerChua1989,MeyerHallOffin2009}; the transverse block was extracted in the $(Y,p_Y)$ coordinates, and the elliptic/hyperbolic classification follows from whether its trace lies within or outside $[-2,2]$. The bifurcation threshold was located by bracketing the transverse trace to the value $-2$, giving $b_c=0.632$ both from the variational monodromy matrix and from a finite-difference monodromy matrix of the time-$T$ flow, stable under tightening the integrator tolerance from $10^{-11}$ to $10^{-13}$. The equatorial shooting condition admits a second periodic-orbit solution besides FR1, with action $\approx12.6$; FR1 is the solution with action $13.755$, identified by its reduction to the 2-DoF orbit at $b=0$, and continuation steps of $\Delta b\lesssim0.03$ keep the continuation on it.

\section{Trajectory classification: sampling, gap times, and robustness}\label{app:classify}
Trajectories for the classification of Section~\ref{sec:transport} were integrated with the same DOP853 scheme as the orbits (Appendix~\ref{app:po}), at relative and absolute tolerances $10^{-10}$ and $10^{-12}$, with the reduced mass $m=0.9445$ (from $m_{\mathrm{H}}=1.007825$ and $m_{\mathrm{C}}=12.0$); energy was conserved to better than one part in $10^{8}$, ample for counting surface crossings. The classification was performed for two values of the coupling, $b=0$ and $b=0.3$, and repeated across the roaming energy window for one purpose: to locate where the effect of the coupling dies away. Sample sizes: $15000$ per coupling value at $E=0.5$, $5000$ at $E=0.7$, and $1200$ at each of $E\in\{1.0,1.25,1.5,2.0\}\ \mathrm{kcal\,mol^{-1}}$; the classification planes of Fig.~\ref{fig:planes} used an $80\times110$ grid in $(\theta,p_\theta)$ at $\phi=0$. All other computations in the paper are at $E=0.5\ \mathrm{kcal\,mol^{-1}}$.

Initial conditions were sampled on the entrance surface at $r_{\mathrm{OTS}}=13.386\ \AAA$, uniformly in the canonical surface coordinates $(\theta,\phi,p_\theta,p_\phi)$ over the energetically allowed region with $\theta\in(0.05,\pi-0.05)$, and $p_r<0$ fixed by $H=E$. The omitted polar caps carry $\approx1.0\times10^{-3}$ of the surface measure, computed by quadrature of the allowed $(p_\theta,p_\phi)$ area over $\theta<0.05$ --- an order of magnitude below the smallest class shift reported in Section~\ref{sec:transport}. This uniform sample realizes the flux-weighted microcanonical measure --- the ensemble in which initial conditions on the surface are weighted by the rate at which trajectories cross it, the appropriate ensemble for an incoming stream of trajectories. The identification holds because the crossing-rate factor cancels the Jacobian of the energy constraint: on the section $\{r=r_0\}$ the magnitude of the directional flux is
\begin{equation*}
\mathcal{F}(E)=\int\delta(H-E)\,\delta(r-r_0)\,|\dot r|\,\Theta(-\dot r)\,d\mathbf{q}\,d\mathbf{p}
=\int_{\mathrm{allowed}}d\theta\,d\phi\,dp_\theta\,dp_\phi,
\end{equation*}
with $|\dot r|=|p_r|/m$ canceling the Jacobian of $\delta(H-E)$ on integrating out $p_r$.

In the computation the three surfaces are realized as radial thresholds: a crossing of the classifier is a crossing of $r_{\mathrm{FR1}}=3.40\ \AAA$; a trajectory is reactive when $r$ falls below $r_{\mathrm{react}}=2.0\ \AAA$ and non-reactive when it returns outward through the launch surface. Classes follow the parity rule of Section~\ref{sec:protocol}, with zero-crossing non-reactive trajectories reported as their own class. No trajectory remained unclassified at the integration limit $t_{\max}=500$.

The gap time of Section~\ref{sec:gaptimes} is measured at the entrance dividing surface, between entry and exit. Two times were in fact recorded for every trajectory: the gap at the entrance surface, and the time between the first inward and last outward crossing of an inner surface at $r=9.5\ \AAA$. The purely radial flights from the two surfaces to the shelf bottleneck and back take $18.1$ and $10.6$ time units respectively at $E=0.5$ (the shortest observed gaps are $19.1$ and $11.6$); the corrugation satisfies $V_0(r)<10^{-19}$ for $r>8\ \AAA$, so the inward and outward flights are identical for the two couplings, and any $b$-dependence of the statistics arises inside the interaction region. The distributions of Section~\ref{sec:gaptimes} are conditioned on the non-reactive outcome. The reactive capture times, measured from entry at the entrance surface to arrival below the capture radius, have means $19.0$ ($b=0$) and $20.4$ ($b=0.3$) at $E=0.5$. The inner-surface gap statistics lead to the same conclusions as the entrance-surface statistics reported in the text: at $E=0.5$ the non-reactive means are $25.5$ ($b=0$) and $26.2$ ($b=0.3$), the coefficients of variation beyond the minimum are $1.9$ and $1.4$, the 90th percentile rises from $41$ to $47$, and the extreme tail shortens.

The robustness scan integrates each trajectory once, recording the crossing times of all candidate surfaces, and reclassifies. At $E=0.5$, over the grid $r_{\mathrm{FR1}}\in\{3.25,3.40,3.55\}\ \AAA$ and $r_{\mathrm{react}}\in\{1.8,2.0,2.2\}\ \AAA$, every class fraction moves by at most $0.011$, about a quarter of the $0.040$ out-of-plane effect; repeating the full computation with the launch surface at $14\ \AAA$ ($5000$ samples per coupling value) reproduces every fraction and the $E=0.5$ transfer within one standard error. The centrifugal barrier that lies beyond the launch surface for small $p_\phi$ (Section~\ref{sec:robust}) exceeds the potential at the launch radius by at most $0.0015\ \mathrm{kcal\,mol^{-1}}$ over the admissible $p_\phi$, so only trajectories leaving the launch surface with outward radial kinetic energy below $0.0015\ \mathrm{kcal\,mol^{-1}}$ can be turned back --- a negligible subset of the ensemble.

\section{Numerical verification}\label{app:verification}
The results were re-derived by methods independent of those that produced them, and all checks pass.

For the planar orbits: FR1 was recovered as a fixed point of a Poincar\'e return map on the section $\{Z=0,\,p_Z>0\}$ (residual $\sim10^{-13}$). This computation imposes no symmetry: the launch is not restricted to $p_r=0$ on the equator, yet the converged fixed point satisfies that condition, independently confirming the symmetric construction of Appendix~\ref{app:po}. The transverse block of the monodromy matrix from the analytic Jacobian agrees with a finite-difference monodromy matrix of the time-$T$ flow to $\sim10^{-12}$ in the trace at every value of $b$ in the continuation grid of Fig.~\ref{fig:trace}; the full monodromy matrix is symplectic ($\det=1$) with reciprocal eigenvalue pairs, the hyperbolic pair being $\{24.57,\ 0.0407\}$. The abbreviated action computed as $\oint 2(E-V)\,dt$ agrees with $\oint p\,dq$ at every $b$ in the same grid. The TTS orbit was additionally verified by a computation that involves no trajectory integration at all: a discrete-variational solution of Hamilton's principle on a periodic lattice, with the multiplier obtained from the block product of the discrete-action Hessian. Along the TTS family the abbreviated action $W=\oint p\,dq$ and the period $T$ were checked against the classical identity $dW/dE=T$ (the derivative taken along the family at fixed $p_\phi$), and the integration-free and shooting routes agree to five significant figures.

For the two-component manifolds of Section~\ref{sec:nhim} the following were verified independently. (i) The azimuthal momentum is conserved to $10^{-11}$ on a generic trajectory at $b=0$. (ii) The reduced family is normally hyperbolic with the algebraic multiplicity of the eigenvalue $1$ equal to four and the tangent directions (flow, azimuthal rotation, the family direction $\partial/\partial p_\phi$) carrying no component in the hyperbolic eigenspace (spectral-projector residuals $\lesssim10^{-13}$); the normal multiplier of FR1 decreases monotonically from $24.6$ to unity at $p_*=1.799$, $\lambda$ real and greater than one at every interior $p_\phi$. (iii) The closest approach of the FR1 family to the axis scales \emph{linearly} with the angular momentum, $\rho_{\min}=1.686\,|p_\phi|$ to four figures over $0<p_\phi<0.4$ --- a cone, not a smooth tangency. The measured slope agrees with the prediction of Section~\ref{sec:separatrix}: with $\theta_-=k\,|p_\phi|$ the linear law for the polar turning angle and $r_t$ the radius there, the closest approach is $\rho_{\min}\approx r_t\sin\theta_-\approx r_t\,k\,|p_\phi|$, and $r_t\,k=1.6863$. (iv) Along the FR1 family the kinetic energy at the polar turning point tends to a finite nonzero limit, $E-V\to1.1736$ as $p_\phi\to0^+$, carried entirely by the azimuthal motion --- the signature of the momentum segments in the limit of the family (Section~\ref{sec:separatrix}). (v) The reduced dividing surface is a two-sphere for all three orbits (each reduced orbit librates), and the flux form \eqref{eq:flux} vanishes only at the two periodic-orbit momenta on each momentum ellipse. (vi) The invariance of the $b=0$ tori and their breakup for $b>0$ were checked directly on a single trajectory started on an interior torus and followed for twelve periods of the roaming motion. At $b=0$ its $p_\phi$ is constant to $10^{-11}$, as it must be on an invariant torus. At $b=0.05$ the value of $p_\phi$ along the same trajectory ranges over an interval of width $0.136$, and at $b=0.30$ of width $1.02$; energy is conserved to $\sim10^{-9}$ throughout, so the spread is dynamical, not numerical --- the tori are gone.

The quantities established for the reduced families in Section~\ref{sec:families} are load points of the paper: the endpoint behavior of each family and the nondegeneracy conditions enter the propositions of Section~\ref{sec:proofs} as hypotheses, and the family radii enter Section~\ref{sec:transport} as the placement data for the classification surfaces. Each was therefore re-derived independently of the continuation that produced it, with the machinery first validated on FR1 against the $\lambda(p_\phi)$ curve and the endpoint $p_*=1.799$. The termination mechanisms: the OTS family terminates at $p_*=2.1495$ by amplitude collapse onto the azimuthal relative equilibrium at $r=9.600$, and the TTS family at $p_*=1.3168$ on the saddle--center equilibrium that the centrifugal term creates in the mouth of the well ($r=2.446$, $\theta=29.4^\circ$) --- in both cases the endpoint multiplier is predicted from the linearized rates of the limiting equilibrium, $e^{\mu_r\cdot2\pi/\omega_\theta}$, giving $1.2685$ and $15.90$, and matched by the family; this confirms that these two families end on caps with normal hyperbolicity intact, as the per-family statements of Sections~\ref{sec:families} and~\ref{sec:dstopo} require. The TTS inner-edge limits $T(0^+)=T_{\mathrm{planar}}/2$ and $\lambda(0^+)=\sqrt{\lambda_{\mathrm{planar}}}$ confirm the double-cover relation between the reduced TTS family and the planar orbit. The linear law $\theta_-=k\,|p_\phi|$ holds per family with $k$ given by the turning-point formula of Section~\ref{sec:separatrix} (FR1 $0.4619$, OTS $0.651$, TTS $0.304$), confirming the cone constants. And the two nondegeneracy hypotheses of Proposition~\ref{prop:nhim} were verified directly: the azimuthal advance $\Delta\phi(p_\phi)$ is monotone (from $2\pi$ to $4.613$ across the FR1 family) and $dT/dE=-3.16\neq0$ at $p_\phi=0.5$.

For the trajectory classification (Section~\ref{sec:transport}): the integrating core reproduces the published planar FR1 values at $m=0.9445$ ($r_0=3.1791$, $W=13.7547$, $\lambda=24.5732$, transverse trace $+2$ to $2\times10^{-11}$); the planar classification leaf reproduces the two-degree-of-freedom fractions ($88.1\%$ direct, $11.9\%$ roaming); the $E=0.5$ transfer is established at $15000$ samples per coupling value, the direct non-reactive and roaming non-reactive shifts significant at six and seven standard errors, the roaming reactive shift at three, and the direct-reactive shift within $1.5$ standard errors; and the threshold and launch-radius scans of Appendix~\ref{app:classify} bound the placement sensitivity at a quarter of the effect. Finally, the planar TTS orbit, both family caps and endpoint multipliers, the FR1 endpoint, the cone constants, and the double-cover limits were reproduced by a second, fully disjoint implementation --- potential and derivatives regenerated from the published formulas, a fixed-step implicit-midpoint symplectic integrator in place of DOP853, multipliers from finite-difference monodromy matrices, cap predictions by root-finding and linearization without integration --- together with the classical identities $dW/dE=T$ and $dW/dp_\phi=-\Delta\phi$, satisfied to $10^{-6}$ or better.

\end{document}